\newtheorem{theorem}{Theorem}[section]
\newtheorem{corollary}[theorem]{Corollary}
\newtheorem{proposition}[theorem]{Proposition}
\newtheorem{lemma}[theorem]{Lemma}
\theoremstyle{definition}
\newtheorem{definition}[theorem]{Definition}
\newtheorem{example}[theorem]{Example}
\newtheorem{remark}[theorem]{Remark}
\newcommand{\ZZ}{{\mathbb Z}}
\newcommand{\ff}{{\mathbb F}}
\newcommand{\bA}{{\bar A}}
\newcommand{\bC}{{C'}}
\def\Q{\mathcal{Q}}
\def\wt{\mathop{\operatorfont wt}\nolimits}
\def\Res{\mathop{\operatorfont Res}\nolimits}
\newcommand{\Cc}{{\mathcal C}}
\def\Res{\mathop{\operatorfont res}\nolimits}
\def\Pic{\mathop{\operatorfont Pic}\nolimits}
\def\supp{\mathop{\operatorfont supp}\nolimits}
\begin{document}

\title{Distance bounds for algebraic geometric codes}

\author{Iwan Duursma\footnotemark[1],\, Radoslav Kirov\footnotemark[1],\,  
and Seungkook Park\footnotemark[2]} 


\date{\today}

\renewcommand{\thefootnote}{\fnsymbol{footnote}}
\footnotetext[1]{Dept. of Mathematics, University of Illinois
at Urbana-Champaign (\{duursma,rkirov2\}@math.uiuc.edu)}
\footnotetext[2]{School of Computational Sciences, Korea Institute
for Advanced Study (napsk71@kias.re.kr)}
\renewcommand{\thefootnote}{\arabic{footnote}}

\maketitle

\begin{abstract}
Various methods have been used to obtain improvements of the Goppa lower bound for the
minimum distance of an algebraic geometric code. The main methods divide into two categories
and all but a few of the known bounds are special cases of either the Lundell-McCullough floor 
bound or the Beelen order bound. The exceptions are recent improvements of the floor bound
by G{\"u}neri-Stichtenoth-Taskin, and Duursma-Park, and of the order bound by
Duursma-Park and Duursma-Kirov. In this paper we provide short proofs for all floor bounds 
and most order bounds in the setting of the van Lint and Wilson AB method. Moreover, 
we formulate unifying theorems for order bounds and formulate the DP and DK order bounds
as natural but different generalizations of the Feng-Rao bound for one-point codes.
\end{abstract}

\section*{Introduction}
 
Various methods have been used to obtain improvements of the Goppa lower bound for the
minimum distance of an algebraic geometric code. 
The best known lower bounds appear in the diagram below. Apart from the basic bounds, 
they divide into floor bounds, order bounds, and bounds of mixed type.
\[
\xymatrix{
\text{Basic bounds} &d_{GOP} \ar[r] &d_{BPT} \ar[d] \\
\text{Floor bounds} &d_{MMP} \ar[r] &d_{LM} \ar[r] \ar[dr] &d_{GST} \ar[r] &d_{ABZ} \ar[d] \\
\text{Mixed bounds} &d_{GKL}\ar[rr] \ar[drr] & &d_{GST2} \ar[r] &d_{ABZ^+} \ar[d] \\
\text{Order bounds} &d_{FR} \ar[r] &d_{CMST} \ar[r] &d_{B} \ar[r] &d_{ABZ'} \ar[r] &d_{DP} \ar[r] &d_{DK} \\
}
\] 
In the first part of the paper, we recall the AB method and show how it improves as well as unifies bounds. Without the
AB method, the best bounds in each category - the floor bound $d_{GST}$ \cite{GST09}, the mixed bound $d_{GST2}$ \cite{GST09}, 
and the order bound $d_B$ \cite{Bee07} - are
not comparable. The codes in Table \ref{Table:notcomp}, constructed with the Suzuki curve over $\ff_8$, illustrate that the bounds
$d_{GST}$, $d_{GST2}$ and $d_B$, can not be compared in general. 
Their improvements $d_{ABZ} \geq d_{GST}$ (Section~\ref{S:ABZ}), $d_{ABZ^+} \geq d_{GST2}$ (Section~\ref{S:ABZplus}), and $d_{ABZ'} \geq d_{B}$
(Section~\ref{S:ABZcoset}) satisfy $d_{ABZ'} \geq d_{ABZ^+} \geq d_{ABZ}$, for any given code. 
Thus, for the improved bounds, bounds of order type 
improve bounds of mixed type which in turn improve bounds of floor type. 
\begin{table}[h]
\begin{center}
\begin{tabular}{l@{\hspace{18mm}}r@{\hspace{3mm}}r@{\hspace{9mm}}r@{\hspace{12mm}}r@{\hspace{3mm}}r@{\hspace{3mm}}rrr}
\toprule
\multicolumn{1}{l}{Code}  &{$d_{GST}$}   &{$d_{GST2}$}   &{$d_{B}$}  &{$d_{ABZ}$}   &{$d_{ABZ+}$}   &{$d_{ABZ'}$} \\
\midrule
$C_\Omega(D,G=28P+2Q)$    &8   &8  &7  &8  &8  &8  \\
$C_\Omega(D,G=30P)$       &7   &6  &8  &7  &7  &8  \\
$C_\Omega(D,G=30P+Q)$     &7   &8  &8  &8  &8  &8  \\
\midrule
$C_\Omega(D,G=30P+2Q)$    &9   &9  &9  &10  &10  &10  \\
\bottomrule 
\end{tabular}
\end{center}
\caption{Suzuki curve over $\ff_8$}
\label{Table:notcomp}
\end{table} 

The best bounds overall are the order bounds $d_{DP}$ \cite{DuuPar09} and $d_{DK}$ \cite{DuuKir09}. 
In the second part of the paper we present
a framework to derive bounds of order type including the bounds $d_{DP}$ and $d_{DK}$. In Section \ref{S:semigps} and Section \ref{S:mainthm} we outline our
approach and we develope our main tools (Proposition \ref{P:rec}, Theorem \ref{T:S'}, and Theorem \ref{T:MTS}). 
Theorem \ref{T:DK} in Section \ref{S:order} gives a general order bound that includes the bound $d_{DK}$. The bounds
$d_{DP}$ and $d_B$ follow as special cases but in a form that is different from their original formulation. 
In Section \ref{S:orderpath} we show that the different formulations are equivalent. In Section \ref{S:comp} we indicate how bounds in this paper
can be computed efficiently. In the remainder of this introduction, we briefly discuss each of the three types of bounds. \\

(Floor bounds) For a divisor $H$ with $L(H) \neq 0$, its floor is the unique divisor $\lfloor H \rfloor$ that is minimal
with the property $L(H) = L(\lfloor H \rfloor)$ \cite{MahMat06}. The 
difference $E_H = H - \lfloor H \rfloor$ is called the fixed part of the divisor $H$ \cite{Mir95}. 
Maharaj, Matthews and Pirsic \cite{MMP05} showed that, for a geometric Goppa code 
$C_\Omega(D,H+\lfloor H \rfloor)$, the actual minimum distance exceeds the Goppa minimum distance
by at least the degree of the fixed part $E_H$ of $H$ (the bound $d_{MMP}$). This generalizes results
in  \cite{CarTor05}, \cite{HomKim01}. Lundell and McCullough \cite{LunMcc06} gave a further
genralization (the bound $d_{LM}$) that includes as special cases other bounds in \cite{CarTor05}, 
\cite{HomKim01}, as well as bounds in \cite{GarLax92}, \cite{KirPel95}. Recently, 
G{\"u}neri, Stichtenoth, and Taskin \cite{GST09}, and Duursma and Park \cite{DuuPar09}
gave further improvements $d_{GST}$ and $d_{ABZ}$, respectively. 
The $d_{GST}$ bound further exploits the floor bound method. The $d_{ABZ}$ bound uses an argument similar
to the AB method of van Lint and Wilson \cite{LinWil86}. In Section \ref{S:ABZ}, 
we compare the improvements and show that $d_{ABZ} \geq d_{GST} \geq d_{LM}.$ \\

(Order bounds) The Feng-Rao decoding algorithm for one-point codes corrects errors 
up to half the Goppa designed minimum distance \cite{FenRao93}, \cite{Duu93}. Soon after the algorithm was
presented it became clear that in many cases it corrects beyond half the Goppa designed minimum distance.
An analysis of the actual performance of the algorithm led Kirfel and Pellikaan to
define the Feng-Rao bound $d_{FR}$ for the minimum distance of one-point codes \cite{KirPel95}. For Hermitian one-point codes,
the bound agrees with the actual minimum distance of the code \cite{YanKum92}, \cite{KirPel95}. Later, 
the bound was connected to order domains and became known as the order bound \cite{HoeLinPel98}. 
The formulation of the order bound for general codes from curves (the bound $d_B$) is due to Beelen \cite{Bee07}. 
The bound $d_B$ agrees, for all Hermitian two-point codes, with the actual minimum distance of the code
\cite{HomKim06}, \cite{Bee07}, \cite{Par09}. Using an approach similar to that in \cite{HoeLinPel98}, 
Carvalho, Munuera, da Silva, and Torres \cite{CMST07} formulated an order bound $d_{CMST}$ for multi-point codes.
All order bounds for a code use a filtration of subcodes of the code. 
For the Feng-Rao bound the
filtration is determined by the choice of a point $P$ and takes the form
\[
C_\Omega(D,G) \supset C_\Omega(D,G+P) \supset C_\Omega(D,G+2P) \supset \cdots \supset \{0\}
\]
The bounds in  \cite{HoeLinPel98}, \cite{CMST07} follow this choice. Beelen allows the addition of
different points at different steps in the filtration. This is essential in order to attain the 
actual minimum distance of Hermitian two-point codes and in general greatly improves the order bound.
The improved bounds $d_{ABZ'}, d_{DP}$ \cite{DuuPar09} and $d_{DK}$ \cite{DuuKir09} satisfy
$d_{DK} \geq d_{DP} \geq d_{ABZ'} \geq d_B$ and $d_{ABZ'} \geq d_{ABZ}$. The bound $d_{ABZ'}$ provides a 
connection between the families of floor bounds and order bounds. It shows that in general order bounds 
provide better bounds than floor bounds. With hindsight, the bounds $d_{DP}$ and $d_{DK}$ are each 
natural generalizations of the Feng-Rao bound. The bound $d_{DP}$ generalizes the performance aspect of the bound. 
Decoding up to half the bound $d_{DP}$ is possible in much the same way as the 
original Feng-Rao decoding algorithm \cite{DuuPar09}. The bound $d_{DK}$ generalizes the bound itself, but in a way
that is no longer compatible with the original decoding algorithm. And decoding up to half the bound 
$d_{DK}$ is an open problem. \\

(Mixed bounds) The Garcia-Kim-Lax bound $d_{GKL}$ \cite{GKL93} resembles floor bounds but in some cases improves on them. 
The bound uses extra assumptions and the original proof has some characteristics of the order bound. In particular, the proof
deals separately with words in $C_\Omega(D,G) \backslash C_\Omega(D,G+P)$ as in the first step of the filtration that
is used in the order bound. G{\"u}neri, Stichtenoth and Taskin \cite{GST09} give a generalization $d_{GST2}$ of the bound $d_{GKL}$ 
that includes and improves the bound $d_{LM}$. 
We give a further improvement $d_{ABZ^+} \geq d_{GST2}$ that shows the role 
of mixed bounds as an intermediate between floor bounds and order bounds. In particular, $
d_{ABZ'} \geq d_{ABZ^+} \geq d_{ABZ}.$ Improvements of mixed bounds over similar floor bounds are in general
small. The improvement of $d_{GST2}$ over $d_{LM}$ is at most one and the improvement of $d_{ABZ^+}$ over 
$d_{ABZ}$ is at most two. We also show that the bound $d_{GKL}$ can be obtained as a special case of the bound $d_B$.

\section{Algebraic geometric codes} \label{S:AGcodes}

The following notation will be used. Let $X/\ff$ be an algebraic curve (absolutely irreducible, smooth,
projective) of genus $g$ over a finite field $\ff$. Let $\ff(X)$
be the function field of $X/\ff$ and let $\Omega(X)$ be the
module of rational differentials of $X/\ff$. 
Given a divisor $E$ on $X$ defined over $\ff$, let $L(E) = \{ f \in \ff(X) \backslash \{0\} :
(f)+E\geq 0 \} \cup \{0\},$ and let $\Omega(E) = \{ \omega \in \Omega(X) \backslash\{0\} : 
(\omega) \geq E \} \cup \{0\}.$ Let $K$ represent
the canonical divisor class. 
For $n$ distinct rational points $P_1, \ldots, P_n$ on $X$ and for disjoint divisors
$D=P_1+\cdots+P_n$ and $G$, the geometric Goppa codes $C_L(D,G)$ and $C_\Omega(D,G)$
are defined as the images of the maps
\begin{align*}
\alpha _L ~:~ &L(G)~~\longrightarrow~~\ff^{\,n}, ~~f \mapsto (\,f(P_1), \ldots, f(P_n) \,), \\
\alpha _\Omega ~:~ &\Omega(G-D)~~\longrightarrow~~\ff^{\,n}, 
~~\omega \mapsto (\,\Res_{P_1}(\omega), \ldots, \Res_{P_n}(\omega) \,).
\end{align*}
The condition that $G$ has support disjoint from $D$ is not essential and can be
removed by modifying the encoding maps $\alpha_L$ and $\alpha_\Omega$ locally at the
coordinates $P \in \supp G \cap \supp D$ \cite{TsfVla07}. 
The {Hamming distance} between two vectors $x, y \in \ff^n$ is $d(x,y) = | \{ i : x_i \neq y_i \} |.$
The {minimum distance} of a nontrivial linear code $\Cc$ is
$d(\Cc) = \min\, \{ d(x,y) : x,y \in \Cc, x \neq y \}.$ 

\begin{proposition} (Goppa bound $d_{GOP}$). 
\begin{align*}
&d(C_L(D,G)) \geq \deg\,(D-G), \quad \text{and} \quad \\ 
&d(C_\Omega(D,G)) \geq \deg\,(G-K).
\end{align*} 
\end{proposition}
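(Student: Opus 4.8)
The plan is to prove each inequality by a dimension-counting argument of the standard Goppa type. For the bound $d(C_\Omega(D,G)) \geq \deg(G-K)$, I would first reduce to showing that any nonzero codeword of $C_\Omega(D,G)$ has weight at least $\deg(G-K)$. Given such a codeword, it is the residue vector $(\Res_{P_1}(\omega),\ldots,\Res_{P_n}(\omega))$ of some $\omega \in \Omega(G-D)$ with $\omega \neq 0$. Let its support be $T = \{\, i : \Res_{P_i}(\omega) \neq 0 \,\}$, so the weight equals $|T|$. The key step is to show that $\omega$ in fact lies in $\Omega(G - D + \sum_{i \in T} P_i)$; that is, at every point $P_i$ with $i \notin T$ the differential $\omega$ has no pole (or rather satisfies the stronger vanishing forced by the divisor $G-D$ strengthened at the zero-residue coordinates). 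Since a differential $\omega$ with $(\omega) \geq E$ exists only when $\Omega(E) \neq 0$, and $\Omega(E) \neq 0$ forces $\deg E \leq \deg K$ (equivalently $\deg(E) \leq 2g-2$), I would apply this to $E = G - D + \sum_{i\in T} P_i$.

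More precisely, the second inequality comes out as follows. From $\omega \neq 0$ we have $(\omega) \geq G - D$, so $(\omega) + D - \sum_{i\in T}P_i \geq G - \sum_{i\in T}P_i$. The residue theorem, together with the fact that $\Res_{P_i}(\omega)=0$ for $i \notin T$, shows that $\omega$ has at worst simple poles supported on $T$ among the $P_i$, which upgrades the inequality to $(\omega) \geq G - \sum_{i\in T} P_i$. Taking degrees of the effective divisor $(\omega) - (G - \sum_{i\in T}P_i) \geq 0$ and using $\deg(\omega) = \deg K = 2g-2$ gives $2g - 2 \geq \deg G - |T|$, i.e.
\[
|T| \;\geq\; \deg G - (2g-2) \;=\; \deg G - \deg K \;=\; \deg(G-K),
\]
which is exactly the claimed bound on the weight, hence on the minimum distance.

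The first inequality $d(C_L(D,G)) \geq \deg(D-G)$ is the dual and slightly more direct. A nonzero codeword is $(f(P_1),\ldots,f(P_n))$ for some $f \in L(G)$, $f \neq 0$, with support $T = \{\, i : f(P_i)\neq 0 \,\}$. The vanishing of the remaining coordinates means $f$ vanishes at each $P_i$ for $i \notin T$, so $f \in L\bigl(G - \sum_{i\notin T} P_i\bigr)$. Since $f \neq 0$, this space is nonzero, forcing $\deg\bigl(G - \sum_{i\notin T}P_i\bigr) \geq 0$, i.e. $\deg G - (n - |T|) \geq 0$. Rearranging yields $|T| \geq n - \deg G = \deg D - \deg G = \deg(D-G)$, as desired.

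The main obstacle is the step in the $C_\Omega$ case where I must justify upgrading $(\omega) \geq G-D$ to $(\omega) \geq G - \sum_{i\in T}P_i$: one has to argue carefully that a zero residue at $P_i$ combined with the pole order bound coming from $G-D$ (which permits at most a simple pole at each $P_i$ since $D$ is reduced and disjoint from $G$) forces the order of $\omega$ at $P_i$ to be nonnegative when $i \notin T$. This is where the hypotheses that $D = P_1 + \cdots + P_n$ is reduced and that $\supp G \cap \supp D = \emptyset$ are used, and it is the only point requiring more than bookkeeping; everything else is degree arithmetic together with the elementary fact that a nonzero element of $L(E)$ or $\Omega(E)$ bounds $\deg E$ from below or above respectively.
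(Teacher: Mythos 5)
The paper states this proposition without proof --- it is the classical Goppa bound, quoted as known --- so there is no in-paper argument to compare against; your write-up supplies the standard textbook proof, and it is correct in substance. For $C_L(D,G)$ the argument is exactly the usual one: $f$ vanishes at the $n-|T|$ zero coordinates (here you correctly use $\supp G\cap\supp D=\emptyset$), hence $f\in L\bigl(G-\sum_{i\notin T}P_i\bigr)$, and a nonzero element forces nonnegative degree. For $C_\Omega(D,G)$ the degree count against $\deg(\omega)=2g-2$ gives exactly $|T|\geq\deg(G-K)$.

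Two slips are worth fixing, neither fatal. First, in your plan paragraph you claim the key step is to show $\omega\in\Omega\bigl(G-D+\sum_{i\in T}P_i\bigr)$. Since $G-D+\sum_{i\in T}P_i=G-\sum_{i\notin T}P_i$, that divisor has coefficient $0$ at the points $P_i$ with $i\in T$, where $\omega$ genuinely has simple poles, so $\omega$ does \emph{not} lie in that space; the index set is swapped. The correct statement --- the one your detailed derivation actually proves and then uses in the degree count --- is $(\omega)\geq G-\sum_{i\in T}P_i$, i.e. $\omega\in\Omega\bigl(G-D+\sum_{i\notin T}P_i\bigr)$. Second, the residue theorem (residues summing to zero) plays no role; what is needed is the purely local fact, which you do state correctly in your closing paragraph, that if $v_{P_i}(\omega)\geq -1$ (forced by $(\omega)\geq G-D$ with $D$ reduced and disjoint from $G$) and $\Res_{P_i}(\omega)=0$, then $v_{P_i}(\omega)\geq 0$. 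With those two corrections your argument is precisely the standard proof of the Goppa bound.
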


Every algebraic geometric code can be represented in either of the two forms but the choice
of the representation is irrelevant for our bounds. Two codes
$C_L(D,G^\ast)$ and $C_\Omega(D,G)$ are equivalent if $G+G^\ast \sim K+D$ \cite{TsfVla07}. Our bounds 
depend on the divisor class $C$, where $C = D-G$ for a code $C_L(D,G)$
and $C = G-K$ for a code $C_\Omega(D,G)$. The codes $C_\Omega(D,G)$ and $C_L(D,G^\ast)$ 
share the same divisor class $C = G-K = D-G^\ast$ and thus bounds that depend only on the
divisor class $C$ are independent of the choice of the representation of the code.
The divisor $D$, which is the same for $C_\Omega(D,G)$ and for $C_L(D,G^\ast)$, only plays a minor 
role in the bounds. For each bound there is a finite set $S$ of points such that
the bound holds whenever $D$ is disjoint form $S$. In particular, the Goppa bound becomes 
$d \geq \deg C,$ for $S = \emptyset.$ The Goppa bound is also called the designed minimum 
distance of the code and we call the divisor $C$ the designed minimum support of the code.

\begin{proposition} (Base point bound $d_{BPT}$) If the divisor $C$ has a base point $P$, i.e. $L(C) = L(C-P)$, 
then a code with designed minimum support $C$ and defined with a divisor $D$ disjoint from $P$ has 
distance $d \geq \deg C + 1.$
\end{proposition}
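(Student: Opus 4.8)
The plan is to show that the Goppa bound $d \geq \deg C$ cannot be attained under the base point hypothesis, by identifying the support of a minimal-weight codeword with an effective divisor in the complete linear system $|C|$ and then invoking the base point condition to derive a contradiction. I will argue with the representation $C_\Omega(D,G)$, so that $C = G - K$; the case $C_L(D,G^\ast)$ is identical after the equivalence $G^\ast = D - C$, and in any event both reductions land on the same divisor-theoretic statement, since the bound depends only on the class $C$ and on $D$.

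First I would revisit the proof of the Goppa bound. Let $\omega \in \Omega(G-D)$ be a nonzero differential whose codeword has weight $w$, and let $T \subseteq \{P_1,\dots,P_n\}$ be the set of points at which $\Res_{P_i}(\omega) \neq 0$, so that $|T| = w$. Writing $D_T = \sum_{P_i \in T} P_i$, the hypothesis $(\omega) \geq G - D$ together with the vanishing of the residues off $T$ forces $(\omega) \geq G - D_T$, i.e. $\omega \in \Omega(G - D_T)$. Via the isomorphism $\Omega(E) \cong L(K-E)$ this is equivalent to $L(D_T - C) \neq 0$, whence $\deg(D_T - C) \geq 0$ and $w = \deg D_T \geq \deg C$, recovering the Goppa bound.

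The improvement comes from the boundary case. Suppose $w = \deg C$, so that $\deg(D_T - C) = 0$ while $L(D_T - C) \neq 0$; a nonzero function in a degree-zero space forces $D_T - C \sim 0$, so $D_T$ is an effective divisor linearly equivalent to $C$ and hence $D_T \in |C|$. The hard part is to convert the base point hypothesis into a contradiction, and this is where $L(C) = L(C-P)$ enters: this equality says precisely that every $f \in L(C)$ satisfies $(f) + C \geq P$, i.e. \emph{every} effective divisor in $|C|$ contains $P$ in its support. Applying this to $D_T$ gives $P \in \supp D_T$. But $\supp D_T \subseteq \{P_1,\dots,P_n\} = \supp D$, which is disjoint from $P$ by hypothesis, a contradiction. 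Therefore the extremal weight $w = \deg C$ cannot occur, and $d \geq \deg C + 1$.

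I expect the only genuine subtlety to be the correct reading of the base point condition as a statement about all divisors of $|C|$ rather than about one representative, and the observation that disjointness of $D$ from $P$ is exactly the finite exclusion set $S = \{P\}$ anticipated in the discussion of the designed minimum support. The degree bookkeeping and the identification of $D_T$ with a member of $|C|$ are then routine once the extremal case $\deg(D_T - C) = 0$ has been isolated.
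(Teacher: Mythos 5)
Your proof is correct and takes essentially the same route as the paper: both identify a hypothetical codeword of weight $\deg C$ with an effective divisor $D_T \sim C$ supported on $D$, and then observe that the base point condition $L(C)=L(C-P)$ forces $P \in \supp D_T$, contradicting the disjointness of $D$ from $P$. You simply make explicit the residue-theoretic and linear-system details that the paper's terse proof leaves implicit.
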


\begin{proof}
There exists a word in the code of weight $w = \deg C$ if and only if $C \sim P_{i_1} + \cdots + P_{i_w}$ 
for $w$ distinct points $P_{i_1}, \ldots, P_{i_w} \in \supp(D).$ The existence of such a word would
imply $L(C) \neq L(C-P)$. Therefore $d > \deg C.$
\end{proof}

The bound applies to a code $C_\Omega(D,G)$ with $G=A+B+P$ such that $L(A+P) = L(A)$ and $L(B+P) = L(B)$,
which is essentially the case considered in \cite[Theorem 2.1]{GarLax92}. 
 
\begin{lemma} For a given divisor $G$ and a point $P$, there exist divisors $A$ and $B$ such that  
$G = A+B+P$ and $L(A+P) = L(A)$ and $L(B+P) = L(B)$ if and only if $L(C) = L(C-P),$ for $G \sim K+C.$
\end{lemma}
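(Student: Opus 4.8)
The plan is to reduce the existence of the decomposition to a single condition on divisor classes and then settle the two implications separately. Since the hypotheses $L(A+P)=L(A)$, $L(B+P)=L(B)$ and the conclusion $L(C)=L(C-P)$ depend only on the classes of $A,B,C$, and since any class identity $[A]+[B]=[G-P]$ is realized by genuine divisors (fix a representative $A$ and put $B=G-P-A$, so that $A+B+P=G$ exactly), I would work throughout with classes. Writing $\ell(E)=\dim_\ff L(E)$ and $\phi(E)=\ell(E)-\ell(E-P)\in\{0,1\}$, the quantity $\phi(E)=0$ says exactly that $P$ is a base point of $E$ in the sense of the base point bound above.

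First I would substitute $U=A+P$, so that $A\sim U-P$ and $B=G-U$, and translate the two hypotheses into conditions on the single class $U$. The condition $L(A+P)=L(A)$ is just $\phi(U)=0$. For the condition on $B$ I would use the Riemann--Roch identity $\ell(E+P)-\ell(E)=1-\bigl(\ell(K-E)-\ell(K-E-P)\bigr)$ with $E=G-U$; combined with $G\sim K+C$, so that $K-(G-U)\sim U-C$, this rewrites $L(B+P)=L(B)$ as $\phi(U-C)=1$. Thus the decomposition exists if and only if there is a class $U$ with $\phi(U)=0$ and $\phi(U-C)=1$, and the lemma reduces to the clean statement
\[
\bigl(\exists\,U:\ \phi(U)=0\ \text{and}\ \phi(U-C)=1\bigr)\iff \phi(C)=0 .
\]

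For the implication $\Leftarrow$ I would simply exhibit $U=C$: then $\phi(U)=\phi(C)=0$ by hypothesis, while $\phi(0)=\ell(0)-\ell(-P)=1-0=1$. In the original variables this is the explicit choice $A\sim C-P$ and $B\sim K$, and one verifies directly from the same identity that $L(K+P)=L(K)$. The implication $\Rightarrow$ is the substantive part, which I would prove by contraposition. Assuming $\phi(C)=1$, a function $h\in L(C)\setminus L(C-P)$ produces an effective divisor $C'=(h)+C\sim C$ with $v_P(C')=0$, i.e.\ $P\notin\supp C'$. I then claim $\phi(U)=0$ forces $\phi(U-C)=\phi(U-C')=0$ for every $U$: given $f\in L(U-C')$ we have $(f)+U\ge C'\ge 0$, so $f\in L(U)$; since $P$ is a base point of $U$ this gives $(f)+U\ge P$, and because $v_P(C')=0$ the two inequalities combine to $(f)+U\ge C'+P$, i.e.\ $f\in L(U-C'-P)$. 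Hence $L(U-C')=L(U-C'-P)$, so $P$ is a base point of $U-C$, and no $U$ with $\phi(U-C)=1$ can exist.

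The main obstacle, and the step I would spend the most care on, is the Serre--duality translation of the two base-point hypotheses into the symmetric pair $\phi(U)=0,\ \phi(U-C)=1$ on a single class; once the problem is in this form both directions are short, the forward one resting on the elementary but crucial support observation that $(f)+U\ge C'$ and $(f)+U\ge P$ with $P\notin\supp C'$ yield $(f)+U\ge C'+P$.
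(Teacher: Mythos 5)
Your proof is correct, and while your ``if'' direction coincides with the paper's (the explicit choice $A\sim C-P$, $B\sim K$), your ``only if'' direction is organized genuinely differently. The paper dualizes \emph{both} hypotheses: $L(A+P)=L(A)$ and $L(B+P)=L(B)$ become $L(K-A)\neq L(K-A-P)$ and $L(K-B)\neq L(K-B-P)$, and then it invokes closure of the non-base-point condition under addition of divisor classes (a product of two functions realizing these conditions has exact pole order at $P$) applied to $(K-A)+(K-B)\sim K-C+P$, followed by one more dualization to land on $L(C)=L(C-P)$. You instead dualize only the $B$-condition, reducing the lemma to the one-variable statement that, with $\phi(E)=l(E)-l(E-P)$ and $U=A+P$, a valid decomposition exists iff some class $U$ has $\phi(U)=0$ and $\phi(U-C)=1$; you then prove the forward implication by contraposition with a valuation argument: a function $h\in L(C)\setminus L(C-P)$ moves $C$ to an effective $C'$ with $P\notin\supp C'$, and then $(f)+U\geq C'$ together with $(f)+U\geq P$ forces $(f)+U\geq C'+P$. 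Logically your support argument is a direct proof of the very same semigroup property of $\Gamma_P$ that the paper uses implicitly (applied to the pair $(C,U-C)$ rather than to $(K-A,K-B)$), so the two proofs rest on the same facts; but your route uses Serre duality once instead of three times, is self-contained in that it does not appeal to multiplicativity of functions (which the paper only formalizes later, in Section \ref{S:semigps}), and the single-class reformulation makes both implications transparent. The paper's version is terser given that machinery; yours is the more elementary and arguably more illuminating argument.
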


\begin{proof} The if part is clear, for we can choose $A=C-P$ and $B=K.$ For the only if part we use
$K-C+P \sim (K-A) + (K-B)$. And since $L(K-A) \neq L(K-A-P)$ and $L(K-B) \neq L(K-B-P)$,
$L(K-C+P) \neq L(K-C)$, or $L(C) = L(C-P).$ 
\end{proof}

\section{Floor bounds} \label{S:ABZ}

We present the ABZ floor bound of Duursma and Park \cite{DuuPar09} and show that it includes the bounds
$d_{LM}$ and $d_{GST}.$ The following lemma contains the main idea. 

\begin{lemma} \label{L:ABZ}
Given a divisor $G$, let $\eta$ be a nonzero differential with divisor 
$(\eta)=G-D'+E$, such that $D', E \geq 0$ and $E \cap D' = \emptyset.$
For divisors $A, B,$ and $Z$, such that $G=A+B+Z$, and such that $Z \geq 0$ and $Z \cap D' = \emptyset,$ 
\[
\deg D' \geq l(A)-l(A-D') + l(B)-l(B-D').
\]
\end{lemma}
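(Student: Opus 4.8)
The engine of the proof should be the observation that for $f \in L(A)$ and $g \in L(B)$ the rational differential $fg\eta$ has poles only along $\supp D'$. Writing $a_P, b_P, g_P, d'_P$ for the coefficients of $A, B, G, D'$ at a point $P$, one has $(fg\eta) \geq -A-B+(\eta) = -A-B+G-D'+E = Z-D'+E$; since $Z, E \geq 0$ with supports disjoint from $D'$, the only negative coefficients of this divisor sit at points of $\supp D'$, with pole order at most $d'_P$. The residue theorem then gives $\sum_{P \in \supp D'} \Res_P(fg\eta) = 0$ for every such $f$ and $g$. The plan is to convert this single global identity into the stated inequality by local duality.

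At each $P \in \supp D'$ I would record the relevant Laurent coefficients. Because $Z$ is disjoint from $D'$ we have $a_P + b_P = g_P$, and because $E$ is disjoint from $D'$ we have $\ord_P(\eta) = g_P - d'_P$. Thus a function $f \in L(A)$ contributes its coefficients at $P$ in orders $-a_P, \dots, -a_P + d'_P - 1$, while the differential $g\eta$ contributes its coefficients in orders $a_P - d'_P, \dots, a_P - 1$. Each datum is $d'_P$-dimensional, so assembling over $\supp D'$ produces principal-part maps $\alpha\colon L(A) \to U$ and $\beta\colon L(B) \to U'$ into spaces of dimension $\deg D'$. Since $\Res_P(fg\eta)$ depends only on $\alpha(f)_P$ and $\beta(g)_P$ and equals the coefficient sum $\sum_k f_k h_{-1-k}$, the local residues assemble into a block-diagonal, hence \emph{perfect}, pairing $U \times U' \to \ff$.

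With this in place the argument closes quickly. First I would verify that $\ker\alpha = L(A-D')$ and $\ker\beta = L(B-D')$ — a function annihilates its recorded principal part exactly when its pole order along $D'$ drops by $d'_P$ at every $P$ — so $\alpha, \beta$ descend to injections $\bar\alpha, \bar\beta$ of $L(A)/L(A-D')$ and $L(B)/L(B-D')$, of dimensions $l(A)-l(A-D')$ and $l(B)-l(B-D')$. The residue-theorem identity says precisely that $\operatorname{Im}\bar\alpha$ and $\operatorname{Im}\bar\beta$ annihilate each other under the perfect pairing, so $\operatorname{Im}\bar\alpha \subseteq (\operatorname{Im}\bar\beta)^\perp$ and
$$\big(l(A)-l(A-D')\big) + \big(l(B)-l(B-D')\big) = \dim\operatorname{Im}\bar\alpha + \dim\operatorname{Im}\bar\beta \leq \dim U = \deg D'.$$

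I expect the main obstacle to be purely bookkeeping: confirming that the two exponent ranges at each $P$ are exactly complementary under the residue pairing, so that the pairing is perfect and no higher-order coefficients leak in, and that the kernels of the principal-part maps are precisely the Riemann--Roch spaces $L(A-D')$ and $L(B-D')$. This is exactly where the hypotheses $Z \geq 0$, $Z \cap D' = \emptyset$, and $E \cap D' = \emptyset$ are used, since they force $a_P + b_P = g_P$ and $\ord_P(\eta) = g_P - d'_P$ on $\supp D'$, which is what aligns the ranges. Once the local picture is correct, the global step is only the residue theorem together with the elementary fact that orthogonal subspaces under a perfect pairing have complementary-bounded dimensions.
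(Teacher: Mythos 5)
Your argument is correct, but it takes a genuinely different route from the paper. The paper's proof is purely cohomological: since $E, Z \geq 0$ are disjoint from $D'$, the natural maps $L(A)/L(A-D') \to L(A+E)/L(A+E-D')$ and $L(B)/L(B-D') \to L(B+Z)/L(B+Z-D')$ are well defined and injective; Riemann--Roch gives $\deg D' = l(A+E)-l(A+E-D')+i(A+E-D')-i(A+E)$, and Serre duality with $K = (\eta) = G-D'+E$ converts the two specialty indices into $l(B+Z)$ and $l(B+Z-D')$, after which the injections yield the inequality. You instead make the duality explicit at the level of Laurent coefficients: truncated principal-part maps $\alpha, \beta$, a perfect block pairing, and the residue theorem forcing orthogonality of the two images. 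This is essentially the van Lint--Wilson AB method carried out on the curve --- precisely the ``linear algebra'' reformulation that the paper sketches in the remark and the lemma on ${\cal A} \ast {\cal B} \perp {\cal C}$ following its proof, but does not itself use to prove the lemma. Each approach has its advantages: the paper's computation is shorter, stays within the Riemann--Roch calculus that also produces the equivalent forms (\ref{eq:2})--(\ref{eq:3}) of the bound, and works verbatim for an arbitrary effective $D'$; your version replaces Serre duality by the residue theorem plus elementary linear algebra and makes the coding-theoretic content transparent, the images of $\bar\alpha, \bar\beta$ being exactly the restrictions to the codeword support of the codes built from $A$ and $B$. One small caveat: recording Laurent coefficients in $\ff$ tacitly assumes the points of $\supp D'$ are rational. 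That suffices for the intended application, where a codeword support satisfies $D' \leq D$, but the lemma as stated allows arbitrary effective $D'$; to cover closed points of higher degree you would take coefficients in the residue field and compose the pairing with the trace form, which is nondegenerate because finite fields are perfect. The paper's proof needs no such adjustment.
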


\begin{proof}
With $E, Z \geq 0$ and $E \cap D' = Z \cap D' = \emptyset$, the natural maps
\begin{align*}
L(A) / L(A-D') &\longrightarrow L(A+E) / L(A+E-D'), \\
L(B) / L(B-D') &\longrightarrow L(B+Z) / L(B+Z-D'),  
\end{align*}
are well defined and injective. Therefore
\begin{align*}
\deg D' &~=~ l(A+E) - l(A+E-D') + i(A+E-D') - i(A+E) \\
        &~=~ l(A+E) - l(A+E-D') + l(B+Z) - l(B+Z-D') \\
        &~\geq~ l(A)-l(A-D') + l(B)-l(B-D').
\end{align*}
\end{proof}

\begin{remark}
The condition $Z \geq 0$ can be replaced with the weaker condition 
$L(B) \subseteq L(B+Z)$, which does not affect the proof. However,
the weaker condition does not produce better lower bounds.
Namely, suppose that $G=A+B+Z$ is a decomposition such that $L(B) \subseteq L(B+Z)$ 
and $Z \cap D' = \emptyset.$ Let $Z=Z^+ - Z^-,$ with $Z^+, Z^- \geq 0, Z^+ \cap Z^- = \emptyset.$
Then $L(B) = L(B) \cap L(B+Z) = L(B-Z^-).$ The decomposition $G=A+(B-Z^-)+Z^+$ meets
the condition $Z^+ \geq 0$ and $Z^+ \cap D' = \emptyset$ and gives the same
lower bound,
\begin{align*}
&l(A)-l(A-D') + l(B-Z^-)-l(B-Z^--D') \\
=~ & l(A)-l(A-D') + l(B) - l(B-D').
\end{align*}
\end{remark}

When written out in terms of linear algebra, i.e. after removing the connection
to curves, the bound is essentially an application of the AB bound for linear codes \cite{LinWil86}. 
We briefly formulate the connection. For two vectors $a, b$ in $\ff^n$, 
let $a \ast b = (a_1 b_1, \ldots, a_n b_n)$ denote the Hadamard or coordinate-wise 
product of the two vectors. 

\begin{lemma} 
Let ${\cal A}$, ${\cal B}$, ${\cal C}$ $\subseteq \ff^{\,n}$ be $\ff$-linear codes of 
length $n$ such that ${\cal A} \ast {\cal B} \perp {\cal C}$, i.e. such that $a \ast b \perp c$, for all
$a \in {\cal A}, b \in {\cal B}, c \in {\cal C}.$ Then, for all $c \in {\cal C}$,  
\[
\wt(c) := \dim\, ( c \ast \ff^n) ~\geq~ \dim\, (c \ast {\cal A}) + \dim\, (c \ast {\cal B}).
\]
\end{lemma}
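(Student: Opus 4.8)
The plan is to reduce everything to the support of the word $c$. Fix $c \in {\cal C}$ and let $T = \supp(c)$, so that $c \ast \ff^n$ is exactly the coordinate subspace of $\ff^n$ supported on $T$, of dimension $|T|$, and hence $\wt(c) = |T|$. Both $c \ast {\cal A}$ and $c \ast {\cal B}$ are subspaces of this space, since $c \ast a$ and $c \ast b$ vanish off $T$; I will view them as subspaces $U = c \ast {\cal A}$ and $V = c \ast {\cal B}$ of $\ff^T$. The goal is then simply to show $\dim U + \dim V \le |T|$.

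The key step is to recognize the hypothesis ${\cal A} \ast {\cal B} \perp {\cal C}$ as an orthogonality statement between $U$ and $V$ for a nondegenerate pairing on $\ff^T$. For $a \in {\cal A}$ and $b \in {\cal B}$ the condition $a \ast b \perp c$ reads $\sum_i c_i a_i b_i = 0$, and since $c_i = 0$ off $T$ only the indices $i \in T$ contribute. Writing $u = c \ast a$ and $v = c \ast b$, so that $a_i = c_i^{-1} u_i$ and $b_i = c_i^{-1} v_i$ for $i \in T$, this becomes $\sum_{i \in T} c_i^{-1} u_i v_i = 0$. Thus $U$ and $V$ are orthogonal with respect to the symmetric bilinear form $\beta(x,y) = \sum_{i \in T} c_i^{-1} x_i y_i$ on $\ff^T$, which is nondegenerate because every coefficient $c_i^{-1}$ with $i \in T$ is nonzero.

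From nondegeneracy the conclusion is immediate: the inclusion $U \subseteq V^{\perp_\beta}$ forces $\dim U \le |T| - \dim V$, i.e. $\dim(c \ast {\cal A}) + \dim(c \ast {\cal B}) \le |T| = \wt(c)$, which is the claim. I expect no genuine obstacle here; the only point to handle carefully is the bookkeeping that $c \ast {\cal A}$ and $c \ast {\cal B}$ really live in the $|T|$-dimensional space $\ff^T$ and that $\beta$ is nondegenerate there. An equivalent route avoids the weighted form: rewrite $\langle a \ast b, c\rangle = \langle c \ast a, b\rangle$ to obtain $c \ast {\cal A} \perp {\cal B}$ in the standard inner product, restrict ${\cal B}$ to $T$ (which preserves $\dim(c \ast {\cal B})$, since diagonal scaling by the nonzero $c_i$ is invertible on $\ff^T$), and apply nondegeneracy of the standard pairing on $\ff^T$. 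Both arguments are short and amount to the van Lint--Wilson observation that two mutually orthogonal subspaces of a nondegenerate $|T|$-dimensional space have dimensions summing to at most $|T|$.
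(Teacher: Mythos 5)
Your proof is correct. Note, though, that the paper offers no proof of this lemma at all: it is stated as the AB bound and cited to van Lint--Wilson \cite{LinWil86}; what the paper proves directly is the algebro-geometric analogue, Lemma~\ref{L:ABZ}, via Riemann--Roch and Serre duality. Your argument is the natural elementary proof of the quoted statement, and it mirrors the paper's curve-level proof quite closely: your nondegenerate diagonal pairing $\beta(x,y)=\sum_{i\in T} c_i^{-1}x_i y_i$ on $\ff^{\,T}$ plays the role that Serre duality plays there (where $i(A+E-D')=l(B+Z)$ because $K \sim G-D'+E$), the identification of $c\ast{\cal A}$ and $c\ast{\cal B}$ with subspaces of $\ff^{\,T}$ corresponds to the paper's two injections of Riemann--Roch quotients, and $|T|$ corresponds to $\deg D'$. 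Both of your routes are complete: the main one (orthogonality of $U$ and $V$ under $\beta$, verified on the spanning vectors $c\ast a$ and $c\ast b$ and extended bilinearly, followed by $\dim V^{\perp_\beta} = |T|-\dim V$) and the alternative one (moving $c$ across the standard pairing to get $c\ast{\cal A} \perp {\cal B}$, then restricting ${\cal B}$ to $T$, which preserves $\dim(c\ast{\cal B})$ because coordinatewise multiplication by the nonzero $c_i$ is invertible on $\ff^{\,T}$). In particular, nondegeneracy of $\beta$ requires only that its diagonal entries be nonzero, so the argument is valid over any finite field, including characteristic $2$; there is no gap.
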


For $G = A+B+Z,$ and $D = P_1 + P_2 + \cdots + P_n$, such that $Z \geq 0$ and $Z \cap D = \emptyset$,
let $c \in {\cal C} = C_\Omega(D,G)$ have support in $D' \leq D$. For ${\cal A} = C_L(D,A)$ and ${\cal B} = C_L(D,B)$,
${\cal A} \ast {\cal B} \perp {\cal C}$. With $c \ast {\cal A} \simeq C_L(D',A)$ and  $c \ast {\cal B} \simeq C_L(D',B)$,
\[
\deg D' ~\geq~ \dim\, {c \ast \cal A} + \dim\, {c \ast \cal B} = l(A)-l(A-D') + l(B)-l(B-D').
\]
The definition of the codes ${\cal A}$, ${\cal B}$, and ${\cal C}$ does not require that the divisors $A, B$ and $G$ 
are disjoint from $D$, if we modify the encoding map $\alpha_L$. In that case the
inclusion $C_L(D,A) \ast C_L(D,B) \subseteq C_L(D,G)$ remains valid for the modified
codes with the assumption $D \cap Z = \emptyset.$ \\

For $G=K+C$, Lemma \ref{L:ABZ} gives a lower bound for $\deg D'$ that depends only on $C$
and the choice of the divisors $A$ and $B$ in $G=A+B+Z.$

\begin{theorem}(ABZ bound \cite[Theorem 2.4]{DuuPar09}) \label{T:ABZ}
Let $G = K+C = A+B+Z,$ for $Z \geq 0$. For $D$ with $D \cap Z = \emptyset$, 
\[
d( C_\Omega(D,G) ) ~\geq~ l(A)-l(A-C) + l(B)-l(B-C).  
\]
\end{theorem}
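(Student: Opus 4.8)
The plan is to reduce the statement to Lemma~\ref{L:ABZ}, applied to a differential realizing a minimum-weight codeword, and then to convert the $D'$-dependent estimate of that lemma into the claimed $C$-dependent (hence uniform) bound by exploiting the canonical relation $G = K+C$. First I would fix a nonzero codeword $c \in C_\Omega(D,G)$ and choose $\omega \in \Omega(G-D)$ with $\alpha_\Omega(\omega) = c$, so that $(\omega) \geq G-D$. Let $D' = \sum_{i\,:\,\Res_{P_i}(\omega)\neq 0} P_i \leq D$ be the support of $c$, whence $\wt(c) = \deg D'$. Since $G$ is disjoint from $D$, the bound $(\omega)\geq G-D$ forces $\ord_{P_i}(\omega)\geq -1$ at each $P_i \in \supp D$, so $\Res_{P_i}(\omega)\neq 0$ exactly when $\omega$ has a simple pole at $P_i$; a short local computation then shows $(\omega) = G - D' + E$ for some $E \geq 0$ with $E \cap D' = \emptyset$. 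As $D \cap Z = \emptyset$ and $D' \leq D$ give $Z \cap D' = \emptyset$, the hypotheses of Lemma~\ref{L:ABZ} (with $\eta = \omega$) hold and yield
\[
\deg D' ~\geq~ l(A)-l(A-D') + l(B)-l(B-D').
\]

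Next I would trade $D'$ for $C$ on the right-hand side. Because $(\omega)$ is a canonical divisor, $(\omega)\sim K$, and since $(\omega) = G - D' + E = K + C - D' + E$ we obtain $C \sim D' - E$, hence $A - C \sim A - D' + E$ and $B - C \sim B - D' + E$. As $l(\cdot)$ depends only on the linear equivalence class and $E \geq 0$ gives $L(A-D') \subseteq L(A-D'+E)$, it follows that $l(A-D') \leq l(A-C)$ and likewise $l(B-D') \leq l(B-C)$. Substituting these into the previous inequality gives
\[
\wt(c) = \deg D' ~\geq~ l(A)-l(A-C) + l(B)-l(B-C),
\]
and since this holds for every nonzero $c$, taking the minimum over codewords proves the theorem.

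I expect the main subtlety to be precisely this passage from the support-dependent estimate of Lemma~\ref{L:ABZ} to the uniform bound: one has to notice that the canonical nature of $(\omega)$ forces the linear equivalence $C \sim D' - E$, and then apply monotonicity of $l(\cdot)$ under adjoining the effective divisor $E$, taking care with the direction of the inequality (adjoining $E\geq 0$ can only enlarge the Riemann--Roch space, which is exactly what turns $l(A-D')\leq l(A-C)$ into the needed $-l(A-D')\geq -l(A-C)$). The residue-and-pole bookkeeping producing $(\omega)=G-D'+E$ with $E\cap D'=\emptyset$ is routine, but it relies essentially on the standing assumption that $\supp G$ is disjoint from $D$, which is what pins $\ord_{P_i}(\omega)$ to $-1$ on the support and $\geq 0$ off it.
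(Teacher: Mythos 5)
Your proof is correct and takes essentially the same route as the paper's: both realize a codeword of support $D'$ by a differential with divisor $G-D'+E$, apply Lemma~\ref{L:ABZ}, and then replace $D'$ by $C$ using the linear equivalence $D' \sim C+E$ (equivalently, $0 \neq \eta \in \Omega(G-D') \simeq L(D'-C)$) together with monotonicity of $l(\cdot)$ under adding the effective divisor $E$. You simply make explicit the residue/pole bookkeeping and the inequality $l(A-D') \leq l(A-C)$ that the paper's two-line proof leaves implicit.
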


\begin{proof}
A word $c \in C_\Omega(D,G)$ has support $D'$ only if there
exists a nonzero differential $\eta \in \Omega(G-D') \simeq L(D'-C).$
With Lemma \ref{L:ABZ},
\begin{align*}
\deg D' &~\geq~ l(A)-l(A-D') + l(B)-l(B-D') \\
        &~\geq~ l(A)-l(A-C) + l(B)-l(B-C). 
\end{align*}
\end{proof}

Replacing $A$ with $\lfloor A \rfloor$ and $B$ with $\lfloor B \rfloor$ can only improve the lower bound for $\deg D'.$ 
And in general the bound is optimal for choices of $A$ and $B$ such that $A = \lfloor A \rfloor$ and $B = \lfloor B \rfloor.$
However, it can be useful to apply the bound with $A \neq \lfloor A \rfloor$ or $B \neq \lfloor B \rfloor$ if such a choice 
reduces the support of the divisor $Z$. The choice may then give the same bound with a less restrictive condition $D \cap Z = \emptyset.$

\begin{table}[th]
\begin{center}
\begin{tabular}{l@{\hspace{6mm}}r@{\hspace{6mm}}r@{\hspace{6mm}}r@{\hspace{6mm}}r@{\hspace{6mm}}l@{\hspace{6mm}}l}
\toprule
$G$    &$A$ &$B$   &$Z$   &$d_{ABZ}$  &\text{Condition for $D$}\\
\midrule
22P+6Q &14P &8P &6Q      &6  &$Q \not \in \supp D$ \\
22P+6Q &13P &8P &P+6Q    &6  &$P,Q \not \in \supp D$ \\
\bottomrule
\end{tabular}
\end{center}
\caption{Suzuki curve over $\mathbb{F}_8~(\lfloor 14 P \rfloor = 13P)$}
\label{Table:PQD}
\end{table}

We give two other forms for the lower bound in the theorem. Equation (\ref{eq:2}) shows that the lower bound reduces to the Goppa 
designed minimum distance $\deg C$ whenever $Z=0$. Equation (\ref{eq:3}) shows that the lower bound 
never exceeds $\deg C + \deg Z.$
\begin{align}
&l(A)-l(A-C) + l(B)-l(B-C)  \label{eq:1} \\[1ex] 
~=~ &\deg C + i(A)-i(A-C) + l(B)-l(B-C) \nonumber \\ 
~=~ &\deg C + l(B+Z-C)- l(B+Z) + l(B)-l(B-C) \label{eq:2} \\[1ex] 
~=~ &\deg C + \deg Z + i(B+Z-C)- l(B+Z) + l(B)-i(B-C) \nonumber \\
~=~ &\deg C + \deg Z + l(A)-l(A+Z) +l(B)-l(B+Z). \label{eq:3}
\end{align}

With added assumptions for the divisors $A$ and $B$ we obtain as special cases of the theorem the bounds $d_{LM}$ and $d_{GST}$.

\begin{corollary} (the bound $d_{LM}$ \cite[Theorem 3]{LunMcc06}) \label{C:LM}
Let $G = K+C = A+B+Z,$ for $Z \geq 0$, such that $L(A+Z)=L(A)$ and $L(B+Z)=L(B)$. 
For $D$ with $D \cap Z = \emptyset$, 
\[
d(C_\Omega(D,G)) \geq \deg C + \deg Z.
\]
\end{corollary}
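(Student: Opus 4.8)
The plan is to derive the corollary directly from the ABZ bound (Theorem~\ref{T:ABZ}) by specializing the reformulation in Equation~(\ref{eq:3}). First I would note that the hypotheses here---a decomposition $G = K+C = A+B+Z$ with $Z \geq 0$ and a divisor $D$ with $D \cap Z = \emptyset$---are exactly the hypotheses of Theorem~\ref{T:ABZ}. Applying that theorem verbatim gives
\[
d(C_\Omega(D,G)) ~\geq~ l(A)-l(A-C) + l(B)-l(B-C),
\]
so it remains only to evaluate the right-hand side under the two extra floor conditions.

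For that I would use Equation~(\ref{eq:3}), which rewrites this right-hand side as
\[
\deg C + \deg Z + \bigl(l(A)-l(A+Z)\bigr) + \bigl(l(B)-l(B+Z)\bigr).
\]
Because $Z \geq 0$ we automatically have $L(A) \subseteq L(A+Z)$ and $L(B) \subseteq L(B+Z)$, so the hypotheses $L(A+Z) = L(A)$ and $L(B+Z) = L(B)$ are equivalent to the dimension equalities $l(A+Z) = l(A)$ and $l(B+Z) = l(B)$. Substituting these makes both parenthesized terms vanish, and the bound collapses to $\deg C + \deg Z$, giving $d(C_\Omega(D,G)) \geq \deg C + \deg Z$.

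There is no real obstacle beyond bookkeeping: the whole content of $d_{LM}$ is that the floor conditions $L(A+Z) = L(A)$ and $L(B+Z) = L(B)$ are precisely the hypotheses that kill the two correction terms $l(A)-l(A+Z)$ and $l(B)-l(B+Z)$ appearing in the ABZ bound. The one point deserving a line of care is that each floor condition must be passed from an equality of spaces to an equality of their dimensions; this is immediate here because one inclusion is free once $Z \geq 0$. I would therefore present the argument as a short specialization of Theorem~\ref{T:ABZ} rather than redoing any Riemann--Roch computation, since Equation~(\ref{eq:3}) already encapsulates that computation.
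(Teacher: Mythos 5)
Your proof is correct and is exactly the paper's own argument: the paper proves Corollary~\ref{C:LM} by invoking Equation~(\ref{eq:3}) (i.e.\ the ABZ bound of Theorem~\ref{T:ABZ} rewritten) and letting the hypotheses $L(A+Z)=L(A)$, $L(B+Z)=L(B)$ kill the two correction terms. Your only addition is the explicit remark that $Z \geq 0$ gives the inclusions needed to pass from equality of spaces to equality of dimensions, which is a harmless piece of bookkeeping the paper leaves implicit.
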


\begin{proof}
Use Equation (\ref{eq:3}) with $L(A+Z)=L(A)$ and $L(B+Z)=L(B)$.
\end{proof}

The original floor bound by Maharaj, Matthews and Pirsic \cite{MMP05} corresponds to $A+Z=B+Z=H$ and $A=B=\lfloor H \rfloor$. 


\begin{corollary} (the bound $d_{GST}$ \cite[Theorem 2.4]{GST09}) \label{C:gst}
Let $F$ be an algebraic function field of genus $g$ with full constant
field ${\mathbb F}_q.$ Let $D = P_1 + \cdots + P_n$, where the $P_i$'s are distinct rational places
of the function field $F/{\mathbb F}_q$ and suppose that $\bA, B, \bC, Z \in \operatorname{Div}(F)$ satisfy
the following conditions:
\begin{enumerate}
\item $(\supp(\bA) \cup \supp(B) \cup \supp(\bC) \cup \supp(Z)) \bigcap \supp(D) = \emptyset,$
\item $L(\bA) = L(\bA-Z)$ and $L(B)=L(B+Z)$,
\item $L(\bC)=L(B).$
\end{enumerate}
If $G = \bA+B$, then the minimum distance $d$ of the code $C_\Omega(D,G)$ satisfies
\[
d \geq \deg G - (2g-2) + \deg Z + (i(\bA)-i(G-\bC)). 
\]
\end{corollary}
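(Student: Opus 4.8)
The plan is to exhibit $d_{GST}$ as a special case of the ABZ bound (Theorem~\ref{T:ABZ}), read off in the shape of Equation~(\ref{eq:3}). Put $C = G - K$, so that $C_\Omega(D,G)$ has designed minimum support $C$ with $\deg C = \deg G - (2g-2)$. The key move is to write the given divisor $G = \bA + B$ in the three-term form $G = A_0 + B_0 + Z_0$ demanded by the theorem, using
\[
A_0 = \bA - Z, \qquad B_0 = \bC, \qquad Z_0 = B + Z - \bC.
\]
One checks at once that $A_0 + B_0 + Z_0 = \bA + B = G$, that $A_0 + Z_0 = G - \bC$, and that $B_0 + Z_0 = B + Z$. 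Moreover condition~(1) forces $\supp Z_0 \subseteq \supp B \cup \supp Z \cup \supp \bC$ to be disjoint from $\supp D$.

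Feeding this decomposition into Equation~(\ref{eq:3}) gives
\[
d(C_\Omega(D,G)) \;\geq\; \deg C + \deg Z_0 + l(A_0) - l(A_0 + Z_0) + l(B_0) - l(B_0 + Z_0).
\]
I would then simplify each term with the hypotheses of the corollary. Condition~(2) yields $l(A_0) = l(\bA - Z) = l(\bA)$ and $l(B_0 + Z_0) = l(B + Z) = l(B)$, while condition~(3) gives $l(B_0) = l(\bC) = l(B)$, so the last two $l$-terms cancel. Using $\deg Z_0 = \deg B + \deg Z - \deg \bC$ and $A_0 + Z_0 = G - \bC$, the right-hand side collapses to $\deg C + \deg Z + \deg B - \deg \bC + l(\bA) - l(G - \bC)$. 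A short Riemann--Roch computation, $l(\bA) = \deg \bA + 1 - g + i(\bA)$ and $l(G-\bC) = \deg(G-\bC) + 1 - g + i(G-\bC)$, combined with $\deg \bA - \deg G = -\deg B$, converts $\deg B - \deg \bC + l(\bA) - l(G-\bC)$ into exactly $i(\bA) - i(G-\bC)$. Substituting $\deg C = \deg G - (2g-2)$ then produces the claimed inequality.

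The one point that needs care is that $Z_0 = B + Z - \bC$ is in general \emph{not} effective, so Theorem~\ref{T:ABZ} does not apply verbatim. This is precisely the situation handled by the Remark following Lemma~\ref{L:ABZ}: the hypothesis $Z_0 \geq 0$ may be relaxed to the inclusion $L(B_0) \subseteq L(B_0 + Z_0)$. Here $L(B_0) = L(\bC) = L(B) = L(B+Z) = L(B_0 + Z_0)$ by conditions~(2) and~(3), so the inclusion holds (with equality) and the weakened form of the bound is available; note also that the identity~(\ref{eq:3}) itself is purely formal and needs no sign assumption on $Z_0$, only the inequality does. I expect this verification --- that the floor-type conditions on $\bA$, $B$, and $\bC$ conspire to render the non-effective $Z_0$ harmless --- to be the sole genuine obstacle, the remainder being the Riemann--Roch bookkeeping above.
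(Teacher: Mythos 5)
Your proposal is correct and is essentially the paper's own proof: both specialize Theorem~\ref{T:ABZ}, read through the identity (\ref{eq:3}), to the decomposition $A = \bA - Z$, $B = \bC$, cancel the $l$-terms using conditions (2) and (3), and finish with the same Riemann--Roch bookkeeping to produce $i(\bA) - i(G-\bC)$. The single point of divergence is how the possibly non-effective residual divisor is tamed: the paper first replaces $\bC$ by $\min(\bC,B)$, so that $B = \bC + Z'$ with $Z' \geq 0$ and Theorem~\ref{T:ABZ} applies verbatim with effective part $Z + Z'$ (the stated bound then follows because $i(G - \min(\bC,B)) \leq i(G-\bC)$), whereas you keep $\bC$ and invoke the Remark after Lemma~\ref{L:ABZ} to allow $Z_0 = B + Z - \bC \not\geq 0$ --- and since that Remark shows the weakened hypothesis is interchangeable with an effective re-decomposition, the two devices are equivalent and your route is equally valid.
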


\begin{proof} 
After replacing $\bC$ with $\min(\bC,B)$ if necessary, we may assume that $B=\bC+Z'$, for $Z' \geq 0.$
The bound is the special case of Theorem \ref{T:ABZ} obtained with the decomposition 
$G = A+B+Z = (\bA-Z)+\bC+(Z+Z').$ We obtain the bound in the given form using Equation (\ref{eq:3})
with $L(B+Z)=L(B)$.   
\begin{align*}
d &\geq \deg G - (2g-2) + \deg (Z+Z') + l(\bA-Z) - l(\bA+Z') \\
        &= \deg G - (2g-2) + \deg Z + \deg Z' + l(\bA) - l(\bA+Z') \\
        &= \deg G - (2g-2) + \deg Z + i(\bA) - i(\bA+Z').
\end{align*}
\end{proof}

\begin{example}
For $G=K+C=26P+(4P+Q)$, the choice $A=13P, B=16P, Z=P+Q$ gives $d_{LM}=d_{GST}=d_{GOP}+\deg Z = 7.$
The choice $A=13P, B=13P, Z=4P+Q$ gives $d_{ABZ} =8.$
In both cases, the choices are optimal.
\end{example}

 
The bound $d_{GST}$ is formulated in Corollary \ref{C:gst} as an improvement of the bound $d_{LM}$. 
For a choice of divisors $\bA$ and $B$ such that $d_{LM} = \deg C + \deg Z$, replacing $B$ with $\bC$ such that
$L(\bC) = L(B)$ gives an improvement $i(\bA) - i(G-\bC)$ of the bound $d_{LM}$. In general however, good estimates
for $d_{GST}$ do not necessarily arise from improving good estimates for $d_{LM}.$ In the example 
below, two different estimates for $d_{LM}$ are both improved by replacing $B$ with a divisor $\bC$. 
The optimal estimate $d_{GST}=6$ is the result of improving the weaker estimate $d_{LM}=4$.

\begin{table}[th]
\begin{center}
\begin{tabular}{l@{\hspace{6mm}}r@{\hspace{6mm}}r@{\hspace{6mm}}r@{\hspace{6mm}}r@{\hspace{6mm}}r@{\hspace{6mm}}rrr}
\toprule
$G$    &$\bA$ &$B$   &$Z$   &$\bC$ &$d_{LM}$  &$d_{GST}$ \\
\midrule
22P+6Q &17P+2Q &5P+4Q &P+2Q  &0    &5         &5 \\
22P+6Q &14P+2Q &8P+4Q &2Q    &8P   &4         &6 \\
\bottomrule
\end{tabular}
\end{center}
\caption{Suzuki curve over $\mathbb{F}_8$}
\label{T:LMGST}
\end{table} 

The efficient computation of bounds is discussed in Section \ref{S:comp}. To optimize the bound $d_{GST}$ we use it in the form below.
Corollary \ref{C:2} uses fewer parameters than Corollary \ref{C:gst} and gives the bound directly without comparing it to $d_{LM}$.

\begin{corollary} \label{C:2}
Let $G = K+C$, and let $B$ and $Z$ be divisors such that $L(B+Z)=L(B)$ and $Z \geq 0.$ 
For $D$ with $D \cap Z = \emptyset$, 
\[
d(C_\Omega(D,G)) \geq \deg C + l(B+Z-C)-l(B-C).
\]
\end{corollary}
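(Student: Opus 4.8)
The plan is to obtain Corollary~\ref{C:2} as a direct specialization of the ABZ bound (Theorem~\ref{T:ABZ}) by making a judicious choice of the decomposition $G = A+B+Z$. The target inequality involves only the divisors $B$ and $Z$ together with the designed support $C$, whereas Theorem~\ref{T:ABZ} carries the extra divisor $A$. The natural idea is therefore to eliminate $A$ by forcing its contribution $l(A)-l(A-C)$ to collapse to the Goppa term $\deg C$. Looking at Equation~(\ref{eq:2}), which reads
\[
l(A)-l(A-C) + l(B)-l(B-C) = \deg C + l(B+Z-C)-l(B+Z) + l(B)-l(B-C),
\]
I see that the dependence on $A$ has already been traded for a dependence on $B+Z$ in that intermediate form. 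So the first step is simply to invoke Theorem~\ref{T:ABZ} for the given $B$ and $Z$ (with $A := G - B - Z$, which is determined by the others) and rewrite the resulting lower bound using Equation~(\ref{eq:2}).

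The second step is to apply the hypothesis $L(B+Z)=L(B)$, which gives $l(B+Z)=l(B)$. Substituting this into the expression from Equation~(\ref{eq:2}) cancels the two middle terms against each other:
\[
\deg C + l(B+Z-C)-l(B+Z) + l(B)-l(B-C) = \deg C + l(B+Z-C)-l(B-C),
\]
which is exactly the claimed right-hand side. The hypotheses $Z \geq 0$ and $D \cap Z = \emptyset$ are precisely what Theorem~\ref{T:ABZ} requires, so they transfer without further work.

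I do not anticipate a serious obstacle here; the content of the corollary is really a bookkeeping simplification of Theorem~\ref{T:ABZ} under one extra equality of Riemann--Roch spaces. The one point that warrants a moment of care is the role of the divisor $A = G-B-Z$: Theorem~\ref{T:ABZ} is stated for an arbitrary decomposition, and here $A$ is forced once $B$ and $Z$ are fixed, so I should confirm that no positivity or disjointness condition is implicitly demanded of $A$ — inspecting Theorem~\ref{T:ABZ}, the only standing requirements are $Z \geq 0$ and $D \cap Z = \emptyset$, both of which are hypotheses of the corollary. Thus the proof amounts to citing Theorem~\ref{T:ABZ}, rewriting via Equation~(\ref{eq:2}), and applying $L(B+Z)=L(B)$.
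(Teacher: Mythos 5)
Your proposal is correct and is essentially identical to the paper's own proof, which reads in full: ``Use Equation~(\ref{eq:2}) with $L(B+Z)=L(B)$'' --- that is, apply Theorem~\ref{T:ABZ} with the decomposition $G=A+B+Z$, rewrite the bound via Equation~(\ref{eq:2}), and cancel $l(B+Z)$ against $l(B)$. Your extra check that Theorem~\ref{T:ABZ} imposes no positivity or disjointness condition on $A=G-B-Z$ is a sensible precaution, and it confirms the specialization goes through exactly as the paper intends.
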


\begin{proof} 
Use Equation (\ref{eq:2}) with $L(B+Z)=L(B)$.
\end{proof}

The following theorem gives the same bound as that in Corollary \ref{C:2} and Corollary \ref{C:gst} but using only a single parameter. 

\begin{theorem}(One parameter formulation of $d_{GST}$) \label{T:GST1p}
Let $G = K+C.$ For divisors $D$ and $B$ such that $D \cap (B-\lfloor B \rfloor) = \emptyset$,  
\[
d(C_\Omega(D,G)) \geq \deg C + l(B-C)-l( \lfloor B \rfloor -C).
\]
\end{theorem}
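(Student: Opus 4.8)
The plan is to obtain this single-parameter bound as an immediate specialization of Corollary~\ref{C:2}, choosing the auxiliary divisor $Z$ there to be the fixed part of $B$. Concretely, I would set $Z = B - \lfloor B \rfloor$ and invoke Corollary~\ref{C:2} with its divisor $B$ replaced by $\lfloor B \rfloor$.

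First I would verify that this choice meets the hypotheses of Corollary~\ref{C:2}. By definition $\lfloor B \rfloor$ is the minimal divisor with $L(\lfloor B \rfloor) = L(B)$, so $\lfloor B \rfloor \leq B$ and hence $Z \geq 0$. Since $\lfloor B \rfloor + Z = B$, the required equality $L(\lfloor B \rfloor + Z) = L(\lfloor B \rfloor)$ is precisely $L(B) = L(\lfloor B \rfloor)$, which holds by the definition of the floor. Finally, the disjointness condition $D \cap Z = \emptyset$ is exactly the stated hypothesis $D \cap (B - \lfloor B \rfloor) = \emptyset$.

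With the hypotheses in place, Corollary~\ref{C:2} returns $d(C_\Omega(D,G)) \geq \deg C + l(\lfloor B \rfloor + Z - C) - l(\lfloor B \rfloor - C)$. Substituting $\lfloor B \rfloor + Z = B$ collapses the first term to $l(B-C)$ while leaving $l(\lfloor B \rfloor - C)$ untouched, which is exactly the claimed inequality.

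I do not expect a genuine obstacle: the entire content is already contained in Corollary~\ref{C:2} (equivalently, in Equation~(\ref{eq:2}) together with the ABZ bound). The one point that merits attention is the support condition. Taking $Z$ to be the fixed part $B - \lfloor B \rfloor$ is what makes the disjointness requirement as weak as possible, since it constrains only the fixed part rather than all of $\supp B$; this is precisely why the floor, being the minimal divisor that preserves $L(B)$, is the right choice. The benefit of the reformulation is that it removes the separate parameter $Z$ altogether, which is what makes the bound convenient to optimize, as discussed in Section~\ref{S:comp}.
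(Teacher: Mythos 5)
Your proposal is correct and is essentially identical to the paper's own proof: both take $Z = B - \lfloor B \rfloor$ and apply Corollary~\ref{C:2} with the pair $(\lfloor B \rfloor, Z)$, using the defining property $L(B) = L(\lfloor B \rfloor)$ of the floor to satisfy the hypothesis. You merely spell out the verification of the hypotheses in more detail than the paper does.
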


\begin{proof}
Let $B = \lfloor B \rfloor + Z$, $Z \geq 0$. The theorem follows by applying Corollary \ref{C:2}.
\end{proof}

The comment after Theorem \ref{T:ABZ} applies. If $B'$ is a divisor with $\lfloor B \rfloor \leq B' \leq B$
such that $L(B'-C) = L(\lfloor B \rfloor - C)$ and if $B-B'$ has smaller support than $B-\lfloor B \rfloor$
then Corollary \ref{C:2} will give the same bound as Theorem \ref{T:GST1p} but with a weaker condition for $D$.

%

\section{Mixed bounds} \label{S:ABZplus}

It is clear from the proof of Theorem \ref{T:ABZ} that the lower bound $d_{ABZ}$ can be improved if
we can show that $L(A-C) \neq L(A-D')$ or $L(B-C) \neq L(B-D').$
An interesting special case that can be explained in this way is the
bound $d_{GKL}$ by Garcia, Kim, and Lax \cite{GKL93}. In \cite{GST09}, G{\"u}neri, Stichtenoth, and Taskin present a
second bound $d_{GST2}$ that includes both the bound $d_{GKL}$ and the bound $d_{LM}$. The bound $d_{GST2}$ 
applies to codes $C_\Omega(D,G)$ and uses a decomposition $G=A+B+Z$ such that $L(A+Z)=L(A),$ $L(B+Z)=L(B)$, 
as in the bound $d_{LM}$. Moreover it is assumed that $B \leq A.$ We formulate the bound $d_{ABZ^+}$ as an 
unrestricted generalization that applies to any decomposition $G=A+B+Z.$
 
\begin{lemma} \label{L:ABZ+}
For a given divisor $C$, let $P$ be a point with $L(C)=L(C-P),$ 
and let $A' \leq A$ be a pair of divisors such that
\begin{enumerate}
\item $L(A'-C) \neq L(A'-C-P)$ and $L(A')=L(A'-P),$ and 
\item $L(A-C) \neq L(A-C-Q),$ for all $Q$ with $A' \leq A-Q \leq A.$ 
\end{enumerate}
Then $L(A-C) \neq L(A-D')$ for any divisor $D' \sim C+E$ such that $D' \cap P = \emptyset$
and $E \geq 0.$
\end{lemma}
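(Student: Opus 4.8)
The plan is to reduce the claimed inequality of spaces to a strict drop in dimension, and then to manufacture an explicit witness out of the pole structure at $P$.

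First I would make the reduction. The hypothesis gives $E \geq 0$ with $D' \sim C+E$, so there is a nonzero $f$ with $(f)=C+E-D'$, and multiplication by $f$ carries $L(A-D')$ isomorphically onto $L(A-C-E)$: indeed $g \in L(A-D')$ gives $(fg) \geq (C+E-D')+(D'-A)=C+E-A$. Since $A-C-E \leq A-C$ we have $L(A-C-E) \subseteq L(A-C)$, hence $l(A-D')=l(A-C-E) \leq l(A-C)$, and the assertion $L(A-C) \neq L(A-D')$ is precisely that this inclusion is proper, i.e. that some $\phi \in L(A-C)$ fails to lie in $L(A-C-E)$. Moreover $l(A-C-E)$ depends only on the class of $D'-C$, so I am free to replace $E$ by whatever effective divisor in $|D'-C|$ is convenient.

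Second, the witness. Condition (1) supplies $h \in L(A'-C) \setminus L(A'-C-P)$, a function whose pole at $P$ has the exact order $(C-A')_P$ allowed by $A'-C$; since $A' \leq A$ this $h$ already lies in $L(A-C)$. The base-point hypotheses $L(C)=L(C-P)$ and $L(A')=L(A'-P)$ say that $P$ is a base point of both $C$ and $A'$, whereas the member $D'$ of $|C+E|$ avoids $P$, so $P$ is \emph{not} a base point of $|C+E|$. Thus passing from $C$ to $C+E$ ``creates'' a pole at $P$, and the mechanism I would exploit is this asymmetry: a direct valuation count shows $h \notin L(A-C-E)$ as soon as the chosen representative $E$ carries multiplicity exceeding $(A-A')_P$ at $P$, which is where the freedom secured in the first step is meant to be spent.

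Third, the propagation, which I expect to be the main obstacle. When $P$ lies in $\supp(A-A')$ the exact-pole property of $h$ relative to $A'-C$ need no longer persist relative to $A-C$, and here condition (2) — that every point $Q \leq A-A'$ is a non-base point of $|A-C|$ — is the tool: for each such $Q$ there is a function with a pole of exact order at $Q$, and one wants to combine $h$ with these functions into a single element of $L(A-C)$ whose total pole divisor is not $\geq E$. Making this combination precise, i.e. ruling out that $E$ could be absorbed entirely into the base locus of $|A-C|$, is the delicate point; I anticipate it requires tracking valuations simultaneously at $P$ and at the points of $\supp(A-A')$, and that it is exactly where all three conditions on $A'$, together with $P \notin \supp D'$, are consumed. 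Once the combined witness is produced, the reduction of the first paragraph closes the argument.
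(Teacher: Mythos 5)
Your first-paragraph reduction is correct and is also how the paper reads the statement: since $l(A-D')=l(A-C-E')$ for every effective $E'\sim D'-C$, the claim is that $L(A-C-E)\subsetneq L(A-C)$. The proposal breaks down after that, because the ``freedom'' you plan to spend does not exist: nothing guarantees that the class $D'-C$ contains an effective representative with multiplicity exceeding $(A-A')_P$ at $P$. If $l(D'-C)=1$ the linear system consists of a single divisor $E$, which may be disjoint from $P$; far from being a degenerate corner, this is exactly the situation in which the lemma gets applied (in the proof of Theorem~\ref{T:ABZ'} the corresponding branch of the bound is invoked precisely when $E$ is disjoint from $P$). So your valuation count at $P$ covers only the case $E_P>(A-A')_P$, and the condition-(2) witnesses $g_Q\in L(A-C)\setminus L(A-C-Q)$ cover only the case $E\geq Q$ for some $Q\in\supp(A-A')$. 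The remaining case, $\supp E$ disjoint from $\{P\}\cup\supp(A-A')$, is the heart of the lemma, and your third paragraph defers it rather than proves it: a linear combination of $h$ and the $g_Q$ has no controlled behavior at the points of $\supp E$, so there is no way to force such a combination out of $L(A-C-E)$ by tracking poles at $P$ and $\supp(A-A')$ alone.

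What closes this case in the paper is not pole bookkeeping but a transfer through the linear equivalence. Write $(f)=C+E-D'$, and suppose $h\in L(A-C-E)$. Since $\supp E\cap\supp(A-A')=\emptyset$ one has $L(A'-C)\cap L(A-C-E)=L(A'-C-E)$ (this is the injectivity of the natural map $L(A'-C)/L(A'-C-E)\to L(A-C)/L(A-C-E)$), so $h\in L(A'-C-E)$. Now divide by $f$: $h/f\in L(A'-D')\subseteq L(A')=L(A'-P)$, and because $D'\geq 0$ with $P\notin\supp D'$ this forces $h/f\in L(A'-D'-P)$, i.e.\ $h\in L(A'-C-E-P)\subseteq L(A'-C-P)$, contradicting the choice of $h$. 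Note the mechanism: the base-point hypothesis $L(A')=L(A'-P)$ is applied on the $D'$ side, where the divisor is effective and avoids $P$, and this creates a constraint at $P$ on $L(A'-C-E)$ even though $E$ itself avoids $P$. This step -- not any choice of representative $E$ -- is what consumes ``$P\notin\supp D'$'', and it is the idea missing from your sketch; your witness $h$ is the right one, but proving $h\notin L(A-C-E)$ in the main case requires this inclusion argument rather than a valuation count.
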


\begin{proof}
The claim follows immediately from the second condition if $D' \geq C+Q$ for some
$Q$ with $A' \leq A-Q \leq A.$ We may therefore assume that $(D'-C) \cap (A-A') = \emptyset.$
With this assumption, the natural map
\[
L(A'-C) / L(A'-D') \longrightarrow L(A-C) / L(A-D')
\]
is well defined and injective. The first condition and $D' \cap P = \emptyset$
imply that $L(A'-D') = L(A'-P-D')$. And thus
\begin{align*}
l(A-C) - l(A-D') &~\geq~ l(A'-C) - l(A'-D') \\
                        &~=~  l(A'-P-C) - l(A'-P-D') + 1 ~>~ 0.
\end{align*}
\end{proof}

\begin{theorem}(ABZ+ bound) \label{T:ABZ+}
Let $G=K+C=A+B+Z$, for $Z \geq 0$, and let $D' \sim C+E$ be a divisor
such that $D' \cap Z = \emptyset.$ Define $\delta(A) \in \{0,1\}$ to be $1$ if there exists
a divisor $A' \leq A$ such that $\supp (A-A') \subseteq \supp (Z)$ and
\begin{enumerate}
\item $(\exists P \in Z)~$ $L(A'-C) \neq L(A'-C-P)$ and $L(A')=L(A'-P),$ and 
\item $(\forall Q \in Z)~$ $L(A-C) \neq L(A-C-Q).$ 
\end{enumerate}
Then
\[
\deg D' \geq l(A)-l(A-C) + l(B)-l(B-C) + \delta(A) + \delta(B).
\]
\end{theorem}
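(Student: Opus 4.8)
The plan is to obtain the displayed inequality as the ABZ bound of Theorem~\ref{T:ABZ} with its two monotonicity estimates $l(A-D')\leq l(A-C)$ and $l(B-D')\leq l(B-C)$ upgraded to strict inequalities precisely when $\delta(A)=1$, respectively $\delta(B)=1$. Thus each correction term will record a guaranteed drop by one in the relevant dimension, and that drop is exactly what Lemma~\ref{L:ABZ+} supplies. The unimproved quantity $l(A)-l(A-C)+l(B)-l(B-C)$ is the ABZ bound, so the whole argument is ``$d_{ABZ}$ plus two independent $+1$ gains.''

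First I would reproduce the opening of the proof of Theorem~\ref{T:ABZ}. Since $D'\sim C+E$ with $E\geq 0$, the space $\Omega(G-D')\simeq L(D'-C)$ is nonzero, so Lemma~\ref{L:ABZ} applies to the decomposition $G=A+B+Z$ (legitimately, because $Z\geq 0$ and $Z\cap D'=\emptyset$, the latter from $D'\cap Z=\emptyset$) and gives
\[
\deg D' \geq l(A)-l(A-D') + l(B)-l(B-D').
\]
From $A-D'\sim A-C-E$ with $E\geq 0$ one reads off $l(A-D')\leq l(A-C)$, and likewise $l(B-D')\leq l(B-C)$; substituting these is exactly the ABZ bound and leaves me to manufacture the two $\delta$ gains.

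The heart of the argument is the improvement step, where I would invoke Lemma~\ref{L:ABZ+}. Suppose $\delta(A)=1$ and take the witnesses $A'\leq A$ and $P\in\supp Z$ from its definition. I claim these satisfy the hypotheses of Lemma~\ref{L:ABZ+} for the divisor $A$: condition~(1) is verbatim the first $\delta(A)$-condition; the lemma's condition~(2), quantified over all $Q$ with $A'\leq A-Q\leq A$, follows from the $\delta(A)$-condition~(2) quantified over $Q\in\supp Z$, because $\supp(A-A')\subseteq\supp Z$ forces every such $Q$ to lie in $\supp Z$; and the lemma's requirement $D'\cap P=\emptyset$ follows from $D'\cap Z=\emptyset$ together with $P\in\supp Z$. (The extra clause $L(C)=L(C-P)$ displayed in Lemma~\ref{L:ABZ+} is never actually used in its proof, so the conclusion is available from conditions~(1)--(2) and $D'\cap P=\emptyset$ alone.) Lemma~\ref{L:ABZ+} then gives $L(A-C)\neq L(A-D')$, which with $l(A-D')\leq l(A-C)$ forces $l(A-D')\leq l(A-C)-1$, whence
\[
l(A)-l(A-D') \geq l(A)-l(A-C)+1 = l(A)-l(A-C)+\delta(A).
\]
When $\delta(A)=0$ this is just the trivial monotonicity estimate. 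Running the identical argument with $B$ in place of $A$ (the two sets of witnesses are independent, as each uses only $D'\cap Z=\emptyset$) yields $l(B)-l(B-D')\geq l(B)-l(B-C)+\delta(B)$, and feeding both improvements into the base inequality produces the claimed bound.

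I expect the main obstacle to be the precise matching of hypotheses between the definition of $\delta$ and Lemma~\ref{L:ABZ+}: one must verify that the quantifier ``$\forall Q\in Z$'' genuinely dominates the lemma's ``$\forall Q$ with $A'\leq A-Q\leq A$'', which rests entirely on $\supp(A-A')\subseteq\supp Z$, and one must confirm that $D'$ avoids the selected base point $P$. Once these are in place, the remainder is the linear-equivalence bookkeeping already carried out in Theorem~\ref{T:ABZ}.
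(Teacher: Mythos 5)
Your proof is correct and takes essentially the same route as the paper: the paper's own (very terse) proof likewise runs the argument of Theorem~\ref{T:ABZ} via Lemma~\ref{L:ABZ} and then invokes Lemma~\ref{L:ABZ+} to turn each $\delta=1$ into the strict improvement $l(A-D')\leq l(A-C)-1$ (and similarly for $B$). Your explicit matching of the $\delta$-conditions to the hypotheses of Lemma~\ref{L:ABZ+} --- including the observation that the clause $L(C)=L(C-P)$ is not actually needed (indeed it is even implied by condition~(1), via the lemma in Section~\ref{S:AGcodes}) --- simply spells out what the paper leaves implicit.
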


\begin{proof}
The proof is similar to that of Theorem \ref{T:ABZ}. With Lemma \ref{L:ABZ+}, it becomes
\begin{align*}
\deg D' &~\geq~ l(A)-l(A-D') + l(B)-l(B-D'), \\
        &~\geq~ l(A)-l(A-C)+\delta(A) + l(B)-l(B-C)+\delta(B).
\end{align*}
\end{proof}

\begin{corollary} (the bound $d_{GST2}$ \cite[Theorem 2.12]{GST09}) 
Let $F$ be an algebraic function field of genus $g$ with full constant
field ${\mathbb F}_q.$ Let $D = P_1 + \cdots + P_n$, where the $P_i$'s are distinct rational places of
the function field $F/{\mathbb F}_q$ and suppose that $\bA, B, Z \in \operatorname{Div}(F)$ satisfy the following conditions:
\begin{enumerate}
\item $(\supp(\bA) \cup \supp(B) \cup \supp(Z)) \bigcap \supp(D) = \emptyset,$
\item $\supp(\bA-B) \subseteq \supp(Z),$
\item $Z \geq 0, L(\bA) = L(\bA-Z)$ and $L(B)=L(B+Z+Q)$ for all $Q \in \supp(Z),$
\item $B+Z+P \leq \bA$ for some $P \in \supp(Z).$
\end{enumerate}
If $G = \bA+B$, then the minimum distance $d$ of the code $C_\Omega(D,G)$ satisfies
\[
d \geq \deg G - (2g-2) + \deg Z + 1.
\]
\end{corollary}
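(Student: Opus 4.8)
The plan is to read the statement off Theorem \ref{T:ABZ+} applied to the decomposition $G = A + B + Z$ in which $A := \bA - Z$ and $B, Z$ are the divisors given in the hypotheses; note $(\bA - Z) + B + Z = \bA + B = G$, that $Z \geq 0$ by (3), and that (1) forces $\supp Z \cap \supp D = \emptyset$, so any codeword support $D' \leq D$ is automatically disjoint from $Z$. First I would pin down the base term. Hypothesis (3) gives $L(\bA) = L(\bA - Z)$, i.e. $L(A+Z) = L(A)$, and also $L(B+Z) = L(B)$ via the squeeze $L(B) \subseteq L(B+Z) \subseteq L(B+Z+Q) = L(B)$ for any $Q \in \supp Z$. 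Substituting these two floor identities into Equation (\ref{eq:3}) kills the correction terms $l(A)-l(A+Z)$ and $l(B)-l(B+Z)$, exactly as in the proof of Corollary \ref{C:LM}, leaving $l(A)-l(A-C)+l(B)-l(B-C) = \deg C + \deg Z$.

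The heart of the matter is to produce the extra $+1$, i.e. to show $\delta(A)+\delta(B) \geq 1$, which I would do by proving $\delta(A)=1$ for $A = \bA - Z$. Let $P \in \supp Z$ be the distinguished place of hypothesis (4), so that $B + Z + P \leq \bA$, and take as auxiliary divisor $A' := B + P$. Then $A' \leq \bA - Z = A$ is exactly (4), and $A - A' = (\bA - B) - (Z + P)$ has support in $\supp Z$ by (2); thus $A'$ is admissible. The base-point condition $L(A') = L(A'-P)$ reads $L(B+P) = L(B)$, which again follows from $L(B) = L(B+Z+P)$ by the squeeze $L(B) \subseteq L(B+P) \subseteq L(B+Z+P)$.

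The two remaining inequalities are where the real work lies, and the device is the duality $i(M) = l(K - M) = l(G - M - C)$ coming from $K + C = G$, which converts indices of speciality on the $B$-side into dimensions on the $\bA$-side; for a place $P$ one has $L(D) \neq L(D-P)$ iff $i(D) = i(D-P)$. For condition (1) I must check $L(A'-C) \neq L(A'-C-P)$, i.e. $L(B+P-C) \neq L(B-C)$; since $i(B-C) = l(G-B) = l(\bA)$ and $i(B+P-C) = l(G-B-P) = l(\bA - P)$, this is equivalent to $l(\bA) = l(\bA - P)$, which holds because $P \in \supp Z$ is a base point of $\bA$ (squeeze $L(\bA - Z) \subseteq L(\bA - P) \subseteq L(\bA)$ using (3)). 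For condition (2) I must check $L(A-C) \neq L(A-C-Q)$ for every $Q \in \supp Z$, i.e. $L(\bA - Z - C) \neq L(\bA - Z - C - Q)$; the same dualization gives $i(\bA - Z - C) = l(B+Z)$ and $i(\bA - Z - C - Q) = l(B+Z+Q)$, so the inequality is equivalent to $L(B+Z) = L(B+Z+Q)$, which is precisely hypothesis (3). This establishes $\delta(A)=1$. (The hypothesis $L(C) = L(C-P)$ used inside Lemma \ref{L:ABZ+} also holds for each $P \in \supp Z$: write $G = (\bA - P) + B + P$ with $L(\bA) = L(\bA - P)$ and $L(B+P) = L(B)$ and invoke the earlier equivalence characterizing $L(C)=L(C-P)$.)

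Finally I would assemble the pieces: Theorem \ref{T:ABZ+} now yields $\deg D' \geq \deg C + \deg Z + 1$ for every $D' \sim C + E$ with $E \geq 0$ and $D' \cap Z = \emptyset$, and since the support $D'$ of any codeword of $C_\Omega(D,G)$ is such a divisor (a nonzero $\eta \in \Omega(G-D') \simeq L(D'-C)$ exists, forcing $D' \sim C + E$), the minimum distance satisfies $d \geq \deg C + \deg Z + 1$. Substituting $\deg C = \deg(G - K) = \deg G - (2g-2)$ gives the stated bound. The main obstacle, and the only genuinely non-formal step, is spotting the auxiliary divisor $A' = B + P$ and recognizing that all three linear-series conditions for $\delta(A)=1$ collapse, through the duality $i(M)=l(G-M-C)$, onto the two identities $l(\bA)=l(\bA-P)$ and $L(B+Z)=L(B+Z+Q)$ guaranteed by (3) and (4); once that translation is in hand the verification is routine.
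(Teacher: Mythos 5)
Your proof is correct and follows essentially the same route as the paper's: the paper likewise applies Theorem \ref{T:ABZ+} to the decomposition $G=(\bA-Z)+B+Z$ with the auxiliary divisor $A'=B+P$, and records exactly your translation of Condition 3 into the three hypotheses of that theorem. The only difference is level of detail: you spell out the duality computations $i(M)=l(G-C-M)$, the squeeze arguments, the base-term evaluation via Equation (\ref{eq:3}), and the implicit hypothesis $L(C)=L(C-P)$ of Lemma \ref{L:ABZ+}, all of which the paper's two-line proof leaves to the reader.
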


\begin{proof}
For $\bA=A+Z$, the theorem applies with $G=K+C=\bA+B=A+B+Z$ and $A'=B+P.$
We write Condition 3 in the form
\[
\begin{cases}
L(\bA) = L(\bA-P), \\
L(B+P)=L(B), \\
L(B+Z)=L(B+Z+Q).
\end{cases}
~~\Leftrightarrow~~
\begin{cases}
L(A'-C) \neq L(A'-C-P), \\
L(A') = L(A'-P), \\
L(A-C) \neq L(A-C-Q).
\end{cases}
\]
\end{proof}

Compared with the corollary, the theorem does not require the conditions $L(A+Z) = L(A)$
and $L(B+Z) = L(B)$, and the choice of $A' \leq A$ is not restricted to the choice $A' = B+P.$
The removal of the last restriction means that the argument can be applied with choices
$A' \leq A$ and $B' \leq B$ with a potential gain of $+2$ instead of $+1$. 

\begin{example}
For $G=K+C=26P+(3P+Q)$, the choice $A=10P, B=18P, Z=P+Q$ gives $d_{LM}=d_{GST}=d_{GST_2}=d_{GOP}+\deg Z = 6.$
The choice $A=13P, B=13P, Z=3P+Q$ and $A'=B'=11P$ gives $d_{ABZ} =6,$ $d_{ABZ+}=8.$
In all cases, the choices are optimal.
\end{example}


\section{The order bounds $d_B$ and $d_{ABZ'}$} \label{S:ABZcoset}

For the minimum distance of a code $C_\Omega(D,G)$, the ABZ bound (Theorem \ref{T:ABZ}) gives
\[
d( C_\Omega(D,G) ) ~\geq~ l(A)-l(A-C) + l(B)-l(B-C),  
\]
where $G = K+C = A+B+Z,$ such that $Z \geq 0$ and $D \cap Z = \emptyset$. For a point $P$ disjoint from $D$, 
if $L(A)=L(A-P)$ and $L(A-C) \neq L(A-C-P)$ then replacing $A$ with $A-P$ (and $Z$ with $Z+P$) improves the 
lower bound by $1$. It turns out that the lower bound improves by $1$ for any divisor $A-iP$, $i \geq 0$,
with the same properties. To see this we need to go back to the proof of the ABZ bound.
The proof uses that a nonzero codeword has support $D'$ such that $D' \sim C+E,$ for $E \geq 0$, and
\begin{align*}
\deg D' &~\geq~ l(A)-l(A-D') + l(B)-l(B-D'), \\
        &~\geq~ l(A)-l(A-C) + l(B)-l(B-C). 
\end{align*}
As in the previous section, we obtain improvements for the ABZ bound from estimates for the differences 
$l(A-C) -l(A-D')$ and $l(B-C)-l(B-D')$. Let $\Delta'(A) \subset \{ A - iP : i \geq 0 \}$ be the subset
of divisors $A'=A-iP$ with the property that $L(A')=L(A'-P)$ and $L(A'-C) \neq L(A'-C-P).$ 
We claim that, for a support $D'$ with both $D'$ and $E$ disjoint from $P$,
\[
l(A-C) - l(A-D') \geq |\Delta'(A)|.
\]
For $A'$ such that $L(A')=L(A'-P)$ and for $D'$ disjoint from $P$, 
$L(A'-D') = L(A'-D'-P).$ If moreover $L(A'-C) \neq L(A'-C-P)$ then
\[
l(A'-C) - l(A'-D') = l(A'-C-P) - l(A'-D'-P) + 1.
\]
For a general divisor $A'$ and for $E$ disjoint from $P$, 
\[
l(A'-C) - l(A'-D') \geq l(A'-C-P) - l(A-D'-P).
\]
Therefore,
\begin{multline*}
l(A-C) - l(A-D') = \sum_{i \geq 0} \,[\, (\,l(A-C-iP) - l(A-D'-iP)\,) \\
 - (\, l(A-C-iP-P) - l(A-D'-iP-P)\,) \,]\, \geq |\Delta'(A)|.
\end{multline*}
We give a first formulation of the $ABZ'$ bound. 

\begin{theorem}($ABZ'$ bound) \label{T:ABZ'}
Let $d_{ABZ}$ be the $ABZ$ bound for $d(C_\Omega(D,G))$ obtained with a choice of divisors $A, B$ and $Z.$ 
For a rational point $P$ disjoint from $D$,
\[
d (C_\Omega(D,G)) \geq \min \{ d_{ABZ} + |\Delta'(A)| + |\Delta'(B)|,  d (C_\Omega(D,G+P)) \}.
\] 
\end{theorem}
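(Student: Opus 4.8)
The plan is to run the standard order-bound dichotomy on a nonzero codeword $c \in C_\Omega(D,G)$ according to whether or not it lies in the subcode $C_\Omega(D,G+P)$. If $c \in C_\Omega(D,G+P)$, then $\wt(c) \geq d(C_\Omega(D,G+P))$ and $c$ is accounted for by the second term of the minimum. The real content is the complementary case $c \in C_\Omega(D,G) \setminus C_\Omega(D,G+P)$, where I aim to show $\wt(c) \geq d_{ABZ} + |\Delta'(A)| + |\Delta'(B)|$; taking the minimum over the two cases and over all nonzero $c$ then yields the stated inequality, with the convention $d(\{0\}) = +\infty$ handling a degenerate subcode.

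First I would recall, as in the proof of Theorem~\ref{T:ABZ}, that a nonzero $c$ with support $D'$ is the residue vector of some differential $\eta \in \Omega(G-D')$, so that $(\eta) = G - D' + E$ with $E \geq 0$ and $D' \sim C + E$. Since $P$ is disjoint from $D$ and $D' \leq D$, we automatically get $P \notin \supp D'$. The key step is the valuation dictionary: because $v_P(G) = 0$ and $v_P(D') = 0$, one has $v_P((\eta)) = v_P(E)$, and comparing $G - D'$ with $G + P - D$ away from $P$ shows that $\eta \in \Omega(G+P-D)$ holds exactly when $v_P(E) \geq 1$. Consequently $c \in C_\Omega(D,G+P)$ if and only if some realizing differential has $v_P(E) \geq 1$; hence in the complementary case \emph{every} realizing $\eta$ satisfies $v_P(E) = 0$, i.e. $P \notin \supp E$. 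This is precisely the hypothesis — both $D'$ and $E$ disjoint from $P$ — under which the telescoping estimates established just before the theorem statement are valid.

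With those estimates in hand, namely $l(A-C) - l(A-D') \geq |\Delta'(A)|$ and $l(B-C) - l(B-D') \geq |\Delta'(B)|$, together with the inequality $\deg D' \geq l(A) - l(A-D') + l(B) - l(B-D')$ from Lemma~\ref{L:ABZ}, I would split $l(A) - l(A-D') = [\,l(A) - l(A-C)\,] + [\,l(A-C) - l(A-D')\,]$ and likewise for $B$. Summing gives $\wt(c) = \deg D' \geq d_{ABZ} + |\Delta'(A)| + |\Delta'(B)|$, which is exactly the first term of the minimum, completing the complementary case.

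I expect the main obstacle to be the valuation dictionary rather than the final arithmetic. One must verify that non-membership in $C_\Omega(D,G+P)$ genuinely rules out every realizing $\eta$ having a zero of $E$ at $P$; the delicate direction is promoting $v_P(E) \geq 1$ into the full divisor inequality $(\eta) \geq G+P-D$, which uses $E \geq 0$ together with $D \geq D'$ to control all the other places. Once that equivalence is pinned down, the remaining steps are the bookkeeping already carried out in the discussion preceding the statement.
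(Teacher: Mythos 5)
Your proof is correct and takes essentially the same route as the paper: the paper's own proof of Theorem~\ref{T:ABZ'} is exactly the dichotomy on whether the effective divisor $E$ in $(\eta)=G-D'+E$ meets $P$, with all the technical content (the telescoping estimates for $l(A-C)-l(A-D')$ and $l(B-C)-l(B-D')$) residing in the discussion preceding the theorem, just as you invoke it. One slip to fix: the claim $v_P(G)=0$ is unwarranted, since $P$ is only assumed disjoint from $D$, not from $G$ (indeed in the intended one- and two-point applications $P$ typically lies in $\supp G$); the correct statement is $v_P((\eta))=v_P(G)+v_P(E)$, but your dictionary survives because the condition $(\eta)\geq G+P-D$ at $P$ reads $v_P(G)+v_P(E)\geq v_P(G)+1$, so the $v_P(G)$ terms cancel and $\eta\in\Omega(G+P-D)$ if and only if $v_P(E)\geq 1$, as you assert. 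Finally, the direction you flag as delicate is in fact the easy one: $E\geq P$ together with $D'\leq D$ gives $(\eta)=G-D'+E\geq G+P-D$ immediately, and the paper sidesteps the full equivalence by casing on $E$ directly, needing only this implication, whereas your casing on code membership needs its contrapositive as well --- a logically equivalent but slightly more labored phrasing of the same argument.
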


\begin{proof} The first argument in the minimum is a lower bound when $E$ is disjoint from $P$, and the second argument 
is a lower bound when $E$ is not disjoint from $P$. 
\end{proof}

We will give a different formulation in Section \ref{S:mainthm}. An advantage of this formulation is the easy comparison with
the $ABZ$ bound for the same choice of $A, B$ and $Z$. On the other hand, the best results for the $ABZ$ bound and the $ABZ'$ 
bound are in general obtained with different choices for $A, B$ and $Z$. The formulation in Section \ref{S:mainthm} will be
easier to compare with other order bounds and easier to optimize. \\



The special case $Z=0$ of the order bound $d_{ABZ'}$ returns the Beelen bound $d_B$ (\cite{DuuPar09}, or Corollary \ref{C:ABZ'2}). 
The special case $Z=0$ of the floor bound $d_{ABZ}$ returns the Goppa bound $d_{GOP}$.
The bounds $d_{GOP},d_B$ are therefore in the same relation as the bounds $d_{ABZ}, d_{ABZ'}$ and follow
from the latter as the special case $Z=0$. 
\[
Z=0:~~d_{GOP} \longrightarrow d_B \qquad \qquad Z \geq 0:~~d_{ABZ} \longrightarrow d_{ABZ'} 
\] 

\begin{example}
The bounds in Table \ref{T:A'B'} all use a choice $A= B = 13 P$ (so that $Z=2P+2Q, P+2Q, P+Q,$ respectively).
In all cases this is an optimal choice. The gains for $d_{ABZ^+}, d_{ABZ'}$ in the second row use $A', B' \in \{ 11P \}$.
The gains for $d_{ABZ^+}, d_{ABZ'}$ in the last row use $A', B' \in \{9P, 11P\}$. In partciular,
$d_{ABZ'}=8$ uses $d_{ABZ'} = \min \{ 4+2+2, 8 \} = 8.$
\end{example}

\begin{table}[th]
\begin{center}
\begin{tabular}{l@{\hspace{18mm}}r@{\hspace{3mm}}r@{\hspace{3mm}}r@{\hspace{9mm}}r@{\hspace{3mm}}r@{\hspace{9mm}}rrr}
\toprule
\multicolumn{1}{l}{Code}  &$d_{LM}$ &{$d_{GST}$}  &$d_{ABZ}$  &{$d_{GST2}$}   &{$d_{ABZ^+}$}   &{$d_{ABZ'}$}   \\
\midrule
$C_\Omega(D,G=28P+2Q)$    &8   &8  &8  &8  &8  &8  \\
$C_\Omega(D,G=27P+2Q)$       &6   &6  &6  &6  &8  &8  \\
$C_\Omega(D,G=27P+Q)$     &4   &4  &4  &4  &6  &8  \\
\bottomrule 
\end{tabular}
\end{center}
\caption{Suzuki curve over $\mathbb{F}_8$}
\label{T:A'B'}
\end{table} 

The bound $d_{GKL}$ is stated in terms of $H$-Weierstrass gaps at a point $P$. It is a special case
of the bound $d_{GST2}$ \cite[Corollary 2.13]{GST09}. We formulate the bound and give two different proofs, 
showing that it is also a special case of the bound $d_B$. 

\begin{theorem} (The bound $d_{GKL}$ \cite{GKL93})
Let $H$ be a divisor and let $P$ be a rational point such that, for integers $\alpha, \beta, t$ with $\beta \geq \alpha + t$ and $t \geq 1$,
\[
L(H + \alpha P + tP) = L(H + \alpha P - P), \quad L(H + \beta P)  = L(H + \beta P - tP).
\] 
Then, for $G = 2H + (\alpha+\beta-1)P$ and for $D$ disjoint from $H$ and $P$, $d(C_\Omega(D,G) \geq \deg G - (2g-2) + t+ 1.$ 
\end{theorem}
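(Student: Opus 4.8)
The plan is to recognize that the $d_{GKL}$ bound is, up to relabeling, an instance of the $d_{GST2}$ corollary already established, and then to translate the Weierstrass-gap hypotheses into the $\bar A, B, Z$ conditions of that corollary. First I would set $Z = tP$ (so $\deg Z = t$ and $Z \geq 0$), and split $G = 2H + (\alpha+\beta-1)P$ into $G = \bar A + B$ by taking
\[
\bar A = H + \beta P, \qquad B = H + (\alpha-1)P.
\]
With this choice $\bar A + B = 2H + (\alpha+\beta-1)P = G$, and the two Weierstrass hypotheses become exactly statements about $\bar A$ and $B$: the second gap condition $L(H+\beta P) = L(H+\beta P - tP)$ reads $L(\bar A) = L(\bar A - Z)$, and the first condition $L(H+\alpha P + tP) = L(H+\alpha P - P)$ can be rewritten as $L(B + Z + P) = L(B)$, which (once the monotonicity of $L$ along $P$ is used) yields $L(B) = L(B + Z + Q)$ for the single point $Q = P \in \supp(Z)$.

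Next I would verify the structural conditions of the $d_{GST2}$ corollary. The support condition (1) holds since $D$ is chosen disjoint from $H$ and $P$, and $\bar A, B, Z$ are all supported on $\supp(H) \cup \{P\}$. Condition (2), $\supp(\bar A - B) \subseteq \supp(Z)$, holds because $\bar A - B = (\beta - \alpha + 1)P$ is supported on $\{P\} = \supp(Z)$. For condition (4) I would check $B + Z + P \leq \bar A$, i.e. $H + (\alpha-1)P + tP + P = H + (\alpha + t)P \leq H + \beta P$, which is exactly the hypothesis $\beta \geq \alpha + t$. Since $\supp(Z) = \{P\}$ is a single point, the universally-quantified gap condition in (3) reduces to the single instance at $Q = P$ that I established above. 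The conclusion of the corollary then reads $d \geq \deg G - (2g-2) + \deg Z + 1 = \deg G - (2g-2) + t + 1$, which is the claimed bound.

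The main obstacle I anticipate is the bookkeeping in condition (3), specifically confirming that $L(H+\alpha P + tP) = L(H+\alpha P - P)$ really does encode \emph{both} the anti-gap information $L(B) = L(B+Z+P)$ and nothing stronger than what condition (3) of the corollary demands. The subtlety is that the single equality $L(H+\alpha P+tP)=L(H+\alpha P - P)$ packs together the chain $L(H+\alpha P - P) = L(H+\alpha P) = \cdots = L(H+\alpha P + tP)$, so one must be careful that the two separate requirements $L(B) = L(B+Z+Q)$ and the implicit normalization used in rewriting $\bar A$ are both consequences of this one hypothesis rather than independent assumptions. Because $\supp(Z)$ is a single point, however, the quantifiers collapse and this verification is routine. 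The paper also promises a second proof realizing $d_{GKL}$ as a special case of the Beelen bound $d_B$; that proof would instead exhibit the filtration step $C_\Omega(D,G) \supset C_\Omega(D,G+P)$ and read off the gap contribution of size $t$ directly from the order-bound machinery of Section~\ref{S:ABZcoset}, but the $d_{GST2}$ route above is the more immediate one.
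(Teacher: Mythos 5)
Your proposal is correct and takes essentially the same route as the paper's own first proof, which is likewise the reduction of $d_{GKL}$ to the $d_{GST2}$ machinery: the paper applies the ABZ$^+$ bound directly with the decomposition $G=(H+\alpha P-P)+(H+\beta P-tP)+tP$ and auxiliary divisor $H+\alpha P$, and unwinding your instantiation of the $d_{GST2}$ corollary (whose proof sets $A=\bar A-Z$ and $A'=B+P$) yields exactly that same decomposition with the roles of $A$ and $B$ swapped. Your verification of conditions (1)--(4), including the observation that the hypothesis $L(H+\alpha P+tP)=L(H+\alpha P-P)$ is literally the required equality $L(B+Z+P)=L(B)$ and that $\beta\geq\alpha+t$ gives $B+Z+P\leq\bar A$, is complete, so there is nothing to add.
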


(The reduction $d_{GKL} \rightarrow d_{GST2}$) We apply the $ABZ+$ bound (Theorem \ref{T:ABZ+}). 
With $G = A + B + Z = (H + \alpha P - P) + (H+ \beta P - tP) + tP$ and $B' = H + \alpha P \leq B$,
we find $d(C_\Omega(D,G)) \geq \deg G - (2g-2) + t+1.$ \\

(The reduction $d_{GKL} \rightarrow d_{B}$) We apply the $ABZ'$ bound (Theorem \ref{T:ABZ'}) with $Z=0$. 
For $i = 0, \ldots, t,$ let $G+iP = A+B+Z = (H+ \alpha P + iP - P) + (H+\beta P) + 0.$ Then 
\[
d(C_\Omega(D,G+iP)) \geq \min \,\{\, \deg G - (2g-2) + i + |\Delta'(A)| + |\Delta'(B)|, d(C_\Omega(G+iP+P)) \,\}.
\]
With $H + \alpha P + iP, \ldots, H+ \alpha P + (t-1)P, H + \beta P \in \Delta'(B)$, we obtain $|\Delta'(B)| \geq t-i+1$, and thus
\[
d(C_\Omega(D,G)) \geq \min \,\{\, \deg G - (2g-2) + t+1, d(C_\Omega(D,G+tP+P)) \,\} \geq \deg G - (2g-2) + t+1.
\]
   
\section{Base point free semigroups} \label{S:semigps}

We will discuss in Section \ref{S:order} the various order bounds. 
First we introduce, for divisors $C$ and for sets of points $S$ and $S'$, subsets of divisor classes $\Gamma(C;S,S')$.
The sets capture the desired coding theory parameters in the language of divisors. Together with the results in the next section they
allow us to present all order bounds in a unified framework. \\

Let $X/\ff$ be a curve over a field $\ff$ and let $\Pic(X)$ be the
group of divisor classes. Let $\Gamma = \{ A : L(A) \neq 0 \}$ be
the semigroup of effective divisor classes. For a given rational
point $P \in X$, let $\Gamma_P = \{ A : L(A) \neq L(A-P) \}$ be
the semigroup of effective divisor classes with no base point at
$P$. For a finite set of points $S$, let $\Gamma_S = \cap_{P \in S} \Gamma_P$. 
By convention, let $\Gamma_\emptyset = \Gamma$.


\begin{definition}
For a divisor class $C$ and for finite sets of rational points $S$ and $S'$, let
\begin{align*}
\Gamma(C;S,S') ~=~ &\{ A : A \in \Gamma_S \text{ and } A-C \in \Gamma_{S'} \}, \\
\gamma(C;S,S') ~=~ &\min \{ \deg A : A \in \Gamma(C;S,S') \}.
\end{align*}
\end{definition}

From the definition it is clear that $\Gamma(C;S,S')$ lives inside the semigroup $\Gamma_S$. 
Moreover, $\Gamma_{S\cup S'}$ acts on $\Gamma(C;S,S')$ via divisor addition, 
and for $S' \subseteq S$, $\Gamma(C;S,S')$ is a semigroup ideal in $\Gamma_S$.
For the connection to coding theory, we have the following interpretation.

\begin{lemma} \label{L:codegammas} (\cite[Lemma 4.3, Lemma 4.2]{DuuPar09}) 
For a given set of rational points $S$, and for algebraic geometric codes defined
with a divisor $D=P_1+\cdots+P_n$ disjoint from $S$,
\begin{align*}
&d(C_L(D,G)) \geq \gamma(D-G;S,\emptyset). \\ 
&d(C_\Omega(D,G)) \geq \gamma(G-K;S,\emptyset). 
\end{align*}
Moreover, for a point $P$,
\begin{align*}
\min \wt (C_L(D,G) \backslash C_L(D,G-P)) ~\geq~ &\gamma(D-G;S,P). \\
\min \wt (C_\Omega(D,G) \backslash C_\Omega(D,G+P)) ~\geq~ &\gamma(G-K;S,P).
\end{align*}
\end{lemma}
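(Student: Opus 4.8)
The plan is to unwind the definitions of $\gamma(C;S,S')$ and of the Goppa codes so that the statements reduce to the support-divisor analysis already used in the proof of the base point bound $d_{BPT}$ and in Theorem \ref{T:ABZ}. I will treat the code $C_\Omega(D,G)$; the statement for $C_L(D,G)$ is dual and follows by the equivalence $C_\Omega(D,G) \sim C_L(D,G^\ast)$ with $C = G-K = D-G^\ast$, so both reduce to the same claim about the divisor class $C$.

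First I would recall the key structural fact: a nonzero codeword $c \in C_\Omega(D,G)$ with support $D' \leq D$ exists only if there is a nonzero differential $\eta \in \Omega(G-D')$, equivalently (via $G = K+C$) only if $L(D'-C) \neq 0$, i.e. $D'-C \in \Gamma$, i.e. $D' \in \Gamma$ with $D'-C$ effective as a class. In other words, every nonzero codeword gives an effective $D' \sim C + E$ with $E \geq 0$ and $\supp D' \subseteq \supp D$. This is exactly the condition tying the Hamming weight $\wt(c) = \deg D'$ to membership of $D'$ in a set $\Gamma(C;\,\cdot\,,\,\cdot\,)$.

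For the first pair of inequalities I would take $D' = \supp c$ and argue that $D' \in \Gamma(C;S,\emptyset)$. Since $D$ is disjoint from $S$ and $\supp D' \subseteq \supp D$, the divisor $D'$ has no point of $S$ in its support; hence for each $P \in S$ we have $L(D') \neq L(D'-P)$ automatically (removing a point not in the support strictly drops the space, because $D'$ is a sum of distinct points disjoint from $P$), so $D' \in \Gamma_S$. Combined with $D'-C \in \Gamma$ from the previous paragraph, this gives $D' \in \Gamma(C;S,\emptyset)$, whence $\wt(c) = \deg D' \geq \gamma(C;S,\emptyset)$, and minimizing over nonzero $c$ yields $d(C_\Omega(D,G)) \geq \gamma(G-K;S,\emptyset)$.

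For the refined statement about $\min \wt\bigl(C_\Omega(D,G)\backslash C_\Omega(D,G+P)\bigr)$, the extra input is that a word $c$ lying in $C_\Omega(D,G)$ but \emph{not} in $C_\Omega(D,G+P)$ forces its associated differential $\eta$ to not lie in $\Omega(G+P-D')$, i.e. $L(D'-C) \neq L(D'-C-P)$, so $D'-C \in \Gamma_P$. Together with $D' \in \Gamma_S$ as before, this places $D'$ in $\Gamma(C;S,P)$ and gives the bound $\gamma(G-K;S,P)$. The main obstacle to watch is the bookkeeping at points of $\supp D \cap \supp G$ when the encoding map $\alpha_\Omega$ is modified as described after the code definitions; I would either assume $G$ (hence $C$) disjoint from $D$ so that residues and the differential characterization are literal, or invoke the cited local modification of \cite{TsfVla07} to keep the correspondence $\supp c \leftrightarrow D'$ intact. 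Once that is handled the argument is a direct translation and no genuine computation remains.
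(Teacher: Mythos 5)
Your proposal is correct and takes essentially the same approach as the paper: the paper likewise characterizes a codeword with support $D'$ by the nonvanishing of $L(G-D+D')/L(G-D)$ (resp.\ $\Omega(G-D')/\Omega(G) \simeq L(D'-C)/L(-C)$), uses effectiveness of $D'$ and its disjointness from $S$ (constants lie in $L(D')$ but not in $L(D'-P)$) to get $D' \in \Gamma_S$, and uses the nonvanishing (resp.\ the coset condition, giving $L(D'-C) \neq L(D'-C-P)$) to get $D'-C \in \Gamma$ (resp.\ $\Gamma_P$). The only cosmetic difference is that the paper proves the $C_L$ case directly and says the $C_\Omega$ case is analogous with $D-G$ replaced by $G-K$, while you prove the $C_\Omega$ case and transfer to $C_L$ via the weight-preserving code equivalence.
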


The case of a general set $S'$ follows directly from the lemma.

\begin{proposition}
For given sets of rational points $S$ and $S'$, and for algebraic geometric codes defined
with a divisor $D=P_1+\cdots+P_n$ disjoint from $S$,
\begin{align*}
\min \wt (C_L(D,G) \backslash \bigcup_{P \in S'} C_L(D,G-P)) ~\geq~ &\gamma(D-G;S,S'). \\
\min \wt (C_\Omega(D,G) \backslash \bigcup_{P \in S'} C_\Omega(D,G+P))  ~\geq~ &\gamma(G-K;S,S').
\end{align*}
Here it is agreed, for the case $S' = \emptyset$, that an empty union of vector spaces is the null space.
\end{proposition}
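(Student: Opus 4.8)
The plan is to deduce the proposition directly from Lemma~\ref{L:codegammas} by tracking, for a codeword lying in the stated complement, the divisor cut out by its support. I will carry out the argument for $C_\Omega(D,G)$ with $C = G-K$; the statement for $C_L(D,G)$ with $C = D-G$ is entirely parallel (replace $G+P$ by $G-P$ and the defining differential by the defining function) and can also be reduced to the $C_\Omega$ case via the equivalence of codes sharing a divisor class recalled in Section~\ref{S:AGcodes}. Fix a nonzero codeword $c \in C_\Omega(D,G) \setminus \bigcup_{P \in S'} C_\Omega(D,G+P)$ and let $D' \le D$ be its support, so that $\deg D' = \wt(c)$.

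First I would verify the two defining conditions for $D' \in \Gamma(C;S,S')$. Since $D'$ is effective and supported on points of $D$, which is disjoint from $S$, each $P \in S$ satisfies $1 \in L(D') \setminus L(D'-P)$; hence $L(D') \neq L(D'-P)$ and $D' \in \Gamma_S$. This is exactly the reasoning behind the $S$-part of Lemma~\ref{L:codegammas} and it does not involve $S'$. For the second condition, recall that $c$ is realized by a differential $\eta \in \Omega(G-D') \simeq L(D'-C)$. For a fixed $P \in S'$, the hypothesis $c \notin C_\Omega(D,G+P)$ means precisely that $\eta \notin \Omega(G+P-D') \simeq L(D'-C-P)$, so $L(D'-C) \neq L(D'-C-P)$, i.e. $D'-C \in \Gamma_P$. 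This is the content of the single-point part of Lemma~\ref{L:codegammas}, read now as a statement about the particular divisor $D'$ rather than about a minimum over codewords.

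The crucial point --- and the only thing beyond the lemma --- is that this witness divisor $D'$ depends only on $c$ and not on the chosen $P \in S'$. Thus the membership $D'-C \in \Gamma_P$ holds simultaneously for every $P \in S'$, and therefore $D'-C \in \bigcap_{P\in S'}\Gamma_P = \Gamma_{S'}$ by definition of $\Gamma_{S'}$. Combining this with $D' \in \Gamma_S$ gives $D' \in \Gamma(C;S,S')$, whence $\wt(c) = \deg D' \geq \gamma(C;S,S')$; taking the minimum over all such $c$ yields the stated bound. The convention that an empty union is the null space makes the case $S' = \emptyset$ read as $\Gamma_{S'} = \Gamma$ with $c$ merely required to be nonzero, recovering the first part of Lemma~\ref{L:codegammas}. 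The main obstacle is therefore not computational but one of bookkeeping: one must phrase the lemma as producing a single support divisor meeting each individual constraint, so that the union of excluded subcodes on the coding side corresponds exactly to the intersection $\bigcap_{P\in S'}\Gamma_P$ on the divisor side.
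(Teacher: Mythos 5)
Your proof is correct and is essentially the paper's own argument: the paper handles $S'=\emptyset$ by the first part of Lemma~\ref{L:codegammas} and reduces the case $S'\neq\emptyset$ to the second part via the identity $\bigcap_{P\in S'}\Gamma(C;S,P)=\Gamma(C;S,S')$, which is exactly your observation that the single support divisor $D'$ of a codeword in the complement lies in $\Gamma(C;S,P)$ for every $P\in S'$ simultaneously, hence in $\Gamma(C;S,S')$. One cosmetic remark: your phrase ``means precisely'' overstates an equivalence (a codeword lying in $C_\Omega(D,G+P)$ only guarantees that \emph{some} representative differential lies in $\Omega(G+P-D')$, not necessarily your fixed $\eta$), but your argument uses only the valid direction, namely that $\eta\in\Omega(G+P-D')$ would force $c\in C_\Omega(D,G+P)$.
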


\begin{proof}
The case $S' = \emptyset$ is the first part of the lemma.  The case $S' \neq \emptyset$ reduces to the second part of the lemma
if we use $\cap_{P \in S'} \Gamma(C;S,P) = \Gamma(C;S,S').$
\end{proof}

The first case of Lemma \ref{L:codegammas} is particularly important for our approach to order bounds and for that reason we recall 
the proof. There exists a nonzero word in $C_L(D,G)$ with support in $A$, 
for $0 \leq A \leq D$, if and only if $L(G-D+A)/L(G-D) \neq 0.$ Since $S$ is disjoint from $D$, it is also 
disjoint from $A$. Since $A$ is effective, $L(A)$ contains the constants, but $L(A-P)$ does not, for all $P \in S$.
So that $A \in \Gamma_S$. It is clear that $L(A-(D-G)) \neq 0$ and thus $A \in \Gamma(D-G;S,\emptyset).$ 
There exists a nonzero word in $C_\Omega(D,G)$ 
with support in $A$, for $0 \leq A \leq D$, if and only if $\Omega(G-A)/\Omega(G) \neq 0$ if and only if
$L(K-G+A) / L(K-G) \neq 0.$ The rest of the proof is similar to the previous case with $D-G$ replaced by $G-K$. \\

The bounds in Lemma \ref{L:codegammas} can be used for codes with $L(-C) = L(G-D) = 0$ or $L(-C) = L(K-G) = 0$. 
This includes all codes with a positive designed minimum distance. For codes with $L(-C) \neq 0$, we 
see that $0 \in \Gamma(C;S,\emptyset)$ and $\gamma(C;S,\emptyset)=0$. In order to obtain nontrivial lower bounds 
for such codes the set $\Gamma(C;S,\emptyset)$ should be replaced with the subset
\[
\Gamma^\ast(C;S,\emptyset) = \{ A \in \Gamma_S : L(A-C) \neq L(-C) \}, 
\]
and the lower bound $\gamma(C;S,\emptyset)$ for the minimum distance with $\gamma^\ast(C;S,\emptyset)$, where the latter
denotes the minimal degree for a divisor $A \in \Gamma^\ast(C;S,\emptyset).$ Details can be found in \cite[Section 4]{DuuPar09}. 
Proposition \ref{P:rec} and Theorem \ref{T:S'} play a key role in the definition of the order bounds in
Section \ref{S:order}.


\begin{proposition} \label{P:rec}
For $P \not \in S'$,
\[
\Gamma(C;S,S') = \Gamma(C;S,S' \cup \{P\}) \cup \Gamma(C+P;S,S').
\]
\end{proposition}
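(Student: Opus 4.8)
The plan is to peel off the condition $A \in \Gamma_S$, which appears identically in all three sets, and reduce the statement to an identity about the single semigroup $\Gamma_{S'}$. Writing $B = A - C$, membership of $A$ in $\Gamma(C;S,S')$, in $\Gamma(C;S,S'\cup\{P\})$, and in $\Gamma(C+P;S,S')$ is equivalent (given $A \in \Gamma_S$) to $B \in \Gamma_{S'}$, to $B \in \Gamma_{S'} \cap \Gamma_P$, and to $B - P \in \Gamma_{S'}$, respectively. Thus the proposition is equivalent to the set identity
\[
\Gamma_{S'} = (\Gamma_{S'} \cap \Gamma_P) \cup \{ B : B - P \in \Gamma_{S'} \},
\]
which I would prove by two inclusions, using the hypothesis $P \notin S'$ precisely to guarantee $P \neq Q$ for every $Q \in S'$.

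For the inclusion $\supseteq$, the first piece $\Gamma_{S'} \cap \Gamma_P \subseteq \Gamma_{S'}$ is immediate, so the content is $\{B : B-P \in \Gamma_{S'}\} \subseteq \Gamma_{S'}$. Fix $Q \in S'$ with $B - P \in \Gamma_Q$, i.e. $l(B-P) > l(B-P-Q)$. Since $P \neq Q$ one has $L(B-P-Q) = L(B-P) \cap L(B-Q)$, so the natural inclusion induces an injection $L(B-P)/L(B-P-Q) \hookrightarrow L(B)/L(B-Q)$, whence $l(B) - l(B-Q) \geq l(B-P) - l(B-P-Q) \geq 1$ and $B \in \Gamma_Q$. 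As this holds for every $Q \in S'$, we conclude $B \in \Gamma_{S'}$.

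For the inclusion $\subseteq$, take $B \in \Gamma_{S'}$ and split on whether $P$ is a base point of $|B|$. If $B \in \Gamma_P$, then $B \in \Gamma_{S'} \cap \Gamma_P$. If $B \notin \Gamma_P$, then $L(B) = L(B-P)$, and for each $Q \in S'$ the chain $L(B-P-Q) \subseteq L(B-Q) \subsetneq L(B) = L(B-P)$, with the strict step coming from $B \in \Gamma_Q$, forces $L(B-P-Q) \subsetneq L(B-P)$, i.e. $B - P \in \Gamma_Q$; hence $B - P \in \Gamma_{S'}$ and $B$ lies in the second piece.

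The main obstacle is the injectivity step in the $\supseteq$ direction: it is exactly here that $P \notin S'$ is indispensable, since the identification $L(B-P-Q) = L(B-P) \cap L(B-Q)$, and with it the injective map of quotients, breaks down when $P = Q$. This monotonicity of gap dimensions is the same mechanism already used in Lemma \ref{L:ABZ}, while the reverse inclusion is only the elementary base-point dichotomy, so no further machinery is required.
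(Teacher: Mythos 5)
Your proof is correct, and its skeleton (two inclusions, with the forward one split on whether $B \in \Gamma_P$) coincides with the paper's; the genuine difference lies in how the reverse inclusion is justified. Peeling off the common condition $A \in \Gamma_S$ and reducing to the identity $\Gamma_{S'} = (\Gamma_{S'} \cap \Gamma_P) \cup \{ B : B-P \in \Gamma_{S'} \}$ is legitimate, and your forward inclusion supplies exactly the detail the paper leaves implicit in its implication $L(D-C) = L(D-C-P) \Rightarrow D \in \Gamma(C+P;S,S')$, namely the chain $L(B-P-Q) \subseteq L(B-Q) \subsetneq L(B) = L(B-P)$. For the reverse inclusion the paper argues in one line: since $P \notin S'$, the class $P$ itself lies in $\Gamma_{S'}$ (constants lie in $L(P)$ but not in $L(P-Q)$ for $Q \neq P$), and the semigroup property of $\Gamma_{S'}$ then gives $B = (B-P)+P \in \Gamma_{S'}$ whenever $B-P \in \Gamma_{S'}$. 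You instead prove this monotonicity by pure linear algebra: $L(B-P-Q) = L(B-P) \cap L(B-Q)$ for $P \neq Q$, hence an injection $L(B-P)/L(B-P-Q) \hookrightarrow L(B)/L(B-Q)$. Both are sound. The paper's route is shorter and stays inside the semigroup framework of Section \ref{S:semigps}, at the cost of invoking closure of $\Gamma_Q$ under addition of divisor classes, a multiplicative fact (it rests on multiplying functions with exact vanishing orders at $Q$); your route never leaves the linear algebra of Riemann--Roch spaces, reuses the quotient-injection mechanism of Lemma \ref{L:ABZ}, and pinpoints that the sole role of the hypothesis $P \notin S'$ is to force $P \neq Q$ --- which is also precisely where the paper needs it, since its claim $P \in \Gamma_{S'}$ would fail for $P \in S'$ once $g \geq 1$.
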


\begin{proof}
($\subseteq$) Let $D \in \Gamma(C;S,S')$. For $P \not \in S',$
\begin{align*}
&L(D-C) \neq L(D-C-P) ~\Rightarrow~ D \in \Gamma(C;S,S' \cup \{P\}). \\
&L(D-C) = L(D-C-P) ~\Rightarrow~ D \in \Gamma(C+P;S,S').
\end{align*}
($\supseteq$) Clearly, $\Gamma(C;S,S' \cup \{P\}) \subseteq \Gamma(C;S,S')$. Let $D \in \Gamma(C+P;S,S')$. Since $P \not \in S'$, $P \in \Gamma_{S'}$. Thus, using the semigroup property, $D-C-P \in \Gamma_{S'}$ implies $D-C \in \Gamma_{S'}$, which proves $\Gamma(C+P;S,S') \subseteq \Gamma(C;S,S')$.
\end{proof}

The following theorem is proved by repeated application of the proposition. 

\begin{theorem} \label{T:S'}
For $T' \cup T = S'$,
\[
\Gamma(C;S,T') = \bigcup_{\lambda \in \Lambda} \Gamma(C+\lambda;S,S'),
\]
where $\Lambda$ is the semigroup generated by the points in $T$ (including the zero divisor).
\end{theorem}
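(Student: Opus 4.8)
The plan is to iterate Proposition~\ref{P:rec} repeatedly, peeling off one point of $T$ at a time, and to organize the resulting union over the semigroup $\Lambda$ generated by $T$. First I would handle the base case: when $T = \emptyset$, the semigroup $\Lambda$ consists only of the zero divisor, so the claimed identity reduces to $\Gamma(C;S,T') = \Gamma(C;S,S')$, which holds because $T' = S'$ in that case. This anchors the induction.

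For the inductive step, I would fix a point $P \in T$ and write $T = T_0 \cup \{P\}$, so that $T_0 \cup (T' \cup \{P\}) = S'$ with $P \notin T'$. Applying Proposition~\ref{P:rec} with the set $T'$ in the role of $S'$ gives
\[
\Gamma(C;S,T') = \Gamma(C;S,T' \cup \{P\}) \cup \Gamma(C+P;S,T').
\]
The induction hypothesis (with the smaller set $T_0$) applies separately to each term on the right: to the first term with the peeled-off set $T_0$ and the inner set $T' \cup \{P\}$, and to the second term with divisor $C+P$, the same $T_0$, and inner set $T'$. Each gives a union of sets $\Gamma(C+\mu;S,S')$ over $\mu$ ranging in the semigroup $\Lambda_0$ generated by $T_0$. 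Combining, the first term contributes classes $C + \mu$ and the second contributes $C + P + \mu$, with $\mu \in \Lambda_0$; since every element of $\Lambda$ is uniquely of the form $\mu$ or $P + \mu$ with $\mu \in \Lambda_0$ (as $\Lambda = \Lambda_0 + \langle P \rangle$ and we range over all multiples of $P$), the union over these two families is exactly the union over all $\lambda \in \Lambda$, as claimed.

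The main subtlety to check carefully is that the induction is genuinely well-founded and that the two families of shifts together cover $\Lambda$ without gaps. The one-point case of the recursion only separates the zero multiple of $P$ (first term) from the positive multiples (second term, where $C+P$ absorbs one copy of $P$ and further copies are produced by recursing on the second term again). I would therefore phrase the induction on the size of $T$ but make sure that, when I recurse on $\Gamma(C+P;S,T')$, I am again free to peel $P$ from it — which is legitimate since $P$ may reappear in a fresh decomposition $S' = T_0 \cup (T' \cup \{P\})$ for that subproblem, generating the higher multiples $2P, 3P, \dots$. The bookkeeping that every $\lambda \in \Lambda$ arises exactly once as a shift is the only place where care is needed; the set-theoretic inclusions themselves are immediate from Proposition~\ref{P:rec}, so no genuine analytic obstacle arises.
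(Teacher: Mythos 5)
Your overall strategy is the one the paper itself intends: Theorem~\ref{T:S'} is presented there as following by repeated application of Proposition~\ref{P:rec}, and both your base case and your treatment of the first term $\Gamma(C;S,T'\cup\{P\})$ (where the induction hypothesis legitimately applies, with the smaller peeled set $T_0$ and inner set $T'\cup\{P\}$) are fine. The gap is in the second term. An induction on $|T|$ cannot reach $\Gamma(C+P;S,T')$: to apply the hypothesis there with peeled set $T_0$ and inner set $T'$ you would need $T_0\cup T'=S'$, which fails since $P\notin T_0\cup T'$. Your third paragraph concedes this and instead recurses on $\Gamma(C+P;S,T')$ with the \emph{same} set $T$, only with the divisor shifted to $C+P$. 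That is no longer an induction on $|T|$; it is an unbounded recursion $\Gamma(C;S,T')\to\Gamma(C+P;S,T')\to\Gamma(C+2P;S,T')\to\cdots$, and the ``well-foundedness'' you flag as a subtlety is precisely what remains unproved. After unwinding $n$ times, all you have shown is that $\Gamma(C;S,T')$ equals a finite union of sets $\Gamma(C+\lambda;S,S')$ together with leftover terms $\Gamma(C+\lambda;S,T')$ with $\deg\lambda$ growing; without an argument that every fixed element of the left-hand side is captured at some finite depth, the claimed equality with $\bigcup_{\lambda\in\Lambda}\Gamma(C+\lambda;S,S')$ does not follow. (Also, the ``uniqueness'' of the representation $\lambda=\mu$ or $\lambda=P+\mu$ is a red herring: multiplicity is irrelevant in a union --- and uniqueness can even fail, since the $\lambda$ live in the group of divisor classes --- what matters is coverage of $\Lambda$ and termination.)

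The missing ingredient is a degree bound, after which the proof closes in a few lines with no induction on $|T|$ at all. If $D\in\Gamma(C+\lambda;S,T')$ then $L(D-C-\lambda)\neq 0$ (membership in $\Gamma_{T'}$ forces a nonzero Riemann--Roch space; by the convention $\Gamma_\emptyset=\Gamma$ this also covers $T'=\emptyset$), hence $\deg\lambda\leq\deg D-\deg C$. So for a fixed $D\in\Gamma(C;S,T')$, the set of $\lambda\in\Lambda$ with $D\in\Gamma(C+\lambda;S,T')$ is nonempty (it contains $0$) and finite; choose $\lambda^\ast$ in it of maximal degree. For each $Q\in T$ (reading the decomposition as disjoint, $T=S'\setminus T'$, as you implicitly do, so $Q\notin T'$): if $L(D-C-\lambda^\ast)=L(D-C-\lambda^\ast-Q)$, then exactly as in the proof of Proposition~\ref{P:rec} one gets $D\in\Gamma(C+\lambda^\ast+Q;S,T')$, contradicting maximality; hence $D-C-\lambda^\ast\in\Gamma_Q$ for every $Q\in T$, and therefore $D-C-\lambda^\ast\in\Gamma_{T\cup T'}=\Gamma_{S'}$, i.e.\ $D\in\Gamma(C+\lambda^\ast;S,S')$. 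The reverse inclusion $\Gamma(C+\lambda;S,S')\subseteq\Gamma(C;S,T')$ follows from $\Gamma_{S'}\subseteq\Gamma_{T'}$ together with the semigroup argument in the $\supseteq$ part of Proposition~\ref{P:rec}, peeling one point of $\lambda$ at a time (using that $\lambda$ is effective with support in $T$, disjoint from $T'$). With this termination argument inserted, your ``repeated application'' becomes a complete proof; as written, the proposal asserts well-foundedness rather than establishing it.
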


Note that both the proposition and the theorem translate into statements about $\gamma$ if we replace $\Gamma$ with $\gamma$ and $\cup$ with $\min$. 

\section{Main theorem} \label{S:mainthm}

In this section we present a general method to obtain lower bounds for $\gamma(C;S,S')$. Combined with the properties of $\Gamma(C;S,S')$ from the previous section, the
method gives lower bounds for the minimum distance. In the next section we will derive the bounds $d_{DK}$ and $d_{DP}$ in this way.

\begin{theorem}\label{T:MTS}
Given a divisor $C$ and finite sets of rational points $S$ and $S'$, let $\{ A_0, A_1,$ $\ldots,$ $A_n \}$ be a sequence of divisors
such that $A_i = A_{i-1} + P_{i}$, $P_i$ a rational point, for $i=1,\ldots,n,$ and define subsets $\Delta, \Delta', I, I' \subset \{1,2,\ldots,n\}$
as follows.
\begin{align*}
\Delta = \{ i : A_i \in \Gamma_{P_i} \mbox{ and } A_i-C \not \in \Gamma_{P_i} \},  \quad &I = \{ i : P_i \in S \}, \\
\Delta' = \{ i : A_i \not \in \Gamma_{P_i} \mbox{ and } A_i-C \in \Gamma_{P_i} \}, \quad &I' = \{ i : P_i \in S' \}. 
\end{align*}
Then $\gamma(C;S,S') \geq |\Delta \cap I'| + |\Delta' \cap I| - |\Delta'|.$
In particular, $\gamma(C;S,S') \geq |\Delta|$ for $\Delta \subseteq I'$ and $\Delta' \subseteq I.$
\end{theorem}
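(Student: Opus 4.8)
The plan is to track, for each divisor $A$ in the semigroup $\Gamma(C;S,S')$ of minimal degree, how its degree can be bounded below by counting sign-change contributions along the chain $A_0, A_1, \ldots, A_n$. The fundamental observation is that for each index $i$, the passage from $A_{i-1}$ to $A_i = A_{i-1}+P_i$ changes $l(A_i) - l(A_i - C)$ in a controlled way depending on whether $A_i \in \Gamma_{P_i}$ and whether $A_i - C \in \Gamma_{P_i}$. I would first set up the telescoping identity: for any divisor $A$ lying in $\Gamma(C;S,S')$, I want to express $\deg A$ (or a related quantity) as a sum over $i$ of local increments, each of which is $0$ or $\pm 1$ according to the membership conditions defining $\Delta$ and $\Delta'$.

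Concretely, first I would recall that $A \in \Gamma_S$ means $A$ has no base point on $S$, and $A - C \in \Gamma_{S'}$ means $A-C$ has no base point on $S'$; the quantity $\gamma(C;S,S')$ is the minimal degree over all such $A$. The key step is to compare the target divisor $A$ against the reference chain by observing how the dimension differences $l(A_i) - l(A_{i-1})$ and $l(A_i - C) - l(A_{i-1}-C)$ behave. When $A_i \in \Gamma_{P_i}$, adding $P_i$ raises $l$ by one; otherwise it leaves $l$ unchanged. The index $i$ contributes positively to the count exactly when $A_i$ gains dimension but $A_i - C$ does not (the set $\Delta$), and the conditions $P_i \in S$ and $P_i \in S'$ record whether that contribution survives the base-point constraints imposed by $S$ and $S'$. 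I would then assemble the inequality $\deg A \geq |\Delta \cap I'| + |\Delta' \cap I| - |\Delta'|$ by summing these local contributions and accounting for the correction terms coming from indices in $\Delta'$, where the membership pattern is reversed.

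The main obstacle, as I see it, is handling the interaction between the two base-point conditions simultaneously and justifying that the corrections tracked by $I$ and $I'$ are exactly the right ones: I need to be careful that an index contributing to $\Delta$ is counted only when $P_i \in S'$ (so that the base-point-free requirement at $A-C$ genuinely forces a degree increment), while indices in $\Delta'$ subtract off unless they are protected by $P_i \in S$. Getting the bookkeeping of these four sets right — and in particular verifying that the minimal-degree divisor $A$ cannot evade the counted increments — is the delicate part. I expect the cleanest route is to argue that any $A$ witnessing $\gamma(C;S,S')$ can be connected to the chain in a way that realizes each $\Delta$-increment, invoking the semigroup structure of $\Gamma_S$ and $\Gamma_{S'}$ established earlier, and then to read off the specialization $\gamma(C;S,S') \geq |\Delta|$ directly from the hypotheses $\Delta \subseteq I'$ and $\Delta' \subseteq I$, under which the first correction term becomes $|\Delta|$, the second becomes $|\Delta'|$, and the subtraction cancels the latter.
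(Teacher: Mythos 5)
Your proposal has the right skeleton---fix a minimal-degree witness in $\Gamma(C;S,S')$, telescope dimension differences along the chain $A_0,\ldots,A_n$, and invoke the semigroup property---but the quantity you propose to telescope is the wrong one, and the two implications that constitute the actual proof are missing. You track increments of $l(A_i)-l(A_i-C)$, i.e.\ you compare the chain against $C$. Telescoping that quantity yields $\bigl(l(A_n)-l(A_n-C)\bigr)-\bigl(l(A_0)-l(A_0-C)\bigr)$, which counts $|\Delta|-|\Delta'|$ and for a long chain equals $\deg C$ (the paper records exactly this in the lemma following the theorem); the degree of the witness never appears, so no bound on $\gamma(C;S,S')$ can come out of this bookkeeping alone. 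The correct device is to telescope against the witness: for $D\in\Gamma(C;S,S')$ of minimal degree,
\[
\deg D \;\geq\; \bigl|\{\,i: A_i\in\Gamma_{P_i}\text{ and }A_i-D\notin\Gamma_{P_i}\,\}\bigr| \;-\; \bigl|\{\,i: A_i\notin\Gamma_{P_i}\text{ and }A_i-D\in\Gamma_{P_i}\,\}\bigr|,
\]
obtained from $\deg D\geq (l(A_n)-l(A_n-D))-(l(A_0)-l(A_0-D))$ together with the fact that each difference $l(A_i)-l(A_{i-1})$ is $1$ or $0$ according as $A_i\in\Gamma_{P_i}$ or not, and similarly for $l(A_i-D)-l(A_{i-1}-D)$. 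The whole problem is then to convert these $D$-conditions into the $C$-conditions defining $\Delta$ and $\Delta'$.

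That conversion is exactly what you flag as ``the delicate part'' and then defer, so the core of the proof is absent from your plan. It consists of two concrete implications. First, for $i\in\Delta\cap I'$ one has $D-C\in\Gamma_{P_i}$ because $P_i\in S'$, and the semigroup property of $\Gamma_{P_i}$ then forces $A_i-D\notin\Gamma_{P_i}$ (otherwise $(A_i-D)+(D-C)=A_i-C$ would lie in $\Gamma_{P_i}$); this makes the positive count at least $|\Delta\cap I'|$. Second---and this is the only place minimality is used, in a way your sketch never makes concrete---a minimal witness $D$ admits no point that is a common base point of $D$ and $D-C$ (else $D-P$ would still lie in $\Gamma(C;S,S')$ with smaller degree), hence $D\notin\Gamma_P$ implies $D-C\in\Gamma_P$ for every point $P$. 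Combined with the semigroup property, this shows that any index with $A_i\notin\Gamma_{P_i}$ and $A_i-D\in\Gamma_{P_i}$ satisfies $D\notin\Gamma_{P_i}$, hence $D-C\in\Gamma_{P_i}$, hence $A_i-C\in\Gamma_{P_i}$ and $P_i\notin S$; that is, it lies in $\Delta'\setminus I$, which bounds the negative count by $|\Delta'|-|\Delta'\cap I|$. Without these two steps the four sets $\Delta,\Delta',I,I'$ are just definitions, and the inequality does not ``assemble'' from local increments of $l(A_i)-l(A_i-C)$; your final reading of the special case $\Delta\subseteq I'$, $\Delta'\subseteq I$ is correct but rests on the unproved general inequality.
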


\begin{proof}
For an arbitrary divisor $D \in \Gamma$,
\begin{align*}
\deg D  ~\geq~ &(l(A_n)-l(A_n-D)) \\
        ~\geq~ &(l(A_n)-l(A_n-D)) - (l(A_{0})-l(A_{0}-D)) \\
          ~=~  &(l(A_n)-l(A_0)) - (l(A_{n}-D)-l(A_{0}-D)) \\
          ~=~  &\sum_{i=1}^n (l(A_i) - l(A_{i-1})) - \sum_{i=1}^n (l(A_i-D) - l(A_{i-1}-D)) \\
          ~=~  &\sum_{i=1}^n (l(A_i) - l(A_i-P_i)) - \sum_{i=1}^n (l(A_i-D) - l(A_i-D-P_i)) \\
          ~=~  &|\{i : A_i \in \Gamma_{P_i} \mbox{ and } A_i-D \notin \Gamma_{P_i} \}| - |\{i : A_i \notin \Gamma_{P_i} \mbox{ and } A_i-D \in \Gamma_{P_i} \}|. 
\end{align*}
Let $D \in \Gamma(C;S,S')$ be of minimal degree. We show that
\begin{align*}
|\{i : A_i \in \Gamma_{P_i} \mbox{ and } A_i-D \notin \Gamma_{P_i} \}|  ~\geq~ &|\Delta \cap I'|, \\
|\{i : A_i \notin \Gamma_{P_i} \mbox{ and } A_i-D \in \Gamma_{P_i} \}|  ~\leq~ &|\Delta' \backslash I| = |\Delta'| - |\Delta' \cap I|.
\end{align*}
For $i \in I'$, $D-C \in \Gamma_{P_i}.$ Using the semigroup property of $\Gamma_{P_i},$
\begin{align*}
i \in \Delta \cap I'  ~\Rightarrow~  &A_i \in \Gamma_{P_i} \text{ and } A_i-C \notin \Gamma_{P_i} \text{ and } D-C \in \Gamma_{P_i}  \\
                      ~\Rightarrow~  &A_i \in \Gamma_{P_i} \text{ and } A_i-D \notin \Gamma_{P_i}.
\end{align*}
This proves the first inequality. For $D \in \Gamma(C;S,S')$, if $D$ and $D-C$ have a common base point $P$ then $P \not \in S \cup S'$ and
$D-P \in \Gamma(C;S,S').$ Thus, for $D$ of minimal degree, no such common base point exists and $D \notin \Gamma_P$ implies $D-C \in \Gamma_P$,
for any point $P$. We can now prove the second inequality.
\begin{align*} 
&A_i \notin \Gamma_{P_i} \text{ and } A_i-D \in \Gamma_{P_i} \\
~\Rightarrow~ &A_i \notin \Gamma_{P_i} \text{ and } A_i-D \in \Gamma_{P_i} \text{ and } D \notin \Gamma_{P_i} \\
~\Rightarrow~ &A_i \notin \Gamma_{P_i} \text{ and } A_i-D \in \Gamma_{P_i} \text{ and } D \notin \Gamma_{P_i} \text{ and } D-C \in \Gamma_{P_i} \\
~\Rightarrow~ &A_i \notin \Gamma_{P_i} \text{ and } A_i-C \in \Gamma_{P_i} \mbox{ and } D \notin \Gamma_{P_i} \\
~\Rightarrow~ &i \in \Delta' \backslash I. 
\end{align*}
\end{proof}

The order bounds $d_B, d_{ABZ'}, d_{DP}, d_{DK}$ can all be obtained from the main theorem in combination with results from 
the previous section. Using the theorem with different formats for the sequence $\{ A_i \}$ yields different bounds. The bounds 
$d_{DP}$ and $d_{DK}$ use a general format. The bound $d_B$ uses the format $A_i = B+iP,$ for a fixed $B$ and for $i \in \ZZ.$
The special case $A_i = iP,$ for $i \in \ZZ,$ is used in the Feng-Rao bound and the Carvalho-Munuera-daSilva-Torres bound. 

\begin{example}
For $C=-3P+6Q$, we apply the theorem with two different sequences.
\[
\begin{array}{llll}
A_i = iP: ~~    &\Delta = \{0,8,12,13,16,24\}, ~~ &\Delta' = \{ 17, 19, 27 \}. \\
A_i = iP+3Q: ~~  &\Delta = \{0,8,11,12,13,16,24\}, ~~ &\Delta' =\{ 7, 9, 15, 17\}. 
\end{array}
\]
The translated sequence yields an improved estimate $\gamma(C;P,P) \geq 7.$ 
\end{example}

The bound $d_{ABZ'}$ uses a sequence $\{ A_i \}$ that contains the divisors $B+iP$, for $i \leq 0$, as well as the
divisors $B+Z+iP$, for a fixed divisor $Z \geq 0$ and for $i>0.$

\begin{example}
For $C=2P+2Q$, the two choices
\[
\begin{array}{llll}
A_i = iP: ~~    &\Delta = \{0,8,10,13,16,21,29 \}, ~~ &\Delta' = \{ 14, 15, 27 \}, \\
A_i = iP+2Q: ~~  &\Delta = \{0,8,13,16,19,21,29\}, ~~ &\Delta' =\{ 2, 14, 15, \}, 
\end{array}
\]
both yield $\gamma(C;P,P) \geq 7$. This is not improved with a different choice of translated sequence. 
However, for the combined sequence 
\[
A_i = 0,\, \ldots,\, 15P,\, 15P+Q,\, 15P+2Q,\, \ldots,\, 29P+2Q,
\]
we see that the divisors
$iP,$ for $i \in \{0,8,10,13\},$ as well as the divisors $iP+2Q,$ for $i \in \{16,19,21,29\},$ contribute to 
$\Delta$ and thus $\gamma(C;\{P,Q\},P) \geq 8.$ In this case $|\Delta'|=4$, with contributions by 
$14P, 15P$ (both with $P_i=P$) and $15P+Q, 15P+2Q$ (both with $P_i=Q$).
\end{example}

 
The bound $d_{ABZ'}$ is a special case of the bound $d_{DP}$. The latter applies the theorem with $S' = \{ P \}$ but with
no restrictions on the sequence $\{ A_i \}$. The bound $d_{DK}$ applies the main theorem with no restrictions on neither $S$ and $S'$ nor on the 
sequence $\{ A_i \}$. 

\begin{example} \label{E:58}
For $C=-5P+8Q$, the two choices
\[
\begin{array}{llll}
A_i = iP-3Q: ~~    &\Delta = \{10,12,13,22,23,25 \}, ~~ &\Delta' = \{ 8, 16, 27 \}, \\
A_i = iP-2Q: ~~  &\Delta = \{10,12,13,22,23,25 \}, ~~ &\Delta' =\{ 8, 19, 27, \},
\end{array}
\]
both yield $\gamma(C;P,P) \geq 6$. An arbitrary combination of translates does not produce improvements for
$\gamma(C;\{P,Q\},P) \geq 6$. However, for the combined sequence 
\[
A_i = 10P-3Q,\, \ldots,\, 16P-3Q,\, 16P-2Q,\, \ldots, 25P-2Q,
\]
the divisor $16P-2Q$ contributes to $\Delta$ with $P_i = Q$. Together with the contributions
$iP-3Q,$ for $i \in \{10,12,13\}$ and $iP-2Q,$ for $i \in \{22, 23, 25\}$ this gives $|\Delta| = 7$ and
$\gamma(C;P,\{P,Q\}) \geq 7.$ The contributions to $\Delta'$ come from $8P, 16P, 19P, 25P$ (all with $P_i=P$) and thus
the lower bound holds with $S= \{P\}.$
\end{example}

The bounds $d_{DK} \geq d_{DP} \geq d_B$ use the main theorem with the restrictions
\[
(DK)~S,\,S' \text{ finite}, \qquad (DP)~S \text{ finite},\,S' = \{P\}, \qquad (B)~S = S' = \{P\}.
\]
The bound $d_{ABZ'}$ is a special case of the bound $d_{DP}$. Its main purpose is to connect the bounds of order type with the bounds
of floor type via the relation $d_{ABZ'} \geq d_{ABZ}$. We first show how the bound $d_{ABZ'}$ follows from the main theorem and then that 
it agrees with the earlier formulation as an improvement of the floor bound. Recall from Theorem \ref{T:ABZ'} that 
\begin{equation} \label{abz'}
d (C_\Omega(D,G)) \geq \min \{ d_{ABZ} + |\Delta'(A)| + |\Delta'(B)|,  d (C_\Omega(D,G+P)) \}
\end{equation}
Here $G=K+C=A+B+Z,$ for $Z \cap D = \emptyset$, and for $P \not \in D$.
Let $\Delta(A) \subset \{ A - iP : i \geq 0 \}$ be the subset
of divisors $A'=A-iP$ with the property that $L(A') \neq L(A'-P)$ and $L(A'-C) = L(A'-C-P).$ 

\begin{corollary} ($ABZ'$ bound \cite{DuuPar09}) \label{C:ABZ'2}
Let $G=K+C=A+B+Z,$ such that $Z\geq 0$. Then
\[ 
\gamma(C;\supp(Z),P) \geq |\Delta(A)| + |\Delta(B)|.
\]
\end{corollary}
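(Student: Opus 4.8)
The plan is to deduce Corollary \ref{C:ABZ'2} directly from the Main Theorem (Theorem \ref{T:MTS}) by constructing an explicit sequence $\{A_i\}$ adapted to the decomposition $G = K+C = A+B+Z$. The natural choice, matching the discussion preceding Theorem \ref{T:ABZ'}, is a sequence that interpolates between the ``$A$-filtration'' and the ``$B$-filtration''. Concretely, I would first run through the divisors $A - iP$ for $i$ ranging over a suitable set (decreasing down to where $L$ vanishes), then transition and run through $B - jP$. Since $G = A+B+Z$, the single parameter $P$ and the fixed auxiliary divisor $B$ (resp.\ $A$) let me package both $\Delta(A)$ and $\Delta(B)$ into one increasing chain $A_i = A_{i-1}+P_i$. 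The key point is that the sets $\Delta(A)$ and $\Delta(B)$ defined just before the corollary are exactly the sets of indices $i$ where $L(A_i) \neq L(A_i - P_i)$ while $L(A_i - C) = L(A_i - C - P_i)$, i.e.\ the $\Delta'$-type contributions of the Main Theorem for each sub-filtration.

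First I would translate the hypotheses into the language of $\Gamma_P$. Note that $L(A') \neq L(A'-P)$ means $A' \in \Gamma_P$, and $L(A'-C) = L(A'-C-P)$ means $A'-C \notin \Gamma_P$; so $\Delta(A)$ is precisely the set of divisors $A-iP$ contributing to the set ``$\Delta'$'' of the Main Theorem (the indices where $A_i \notin \Gamma_{P_i}$ but $A_i - C \in \Gamma_{P_i}$ — note the roles are swapped because here we are bounding $\gamma(C;\supp(Z),P)$ with $S' = \{P\}$, which is the $d_{DP}$ restriction). I would then apply Theorem \ref{T:MTS} with $S = \supp(Z)$ and $S' = \{P\}$, so that $I' = \{i : P_i = P\}$ and $I = \{i : P_i \in \supp(Z)\}$. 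The bound $\gamma(C;S,S') \geq |\Delta \cap I'| + |\Delta' \cap I| - |\Delta'|$ must be arranged so that the surviving terms reproduce $|\Delta(A)| + |\Delta(B)|$.

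The main obstacle — and the step requiring genuine care — is bookkeeping the signs and the set $\Delta'$ so that the negative contribution $-|\Delta'|$ cancels correctly. The claim $\gamma(C;\supp(Z),P) \geq |\Delta(A)| + |\Delta(B)|$ is a statement about a particular $\Delta$-count, so I must check that, for the chosen sequence, every index contributing to $\Delta(A)$ or $\Delta(B)$ lands in the ``good'' part of the Main Theorem inequality while the remaining $\Delta,\Delta'$ contributions either cancel or are non-negative. In the informal derivation just before Theorem \ref{T:ABZ'}, the telescoping sum $l(A-C)-l(A-D') \geq |\Delta'(A)|$ already isolates exactly these contributions; the task here is to see that the same telescoping, reorganized along the combined $A$-then-$B$ sequence, is literally an instance of the Main Theorem with $S' = \{P\}$. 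I expect that the divisors $A_i$ lying in the $Z$-part of the transition are precisely where $S = \supp(Z)$ must be invoked to absorb the would-be negative terms, exactly as the combined-sequence examples (Examples following Theorem \ref{T:MTS}) illustrate.

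Finally I would verify the boundary behavior of the sequence: at the low end, choosing $A_0$ with $\deg A_0$ small enough that $L(A_0) = 0$ guarantees all indices below the effective range contribute nothing, and at the transition between the $A$- and $B$-segments the index corresponding to adding a point of $\supp(Z)$ is accounted for by the $|\Delta' \cap I|$ term rather than being lost. Once the sequence is fixed and these endpoint and transition indices are checked, the inequality of Theorem \ref{T:MTS} collapses to $\gamma(C;\supp(Z),P) \geq |\Delta(A)| + |\Delta(B)|$, which is the assertion of the corollary. I do not anticipate needing any tool beyond Theorem \ref{T:MTS} and the semigroup characterization of $\Gamma_P$; the content is entirely in selecting the right interpolating sequence and confirming the cancellation.
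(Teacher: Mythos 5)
Your high-level plan --- apply Theorem \ref{T:MTS} with $S=\supp(Z)$, $S'=\{P\}$ to one combined increasing sequence --- is indeed the paper's strategy, but the sequence you construct and the bookkeeping you attach to it both fail. The paper's sequence contains $B+iP$ for $i\le 0$, then the transition from $B$ to $B+Z$, then $B+Z+iP$ for $i>0$; the divisors $A-iP$ never appear in it. Your chain, which is supposed to pass through the $A-iP$ and then the $B-jP$, need not even exist: a single increasing chain visiting both families forces one of $A$, $B$ to dominate the other up to multiples of $P$, which is not assumed. Worse, even when it does exist (the paper's running example has $A=B=13P$), it cannot give the corollary: a single chain meets each divisor once, whereas the asserted bound $|\Delta(A)|+|\Delta(B)|=2|\Delta(A)|$ counts every element of $\Delta(A)=\Delta(B)$ twice --- this doubling is exactly the ``$4+2+2$'' in the example following Table \ref{T:A'B'}. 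The missing idea is duality: writing $B+Z+iP=K+C-A+iP$ and using the Riemann--Roch consequence $W\in\Gamma_P\Leftrightarrow K-W+P\notin\Gamma_P$, one checks that the step ending at $B+Z+iP$ is of $\Delta$-type exactly when $A-(i-1)P\in\Delta(A)$. It is this conversion of the \emph{upper} segment into the $A$-family --- not a second literal pass through the $A-iP$ --- that makes $|\Delta(A)|$ and $|\Delta(B)|$ appear as gains of one and the same sequence. Your proposal contains no trace of this computation.

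Second, your dictionary is inverted, and the inversion hides the one step that genuinely needs care. An element of $\Delta(A)$ satisfies $L(A')\ne L(A'-P)$ and $L(A'-C)=L(A'-C-P)$, i.e.\ $A'\in\Gamma_P$ and $A'-C\notin\Gamma_P$: that is the Main Theorem's $\Delta$-condition (a gain term), not its $\Delta'$-condition; there is no ``role swap.'' The $\Delta'$-type steps of the correct sequence with step point $P$ are (again via duality) exactly the sets $\Delta'(A),\Delta'(B)$ of Section \ref{S:ABZcoset}, and these are the potential $-|\Delta'|$ losses. Your proposed mechanism for removing them --- that $S=\supp(Z)$ ``absorbs'' the negative terms --- cannot work: those steps have $P_i=P\notin\supp(Z)$, so they lie in $\Delta'\setminus I$ and are precisely what $-|\Delta'|$ charges. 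What actually neutralizes them is the observation inside the proof of Theorem \ref{T:MTS} that a negative contribution at step $i$ forces $D\notin\Gamma_{P_i}$; hence they vanish for divisors $D$ with no base point at $P$, which is the situation in which the bound is used (Theorem \ref{T:ABZ'} assumes $P\notin\supp D$, and the $Z=0$ case is Beelen's bound with $S=S'=\{P\}$). Since your write-up explicitly defers this cancellation (``I must check\ldots,'' ``I expect\ldots'') and the mechanism you sketch for it is wrong, the central content of the corollary is not established.
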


\begin{proof} Apply the main theorem with a sequence $\{ A_i \}$ that contains the divisors $B+iP$, for $i \leq 0$, as well as the
divisors $B+Z+iP$, for $i>0.$
\end{proof}

The relation between $\Delta(A)$ and $\Delta'(A)$ is such that
$\Delta(A) = l(A) - l(A-C) + \Delta'(A)$. And thus the corollary can be stated as
\[ 
\gamma(G-K;\supp(Z),P) \geq d_{ABZ} + |\Delta'(A)| + |\Delta'(B)|.
\]
Using Lemma \ref{L:codegammas} we recover the ABZ' bound in the form (\ref{abz'}). \\


It is clear from the definitions that $A \in \Gamma(C;S,S')$ if and only if $A-C \in \Gamma(-C;S',S)$, and thus 
$\gamma(C;S,S') - \gamma(-C;S',S) = \deg C.$ The duality carries over to lower bounds for $\gamma(C;S,S')$ and $\gamma(-C;S',S)$
that are obtained with Theorem \ref{T:MTS}.
 

\begin{lemma}
For a given divisor $C$, and for a sequence of divisors $\{A_i \}$ as in Theorem \ref{T:MTS}, let
\[
\gamma(C;S,S')  ~\geq~ |\Delta \cap I'| + |\Delta' \cap I| - |\Delta'|.
\]
Then
\[
\gamma(-C;S',S) ~\geq~ |\Delta' \cap I| + |\Delta \cap I'|-|\Delta|.
\]
Moreover, for a long enough seqeunce such that $\deg A_0 < \min \{ 0, \deg C \}$ and $\deg A_n > \max \{ 2g-2, 2g-2+\deg C \}$, the difference
between the two lower bounds $|\Delta| - |\Delta'| = \deg C.$
\end{lemma}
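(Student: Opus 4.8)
The plan is to treat the two lower bounds by a single duality observation and then to compute their difference by a telescoping argument.

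First I would establish the second inequality by applying Theorem \ref{T:MTS} to the translated sequence $\{A_i - C\}$ with the divisor $-C$ and the sets $S', S$ in place of $S, S'$. Since $A_i - C = (A_{i-1} - C) + P_i$, this translated sequence has the form required by the theorem. The key point is that, for the translated data, the roles of $\Delta$ and $\Delta'$ are exchanged: the condition ``$A_i - C \in \Gamma_{P_i}$ and $(A_i - C) - (-C) = A_i \notin \Gamma_{P_i}$'' defining the new $\Delta$ is precisely the old $\Delta'$, and symmetrically the new $\Delta'$ is the old $\Delta$. Swapping $S$ and $S'$ likewise exchanges $I$ and $I'$. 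Substituting these identifications into the conclusion of Theorem \ref{T:MTS} yields $\gamma(-C; S', S) \geq |\Delta' \cap I| + |\Delta \cap I'| - |\Delta|$, exactly as claimed.

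For the difference formula I would introduce the indicators $a_i = l(A_i) - l(A_{i-1}) \in \{0,1\}$ and $b_i = l(A_i - C) - l(A_{i-1} - C) \in \{0,1\}$, so that $a_i = 1$ precisely when $A_i \in \Gamma_{P_i}$ and $b_i = 1$ precisely when $A_i - C \in \Gamma_{P_i}$. Then $i \in \Delta$ iff $(a_i, b_i) = (1,0)$ and $i \in \Delta'$ iff $(a_i, b_i) = (0,1)$, so $a_i - b_i$ is the indicator of $\Delta$ minus the indicator of $\Delta'$. Summing and telescoping the two series gives
\[
|\Delta| - |\Delta'| = \sum_{i=1}^n (a_i - b_i) = [\,l(A_n) - l(A_n - C)\,] - [\,l(A_0) - l(A_0 - C)\,].
\]

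Finally I would evaluate the two boundary terms by Riemann--Roch. For $\deg A_0 < \min\{0, \deg C\}$ both $A_0$ and $A_0 - C$ have negative degree, so $l(A_0) = l(A_0 - C) = 0$ and the lower boundary term vanishes. For $\deg A_n > \max\{2g-2,\, 2g-2 + \deg C\}$ both $A_n$ and $A_n - C$ are nonspecial, so $l(A_n) - l(A_n - C) = (\deg A_n - g + 1) - (\deg A_n - \deg C - g + 1) = \deg C$. Hence $|\Delta| - |\Delta'| = \deg C$, consistent with the identity $\gamma(C;S,S') - \gamma(-C;S',S) = \deg C$ noted above. The only delicate point is matching the stated degree thresholds to the vanishing range at the bottom and the nonspecialty range at the top; this is routine once the telescoping identity is in hand, so I do not anticipate a genuine obstacle.
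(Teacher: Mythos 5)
Your proof is correct and takes essentially the same approach as the paper: the duality part applies Theorem \ref{T:MTS} to the translated sequence $\{A_i - C\}$ with $S$ and $S'$ swapped (exchanging $\Delta$ with $\Delta'$ and $I$ with $I'$), and the difference formula is the same telescoping identity $|\Delta|-|\Delta'| = \bigl(l(A_n)-l(A_n-C)\bigr) - \bigl(l(A_0)-l(A_0-C)\bigr)$ evaluated at the endpoints via Riemann--Roch. Your write-up via the indicators $a_i, b_i$ just makes explicit the step the paper compresses into a single displayed computation.
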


\begin{proof}
To obtain the bound for $\gamma(-C;S',S)$ we apply the theorem with the sequence $\{ A_i-C \}$. This exchanges $\Delta$ and $\Delta'$, and $I$ and $I'$.
The second claim reduces to the following statement:
\begin{align*}
|\Delta|-|\Delta'| &= |\{i: A_i \in \Gamma_{P_i}\}| - |\{i: A_i - C \in \Gamma_{P_i}\}|  \\
                   &= (l(A_n)-l(A_0)) - (l(A_n-C)-l(A_0-C))   \\
                   &= (l(A_n)-l(A_n-C)) - (l(A_0)-l(A_0-C)). 
\end{align*}
For divisors $A_0$ and $A_n$ in the give range, the last difference equals $\deg C.$
\end{proof}

Note that for an arbitrary sequence $\{ A_i \}$ and for $C = C^+ - C^{-}$, where $C^+, C^- \geq 0$, the proof indicates that 
$|\Delta|-|\Delta'| \leq \deg C^+ + \deg C^-.$ In general we expect the lower bound for $\gamma(C;S,S')$ to increase when $S$ and $S'$ are 
enlarged. On the other hand, for an effective divisor $C$ wihtout base points, $C \in \Gamma(C;S,S')$ and $\gamma(C;S,S') = \deg C$, for all
$S$ and $S'$. For an arbitrary effective divisor $C$, we show that Theorem \ref{T:MTS} yields the best results 
when $S$ contains the base points of $C$.  

\begin{lemma}
For a given effective divisor $C$ and set $S'$, and for any sequence $\{ A_i \}$, the lower bound in Theorem \ref{T:MTS} 
attains its maximum for $S$ equal to the set of base points of $C$.
\end{lemma}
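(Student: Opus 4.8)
The plan is to analyze how the lower bound in Theorem~\ref{T:MTS}, namely $|\Delta \cap I'| + |\Delta' \cap I| - |\Delta'|$, changes as a function of the set $S$ for a fixed effective divisor $C$, a fixed set $S'$, and a fixed sequence $\{A_i\}$. Since $S$ enters the bound only through $I = \{i : P_i \in S\}$, and the terms $|\Delta \cap I'|$ and $|\Delta'|$ do not depend on $I$, maximizing the bound over $S$ is the same as maximizing $|\Delta' \cap I|$, i.e. including in $S$ as many of the points $P_i$ with $i \in \Delta'$ as possible. The index set $\Delta'$ collects those $i$ for which $A_i \notin \Gamma_{P_i}$ but $A_i - C \in \Gamma_{P_i}$; that is, $P_i$ is a base point of $A_i$ but not of $A_i - C$. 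So the first step is to observe that we may as well take $S$ to consist exactly of the points $P_i$ appearing at indices $i \in \Delta'$.

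The heart of the argument is to show that every such point $P_i$ (with $i \in \Delta'$) is necessarily a base point of $C$, so that enlarging $S$ beyond the base points of $C$ cannot help, while including all base points of $C$ recovers the full contribution $|\Delta' \cap I| = |\Delta'|$. First I would recall the semigroup property of $\Gamma_P$: if $A-P \in \Gamma_P$ then $A \in \Gamma_P$, equivalently base points propagate downward under subtraction of effective divisors. For $i \in \Delta'$ we have $A_i - C \in \Gamma_{P_i}$ while $A_i \notin \Gamma_{P_i}$. Since $C \geq 0$, the divisor $A_i$ is obtained from $A_i - C$ by adding the effective divisor $C$. The key step is to verify that if $A_i - C$ has no base point at $P_i$ but $A_i = (A_i - C) + C$ does, then $C$ itself must have a base point at $P_i$: adding an effective divisor without a base point at a place $P$ to a divisor already free of a base point at $P$ cannot create one. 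Concretely, $L(A_i - C) \neq L(A_i - C - P_i)$ together with $L(A_i) = L(A_i - P_i)$ forces $L(C) = L(C - P_i)$ to fail, i.e. $P_i \in \operatorname{supp}$ of the base locus of $C$; this is the same mechanism used in the proof of the lemma preceding Theorem~\ref{T:ABZ}, where $L(K-C+P) \neq L(K-C)$ was deduced from base-point non-vanishing of the summands.

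Having established that $\{P_i : i \in \Delta'\}$ is contained in the set of base points of $C$, the conclusion follows quickly. Taking $S$ to be exactly the set of base points of $C$ guarantees $P_i \in S$ for every $i \in \Delta'$, hence $\Delta' \subseteq I$ and $|\Delta' \cap I| = |\Delta'|$, the maximum possible value of this term. Any larger $S$ leaves $|\Delta' \cap I|$ unchanged at $|\Delta'|$ (it is already maximal) and does not alter the other two terms, while any smaller $S$ that omits some base point of $C$ appearing at an index in $\Delta'$ strictly decreases $|\Delta' \cap I|$. Therefore the lower bound attains its maximum precisely when $S$ equals the set of base points of $C$.

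The main obstacle I anticipate is the base-point propagation step: one must argue carefully, using the additivity of Riemann--Roch dimensions under adding a single point, that $P_i$ being a base point of $A_i$ but not of $A_i - C$ localizes entirely to $C$. The cleanest route is to decompose $C$ as a sum of points and track the jump in $l(\,\cdot\,) - l(\,\cdot\,-P_i)$ across each summand, noting that this quantity lies in $\{0,1\}$ and can only increase when the added point equals $P_i$; if $P_i \notin \operatorname{supp} C$, adding $C$ never turns an existing base-point-free place into a base point, giving the contradiction. Once this local analysis is in place, the optimization over $S$ is purely combinatorial and immediate.
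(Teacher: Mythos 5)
Your proposal is correct and takes essentially the same route as the paper: the bound depends on $S$ only through $|\Delta' \cap I|$, and the semigroup property of $\Gamma_{P_i}$ (your observation that adding an effective divisor with no base point at $P_i$ to a divisor with no base point at $P_i$ cannot create one) shows $C \notin \Gamma_{P_i}$ for every $i \in \Delta'$, so that taking $S$ to be the base points of the effective divisor $C$ already realizes $\Delta' \subseteq I$. One wording slip to fix: the hypotheses $L(A_i - C) \neq L(A_i - C - P_i)$ and $L(A_i) = L(A_i - P_i)$ force $L(C) = L(C - P_i)$ to \emph{hold} (equivalently, force $C \in \Gamma_{P_i}$ to fail), not to fail, which is what your surrounding sentences and final conclusion correctly assert.
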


\begin{proof}
Clearly, for any sequence $\{ A_i \}$, the set $S$ is optimal if it contains $\{P_i : i \in \Delta'\}$. 
For $i \in \Delta'$, $A_i \not \in \Gamma_{P_i}$ and
$A_i - C \in \Gamma_{P_i}$. The semigroup property of $\Gamma_{P_i}$ implies that $C \not \in \Gamma_{P_i}$. For an effective divisor $C$ there is no gain 
in assuming that $S$ contain points other than the basepoints of $C$.
\end{proof}

\section{Order bounds in semigroup form} \label{S:order}


In this section we prove the order bounds $d_{DK}, d_{DP},$ and $d_B$ using a combination of Theorem \ref{T:S'} and Theorem \ref{T:MTS}. 
To obtain lower bounds for the minimum distance $d$ of an AG code, we use $d \geq \gamma(C;S,\emptyset)$ (Lemma \ref{L:codegammas}) and estimate
$\gamma(C;S,\emptyset)$, where $C$ is the designed minimum support of the code and the code is defined with divisor $D$ disjoint from $S$.  
Theorem \ref{T:MTS} gives us a way to obtain lower bounds for $\gamma(C;S,S')$ but the lower bounds are nontrivial only if $S' \neq \emptyset$. This is where
we use Theorem \ref{T:S'}. We have
\[
\Gamma(C;S,\emptyset) = \bigcup_{\lambda \in \Lambda'} \Gamma(C+\lambda;S,S'),
\]
where $\Lambda'$ is the semigroup generated by the points in $S'$. Now Theorem \ref{T:MTS} can be used to estimate $\gamma(C+\lambda;S,S')$, 
for $\lambda \in \Lambda'.$ 

\begin{theorem} 
(The bound $d_{DK}$ \cite{DuuKir09}) 
\label{T:DK}
Let $C$ be a divisor and let $S$ be a finite set of rational points. For any finite set $S'$ of rational points,
\[
\gamma(C;S,\emptyset) = \min_{\lambda \in \Lambda'} \gamma(C+\lambda;S,S') \geq \min_{\lambda \in \Lambda'} \gamma_\ast(C+\lambda;S,S'),
\] 
where $\Lambda'$ is the semigroup generated by the points in $S'$ and $\gamma_\ast(C+\lambda;S,S')$ is a lower bound for 
$\gamma(C+\lambda;S,S').$
\end{theorem}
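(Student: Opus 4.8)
The plan is to prove the two assertions separately: first the exact equality $\gamma(C;S,\emptyset) = \min_{\lambda \in \Lambda'} \gamma(C+\lambda;S,S')$, and then the inequality, which is immediate once the equality is in hand.

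For the equality, I would invoke Theorem \ref{T:S'} directly. That theorem states that for $T' \cup T = S'$, we have $\Gamma(C;S,T') = \bigcup_{\lambda \in \Lambda} \Gamma(C+\lambda;S,S')$, where $\Lambda$ is the semigroup generated by the points in $T$. I specialize to the case $T' = \emptyset$ and $T = S'$, so that $T' \cup T = S'$ holds trivially and $\Lambda = \Lambda'$ is the semigroup generated by all of $S'$. This yields the set-level identity $\Gamma(C;S,\emptyset) = \bigcup_{\lambda \in \Lambda'} \Gamma(C+\lambda;S,S')$. The remark immediately following Theorem \ref{T:S'} observes that such union-identities translate into statements about $\gamma$ by replacing $\Gamma$ with $\gamma$ and $\cup$ with $\min$; applying this translation gives exactly $\gamma(C;S,\emptyset) = \min_{\lambda \in \Lambda'} \gamma(C+\lambda;S,S')$. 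The underlying reason the translation is valid is that $\gamma$ is defined as a minimum of degrees over a set, and the minimum over a union of sets equals the minimum of the minima over the pieces, so $\gamma(\bigcup_\lambda \Gamma(C+\lambda;S,S')) = \min_\lambda \gamma(C+\lambda;S,S')$.

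For the inequality, I would simply note that by hypothesis $\gamma_\ast(C+\lambda;S,S')$ is, for each $\lambda$, a lower bound for $\gamma(C+\lambda;S,S')$ — this is the role played by Theorem \ref{T:MTS}, which supplies exactly such lower bounds. Since $\gamma(C+\lambda;S,S') \geq \gamma_\ast(C+\lambda;S,S')$ termwise over $\lambda \in \Lambda'$, taking the minimum over $\lambda$ preserves the inequality, giving $\min_{\lambda \in \Lambda'} \gamma(C+\lambda;S,S') \geq \min_{\lambda \in \Lambda'} \gamma_\ast(C+\lambda;S,S')$. Chaining this with the equality established above completes the proof.

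The main subtlety, rather than an obstacle, is being careful that the minimum in the statement is genuinely attained over $\Lambda'$: since $\Lambda'$ is generated by finitely many points it is infinite, so one should observe that the relevant divisor classes have bounded-below degree (because $C + \lambda \in \Gamma_S$ forces effectivity-type constraints), ensuring the minimum exists and the translation from the infinite union to a minimum is legitimate. I expect this to be routine given the framework already developed, so the heart of the argument is really just the combination of Theorem \ref{T:S'} with the termwise bounds from Theorem \ref{T:MTS}, precisely as flagged in the section introduction.
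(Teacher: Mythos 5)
Your proposal is correct and follows exactly the paper's own route: the paper derives Theorem~\ref{T:DK} from Theorem~\ref{T:S'} specialized to $T'=\emptyset$, $T=S'$ (giving $\Gamma(C;S,\emptyset)=\bigcup_{\lambda\in\Lambda'}\Gamma(C+\lambda;S,S')$), the remark that $\Gamma$/$\cup$ translates to $\gamma$/$\min$, and termwise lower bounds $\gamma_\ast$ supplied by Theorem~\ref{T:MTS}. Your added observation that the minimum over the infinite semigroup $\Lambda'$ is attained (degrees in $\Gamma_S$ are non-negative integers) is a harmless refinement of what the paper leaves implicit.
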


It is helpful to interpret the data in the theorem as a directed graph with vertices a collection ${\cal C}$ of divisors $C$ and edges $(C,C+Q)$, for
$C \in {\cal C}$, $Q \in S'$. If we label the vertex $C \in {\cal C}$ with $\gamma(C;S,S')$ then $\gamma(C;S,\emptyset)$ is the minimum of all vertex labels 
$\gamma(C';S,S')$ for $C' \geq C$.
Among the estimates $\gamma_B, \gamma_{DP}$ and $\gamma_{DK}$ for $\gamma(C+\lambda;S,S')$ obtained with Theorem \ref{T:MTS}, 
only $\gamma_{DK}$ uses sets $S'$ of size larger than one. For the other two types we use 
\[
\Gamma(C+\lambda;S,S') = \bigcap_{Q \in S'} \Gamma(C+\lambda;S,Q)
\]
in combination with estimates for $\gamma(C+\lambda;S,Q).$


\begin{corollary} \label{C:BDP}
(The bounds $d_B$ \cite{Bee07} and $d_{DP}$ \cite{DuuPar09} in semigroup form) 
\label{T:BDP}
\[
\gamma(C;S,\emptyset) \geq \min_{\lambda \in \Lambda'}\,(\max_{Q \in S'} \gamma_\ast(C+\lambda;S,Q)\,),
\] 
where $\gamma_\ast(C+\lambda;S,Q)$ is a lower bound for $\gamma(C+\lambda;S,Q).$
\end{corollary}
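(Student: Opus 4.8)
The goal is to bound $\gamma(C;S,\emptyset)$ from below. The plan is to start from the identity in Theorem~\ref{T:DK}, namely $\Gamma(C;S,\emptyset) = \bigcup_{\lambda \in \Lambda'} \Gamma(C+\lambda;S,S')$, which under the $\gamma$/$\min$ translation (noted after Theorem~\ref{T:S'}) gives $\gamma(C;S,\emptyset) = \min_{\lambda \in \Lambda'} \gamma(C+\lambda;S,S')$. So it suffices to produce, for each $\lambda \in \Lambda'$, a lower bound for $\gamma(C+\lambda;S,S')$ and then take the minimum over $\lambda$.

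First I would rewrite the innermost object using the decomposition $\Gamma(C+\lambda;S,S') = \bigcap_{Q \in S'} \Gamma(C+\lambda;S,Q)$, which holds because membership in $\Gamma(C+\lambda;S,S')$ requires $(C+\lambda)$-translate to avoid a base point at every $Q \in S'$ simultaneously. Passing to $\gamma$, an intersection of sets corresponds to a maximum of the minimal degrees: any divisor lying in all the $\Gamma(C+\lambda;S,Q)$ must have degree at least the degree forced by each individual constraint, so
\[
\gamma(C+\lambda;S,S') = \min\{\deg A : A \in \textstyle\bigcap_{Q \in S'}\Gamma(C+\lambda;S,Q)\} \geq \max_{Q \in S'} \gamma(C+\lambda;S,Q).
\]
Then replacing each $\gamma(C+\lambda;S,Q)$ by its lower bound $\gamma_\ast(C+\lambda;S,Q)$ from Theorem~\ref{T:MTS} only decreases the right-hand side, giving $\gamma(C+\lambda;S,S') \geq \max_{Q \in S'} \gamma_\ast(C+\lambda;S,Q)$.

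Finally I would take the minimum over $\lambda \in \Lambda'$ on both sides and chain with the equality from Theorem~\ref{T:DK}:
\[
\gamma(C;S,\emptyset) = \min_{\lambda \in \Lambda'} \gamma(C+\lambda;S,S') \geq \min_{\lambda \in \Lambda'}\,\bigl(\max_{Q \in S'} \gamma_\ast(C+\lambda;S,Q)\bigr),
\]
which is exactly the claimed bound. The main subtlety to check carefully is the inequality $\gamma(C+\lambda;S,S') \geq \max_{Q} \gamma(C+\lambda;S,Q)$: it is the one place where an intersection of semigroup-ideal-like sets is converted into a bound, and one must confirm that the minimal-degree element of the intersection genuinely satisfies every constraint $Q$ and hence has degree at least each $\gamma(C+\lambda;S,Q)$. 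Everything else is a formal manipulation of $\min$, $\max$, and the already-established set identities, so I expect no real obstacle beyond bookkeeping once that step is justified.
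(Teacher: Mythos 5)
Your proposal is correct and follows essentially the same route as the paper: the identity from Theorem \ref{T:DK}, the decomposition $\Gamma(C+\lambda;S,S') = \bigcap_{Q \in S'} \Gamma(C+\lambda;S,Q)$ (stated in the paper just before the corollary), and then the replacement of each $\gamma(C+\lambda;S,Q)$ by its lower bound $\gamma_\ast(C+\lambda;S,Q)$. The only difference is that you establish just the inequality $\gamma(C+\lambda;S,S') \geq \max_{Q \in S'} \gamma(C+\lambda;S,Q)$ where the paper asserts an equality; the inequality is all that the corollary needs and is exactly what the intersection yields, so your more cautious formulation is fine.
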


\begin{proof}
\[
\gamma(C+\lambda;S,S') = \max_{Q \in S'} \gamma(C+\lambda;S,Q) \geq \max_{Q \in S'} \gamma_\ast(C+\lambda;S,Q)
\]
\end{proof}

For an interpretation of the corollary in graph terms we assign a label $\gamma(C;S,Q)$ to each edge $(C,C+Q)$ and then label the vertex 
$C$ with the maximum of the labels on the outgoing edges $(C,C+Q)$, for $Q \in S'$.
The difference between the bounds $d_{B}$ and $d_{DP}$ is not in Corollary \ref{T:BDP} but in the way that each uses Theorem \ref{T:MTS}
to obtain the lower bounds $\gamma_\ast(C+\lambda;S,Q)$.

\begin{example} For $C=-5P+8Q$, we estimate $\gamma(C;\{P,Q\},\emptyset)$ in two different ways.
From Example \ref{E:58}, the labels for the edges $(C,C+P)$ and $(C,C+Q)$ are
\[
\gamma_{DP}(-5P+8Q;\{P,Q\},P) = \gamma_{DP}(-5P+8Q;\{P,Q\},Q) = 6.
\]
The estimates are critical in Corollary \ref{C:BDP} which yields $\gamma(C;\{P,Q\},\emptyset) \geq 6.$ 
On the other hand, a direct estimate of the vertex label at $C$ gives
\[
\gamma_{DK}(-5P+8Q;\{P,Q\},\{P,Q\})) = 7. 
\]
And Theorem \ref{T:DK} yields $\gamma(C;\{P,Q\},\emptyset) \geq 7.$ 
\end{example}


\section{Order bounds in sequence form} \label{S:orderpath} 

The bounds $d_{B}$ and $d_{DP}$ in Corollary \ref{T:BDP} use Theorem \ref{T:S'} and differ from their original formulation, which is based 
on repeated use of Proposition \ref{P:rec}.
\[
\Gamma(C;S,\emptyset) = \Gamma(C;S,Q) \cup \Gamma(C+Q;S,\emptyset).
\] 
In this section we compare the different formulations and show that they are in agreement. 


\begin{proposition} (The bounds $d_B$ \cite{Bee07} and $d_{DP}$ \cite{DuuPar09} in sequence form) \label{P:path}
Let $C$ be a divisor and let $S$ be a finite set of rational points. For any subset $S'$ of rational points, and 
for a long enough sequence of points $Q_0, \Q_1, \ldots, Q_r \in S'$,
\[
\gamma(C;S,\emptyset) \geq \min_{j=0,\ldots,r}  \gamma_\ast(C+R_j;S,Q_j) .
\]
Here $R_0=0$ and $R_j = R_{j-1}+Q_{j-1},$ for $j > 0,$ and $\gamma_\ast(C+R_j;S,Q_j)$ is a lower bound for $\gamma(C+R_j;S,Q_j)$.
\end{proposition}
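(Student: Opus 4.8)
The plan is to obtain the bound by iterating Proposition \ref{P:rec} in its minimum form and then discarding a single ``tail'' term. Passing from the statement about the sets $\Gamma$ to the corresponding statement about their minimal degrees, Proposition \ref{P:rec} with second set $\emptyset$ and $P=Q$ reads
\[
\gamma(C;S,\emptyset) = \min\{\,\gamma(C;S,Q),\ \gamma(C+Q;S,\emptyset)\,\}.
\]
First I would apply this with $Q=Q_0$, which splits off the term $\gamma(C+R_0;S,Q_0)$ (since $R_0=0$) and leaves the tail $\gamma(C+R_1;S,\emptyset)$ because $R_1=R_0+Q_0$. Applying the same identity repeatedly to the successive tails $\gamma(C+R_j;S,\emptyset)$ with $Q=Q_j$, and using $R_{j+1}=R_j+Q_j$, after $r+1$ steps I would arrive at
\[
\gamma(C;S,\emptyset) = \min\Bigl\{\ \min_{0\le j\le r}\gamma(C+R_j;S,Q_j),\ \ \gamma(C+R_{r+1};S,\emptyset)\ \Bigr\}.
\]
Each $Q_j$ lies in $S'$, so every split is legitimate; the hypothesis $P\notin S'$ of Proposition \ref{P:rec} is automatic here because the second set is always $\emptyset$ at the moment of splitting.

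Next I would estimate the tail term. If $A\in\Gamma(C+R_{r+1};S,\emptyset)$ then $A-(C+R_{r+1})\in\Gamma$, so $L(A-C-R_{r+1})\neq 0$ and hence $\deg A\geq \deg C+\deg R_{r+1}=\deg C+(r+1)$, the points $Q_j$ being rational of degree one. Thus $\gamma(C+R_{r+1};S,\emptyset)\geq \deg C+r+1$.

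The main point, and the only real obstacle, is to argue that for a \emph{long enough} sequence this tail term does not affect the minimum. Here I would use that $\gamma(C;S,\emptyset)$ is a fixed finite number: the set $\Gamma(C;S,\emptyset)$ is nonempty (every divisor class of degree at least $\max\{2g,\ g+\deg C\}$ is base-point-free and hence lies in it) while its elements have degree bounded below by $\deg C$, so the minimal degree is attained. Writing $d_0=\gamma(C;S,\emptyset)$, it suffices to take $r$ with $r+1>d_0-\deg C$, for then $\gamma(C+R_{r+1};S,\emptyset)\geq \deg C+r+1>d_0$ is strictly larger than the overall minimum. Consequently the minimum is realized among the first $r+1$ terms, and combining with the hypothesis $\gamma\geq\gamma_\ast$ gives
\[
\gamma(C;S,\emptyset) = \min_{0\le j\le r}\gamma(C+R_j;S,Q_j) \geq \min_{0\le j\le r}\gamma_\ast(C+R_j;S,Q_j),
\]
which is the assertion. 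The delicate step is exactly the avoidance of circularity in ``long enough'': the threshold is phrased through the unknown value $d_0$, but since $d_0$ is finite such an $r$ always exists, matching the existential phrasing of the statement.
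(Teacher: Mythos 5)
Your proof is correct and takes essentially the same route as the paper: the paper's own proof is just the one-line iteration of Proposition \ref{P:rec}, giving $\Gamma(C;S,\emptyset) = \bigcup_{j=0,\ldots,r}\Gamma(C+R_j;S,Q_j) \cup \Gamma(C+R_r+Q_r;S,\emptyset)$ and then passing to minimal degrees. Your extra step---showing the tail satisfies $\gamma(C+R_{r+1};S,\emptyset)\geq \deg C + r + 1$ and so drops out of the minimum once $r+1 > \gamma(C;S,\emptyset)-\deg C$---is exactly the justification the paper leaves implicit in the phrase ``long enough sequence,'' and you handle it correctly, including the finiteness of $\gamma(C;S,\emptyset)$ needed to avoid circularity.
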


\begin{proof} With Proposition \ref{P:rec}, 
\[
\Gamma(C;S,\emptyset ) = \cup_{j=0,\ldots,r} \Gamma(C+R_j;S,Q_j) \cup \Gamma(C+R_r+Q_r;S,\emptyset).
\]
\end{proof}

As before, Theorem \ref{T:MTS} can be used to estimate $\gamma(C+R_j;S,Q_j)$, for $j=0,1,\ldots,r.$
Extending the graph interpretation for the bounds $d_B$ and $d_{DP}$ given after Corollary \ref{C:BDP},
we interpret the label $\gamma(C+R_j;S,Q_j)$ for the edge $(C+R_j,C+R_j+Q_j)$ as the flow capacity along
the edge. The order bound in sequence form estimates $\gamma(C;S,\emptyset)$ as the maximum flow 
capacity of any long enough path $(C,C+Q_0,C+Q_0+Q_1,\ldots)$. The order bound in \cite{CMST07} estimates 
the labels $\gamma(C+R_j;S,Q_j)$ in the same way as the Beelen bound but assigns a special point $P \in S'$ and 
computes the maximum flow along a path $(C,C+P,C+2P,\ldots)$ with $Q_0 = Q_1 = \cdots = Q_r = P.$ 

\begin{example}
The code $C_\Omega(D,K+9P+Q)$, defined with the Suzuki curve over $\ff_8$, has designed minimum support $C=9P+Q$ and designed minimum distance $d_{GOP} = 10$.
For $D$ disjoint form $P$ and $Q$, the actual distance of the code is at least $13$. To see this using the Beelen
bound it is important to choose $Q_0 = P$ and $Q_1 = Q_2 = Q$. The constant choices $Q_0 = Q_1 = Q_2 = P$ and
$Q_0 = Q_1 = Q_2 = Q$ yield only $d \geq 11$ and $d \geq 12$, respectively.
\begin{align*}
&\min \, \{ \gamma_B(9P+Q;P,P),  \gamma_B(10P+Q;Q,Q), \gamma_B(10P+2Q;Q,Q) \} = \min \, \{ 13, 13, 14 \} = 13. \\
&\min \, \{ \gamma_B(9P+Q;P,P),  \gamma_B(10P+Q;P,P), \gamma_B(11P+Q;P,P) \} = \min \, \{ 13, 11, 14 \} = 11. \\
&\min \, \{ \gamma_B(9P+Q;Q,Q),  \gamma_B(9P+2Q;Q,Q), \gamma_B(9P+3Q;Q,Q) \} = \min \, \{ 12, 13, 13 \} = 12.
\end{align*}
\end{example}

In general, $\Gamma(C+P;S,Q) \subseteq \Gamma(C;S,Q)$ for $P \neq Q$, and thus
$\gamma(C+P;S,Q) \geq \gamma(C;S,Q)$. Therefore, if $\gamma_\ast(C+P;S,Q)$ and $\gamma_\ast(C;S,Q)$ are lower bounds, then we can assume
that $\gamma_\ast(C+P;S,Q) \geq \gamma_\ast(C;S,Q)$, for otherwise we would replace $\gamma_\ast(C+P;S,Q)$ with $\gamma_\ast(C;S,Q)$.
With this assumption, the bounds in Corollary \ref{C:BDP} and Proposition \ref{P:path} agree.
 

\begin{proposition} \label{P:agree}
Let $\{ \gamma_\ast(C+\lambda;S,Q) : \lambda \in \Lambda', Q \in S' \}$ be a collection of lower bounds for the
corresponding set of actual values $\{ \gamma(C+\lambda;S,Q) \}$ such that the estimates satisfy $\gamma_\ast(C+\lambda+P;S,Q) \geq \gamma_\ast(C+\lambda;S,Q)$
whenever $P \neq Q$. Then
\[
\max_{Q_0, Q_1, \ldots, Q_r \in S'} \,(\, \min_{j=0,\ldots,r} \; \gamma_\ast(C+R_j;S,Q_j)  \,) ~=~ \min_{\lambda \in \Lambda'} \, ( \, \max_{Q \in S'} \; \gamma_\ast(C+\lambda;S,Q) \, ).
\]
\end{proposition}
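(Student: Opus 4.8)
The plan is to read both sides as quantities on the directed graph already described after Corollary~\ref{C:BDP}: vertices are the divisor classes $C+\lambda$ for $\lambda \in \Lambda'$, and each edge $(C+\lambda, C+\lambda+Q)$ with $Q \in S'$ carries the label $\gamma_\ast(C+\lambda;S,Q)$. The right-hand side is the smallest vertex value $\min_{\lambda}\max_{Q}\gamma_\ast(C+\lambda;S,Q)$, while the left-hand side is the largest bottleneck $\max_{\text{path}}\min_{j}\gamma_\ast(C+R_j;S,Q_j)$ over long enough paths issuing from $C$. I would prove the two inequalities separately; only the second uses the monotonicity hypothesis.

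For $\text{LHS}\geq\text{RHS}$ I would exhibit the \emph{greedy} path: starting from $\lambda=0$, at each vertex $R_j$ choose the edge $Q_j$ realizing $\max_{Q}\gamma_\ast(C+R_j;S,Q)$. Every partial sum $R_j$ lies in $\Lambda'$, so the bottleneck of this path is $\min_{j}\max_{Q}\gamma_\ast(C+R_j;S,Q)\geq \min_{\lambda\in\Lambda'}\max_{Q}\gamma_\ast(C+\lambda;S,Q)=\text{RHS}$. Truncating the greedy path to any length gives an admissible sequence $Q_0,\ldots,Q_r$ with bottleneck at least $\text{RHS}$, so the maximum over long enough sequences is at least $\text{RHS}$. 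This step requires no hypothesis on $\gamma_\ast$.

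For $\text{LHS}\leq\text{RHS}$, regard $\Lambda'$ as the free commutative monoid on the distinct points of $S'$, write $\mu\preceq\nu$ when each generator has multiplicity in $\mu$ at most that in $\nu$, and fix $\lambda^\ast\in\Lambda'$ realizing $\text{RHS}=\max_{Q}\gamma_\ast(C+\lambda^\ast;S,Q)$ (the minimum is attained, being the infimum of a nonempty set of nonnegative integers). The key is a \emph{crossing} step: for any sufficiently long path $R_0=0,R_1,\ldots$, let $j$ be the first index with $R_j\not\preceq\lambda^\ast$; this index exists because the total degree of $R_j$ is unbounded while $\lambda^\ast$ is fixed. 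Then $R_{j-1}\preceq\lambda^\ast$, and the step $R_{j-1}\to R_j$ is taken in some direction $P:=Q_{j-1}$ that pushes the $P$-multiplicity above that of $\lambda^\ast$; since $R_{j-1}\preceq\lambda^\ast$ this forces the $P$-multiplicity of $R_{j-1}$ to \emph{equal} that of $\lambda^\ast$, with all other multiplicities still $\leq$ those of $\lambda^\ast$. Hence $\lambda^\ast-R_{j-1}$ is a sum of points different from $P$, and iterating the monotonicity hypothesis gives $\gamma_\ast(C+R_{j-1};S,P)\leq\gamma_\ast(C+\lambda^\ast;S,P)\leq\text{RHS}$. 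Thus the path has an edge $\gamma_\ast(C+R_{j-1};S,Q_{j-1})$ of value at most $\text{RHS}$, so its bottleneck is at most $\text{RHS}$; taking the maximum over all long enough paths yields $\text{LHS}\leq\text{RHS}$, and combining the two inequalities gives equality.

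The main obstacle is the crossing step, i.e.\ verifying that the first move making the path exceed $\lambda^\ast$ occurs precisely at matching multiplicity in its own direction $P$ while every remaining coordinate is still dominated by $\lambda^\ast$. Two points deserve care. First, one must justify treating $\Lambda'$ as the free monoid on the (distinct prime) points of $S'$, so that multiplicities, and hence the meaning of ``exceeding in coordinate $P$,'' are well defined even when some combinations happen to be linearly equivalent; the functions $\gamma_\ast$ only depend on the classes, which is harmless. Second, ``long enough'' must be quantified: since the points of $S'$ are rational (degree one), a path of length exceeding $\deg\lambda^\ast$ necessarily reaches the index $j$ above, so the required cheap edge is present. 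Once the crossing step is established the use of monotonicity is immediate, and the two inequalities close the proof.
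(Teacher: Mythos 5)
Your proof is correct and takes essentially the same route as the paper: the easy inequality $\mathrm{LHS}\geq\mathrm{RHS}$ via the greedy path, and the reverse inequality via exactly the paper's crossing argument, namely that the first step of a long path that leaves the box $\{R \preceq \lambda^\ast\}$ must do so in some direction $P$ at matching $P$-multiplicity, so the monotonicity hypothesis (applied along the remaining points, all different from $P$) caps that edge by $\gamma_\ast(C+\lambda^\ast;S,P)$. The only cosmetic difference is that the paper runs this as a proof by contradiction with an arbitrary $\lambda$ violating the inequality, which sidesteps your appeal to attainment of the minimum (your ``nonnegative integers'' justification is not literally given by the hypotheses, since lower bounds could a priori be negative, but this is trivially repaired).
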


\begin{proof}
The two sides of the equality represent lower bounds for $\gamma(C;S,\emptyset)$ obtained with Proposition \ref{P:path} and
Corollary \ref{C:BDP}, respectively. Denote the left side by $\gamma_{seq}$ and the right sight by $\gamma_{sgp}$. 
Clearly, $\gamma_{seq} \geq \gamma_{sgp}$ and it suffices to show that $\gamma_{sgp} \geq \gamma_{seq}.$
Assume that there exists $\lambda \in \Lambda'$ with
$\max_{Q \in S'} \gamma_\ast(C+\lambda;S,Q) < \gamma_{seq}.$ Using $\gamma(C;S,Q) \leq \gamma(C+P;S,Q)$ for $P \neq Q$,
we see that $\gamma(C+\lambda';S,Q) < \gamma_{seq}$ for all $\lambda_Q \leq \lambda' \leq \lambda,$ where $\lambda_Q$ 
is the $Q-$component of $\lambda$. Every long enough path $R_0, R_1, R_2, \ldots$ contains some $R \leq \lambda$ with $R_Q = \lambda_Q$ for some $Q$.
But then $\lambda_Q \leq R \leq \lambda$ and $\gamma(C+R;S,Q) < \gamma_{seq}$, a contradiction.
\end{proof}

In Proposition \ref{P:path}, it is not clear 
how to choose an optimal sequence $Q_0,Q_1, \ldots, Q_r.$ It follows from Proposition \ref{P:agree} that, 
once it has been decided to choose the $Q_i$ from a finite set $S'$, 
the choice of an optimal seqeunce can be made in a straightforward way, namely by following a greedy procedure as follows: 
For a sequence starting with $Q_0, Q_1, \ldots, Q_{i-1}$, choose $Q_i \in S'$ such that the edge label $\gamma_\ast(C+R_i;S,Q_i)$
is maximal among $\gamma_\ast(C+R_i;S,Q),$ for $Q \in S'.$ 

\begin{corollary} \label{C:greedy} 
The lower bound in Proposition \ref{P:path} is optimal for a choice of $Q_j$, $j=0,1,\ldots,r$, such that 
$\gamma_\ast(C+R_j;S,Q_j) = \max_{Q \in S'} \gamma_\ast(C+R_j;S,Q).$ 
\end{corollary}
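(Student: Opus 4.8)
The plan is to show that the greedy choice of $Q_j$ described in Corollary~\ref{C:greedy} is exactly the choice that realizes the equality in Proposition~\ref{P:agree}. Recall that Proposition~\ref{P:agree} asserts
\[
\gamma_{seq} ~=~ \max_{Q_0,\ldots,Q_r \in S'} \min_{j} \gamma_\ast(C+R_j;S,Q_j) ~=~ \min_{\lambda \in \Lambda'} \max_{Q \in S'} \gamma_\ast(C+\lambda;S,Q) ~=~ \gamma_{sgp},
\]
under the monotonicity hypothesis $\gamma_\ast(C+\lambda+P;S,Q) \geq \gamma_\ast(C+\lambda;S,Q)$ for $P \neq Q$. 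So the content of the corollary is that the greedy path attains the outer maximum on the left-hand side, and hence equals the common value $\gamma_{sgp}$.

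First I would set up the greedy sequence: starting from $R_0 = 0$, at each step choose $Q_j \in S'$ maximizing the edge label $\gamma_\ast(C+R_j;S,Q_j)$ over $Q \in S'$, and set $R_{j+1} = R_j + Q_j$. Let $m = \min_{j} \gamma_\ast(C+R_j;S,Q_j)$ be the value this greedy path produces in Proposition~\ref{P:path}. By definition $m \leq \gamma_{seq}$, since $\gamma_{seq}$ is the maximum over all paths. The heart of the argument is the reverse inequality $m \geq \gamma_{sgp}$, for then $m \geq \gamma_{sgp} = \gamma_{seq} \geq m$ forces $m = \gamma_{seq}$, which is the claimed optimality.

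To prove $m \geq \gamma_{sgp}$, I would examine the step $j^\ast$ where the minimum defining $m$ is attained, so that $m = \gamma_\ast(C+R_{j^\ast};S,Q_{j^\ast}) = \max_{Q \in S'} \gamma_\ast(C+R_{j^\ast};S,Q)$ by the greedy choice. Taking $\lambda = R_{j^\ast} \in \Lambda'$ in the right-hand side of Proposition~\ref{P:agree} gives $\gamma_{sgp} = \min_{\lambda} \max_{Q} \gamma_\ast(C+\lambda;S,Q) \leq \max_{Q \in S'} \gamma_\ast(C+R_{j^\ast};S,Q) = m$. This is the key step, and it is essentially immediate once one observes that the greedy bottleneck vertex $R_{j^\ast}$ is itself an eligible $\lambda$ in the semigroup-form minimum. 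The monotonicity hypothesis is what guarantees the two formulations agree in Proposition~\ref{P:agree} and so is used implicitly; the greedy argument itself needs only that $R_{j^\ast}$ witnesses a value of the outer minimum no larger than $m$.

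The main obstacle I anticipate is a bookkeeping subtlety rather than a conceptual one: one must ensure the greedy path is taken ``long enough'' in the sense of Proposition~\ref{P:path} (so that the residual term $\Gamma(C+R_r+Q_r;S,\emptyset)$ can be discarded), and one must confirm that the bottleneck index $j^\ast$ indeed produces a divisor $R_{j^\ast}$ lying in the index set $\Lambda'$ over which $\gamma_{sgp}$ is minimized. Both hold by construction since each $R_j$ is a nonnegative combination of points of $S'$, hence lies in $\Lambda'$. I would therefore present the proof as a short two-step comparison, citing Proposition~\ref{P:agree} for the equality $\gamma_{seq} = \gamma_{sgp}$ and then identifying the greedy bottleneck with an admissible $\lambda$ to close the loop.
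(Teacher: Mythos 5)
Your proposal is correct and follows essentially the same route as the paper: the paper's proof is precisely the sandwich $\gamma_{seq} \geq \gamma_{seq,greedy} \geq \gamma_{sgp}$ combined with the equality $\gamma_{seq} = \gamma_{sgp}$ from Proposition~\ref{P:agree}. The only difference is that you explicitly justify the inequality $\gamma_{seq,greedy} \geq \gamma_{sgp}$ (via the bottleneck vertex $R_{j^\ast} \in \Lambda'$ being an admissible $\lambda$), a step the paper asserts without comment.
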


\begin{proof}
The choice gives a lower bound $\gamma_{seq,greedy}$ satisfying $\gamma_{seq} \geq \gamma_{seq,greedy} \geq \gamma_{sgp}$.
In Proposition \ref{P:agree} it was shown that $\gamma_{seq} = \gamma_{sgp}$ and therefore also $\gamma_{seq} = \gamma_{seq,greedy}.$
\end{proof}


\section{Computing the lower bounds} \label{S:comp}

We present computational short-cuts that make it feasible to establish the various bounds in the paper for 
large numbers of codes from a given curve whose geometry is well understood. 
For two-point codes from Hermitian curves, Suzuki curves and Giulietti-Korchmaros curves, numerical results are 
available in interactive form at \cite{Kir09}. The comparison Table \ref{Table:comparison} gives a summary of the
results for two-point codes on the Suzuki curves over $\ff_8$ and $\ff_{32}.$  
The Suzuki curve over $\ff_8$ has genus $g = 14$. For a given degree there are $m = 13$ two-point codes. 
For a designed distance in the range $0, 1, \ldots, 2g-1 = 27$ there are $2g \cdot m = 364$ two-point codes. 
For the Suzuki curve over $\ff_{32}$ the numbers are $g=124$ and
$m=41$ for a total of $2g \cdot m = 10168$ two-point codes.  

\begin{table}[t]
\begin{center}
\begin{tabular}{cc}
Suzuki over $\ff_8$ & Suzuki over $\ff_{32}$\\
\begin{tabular}{lrrrrr}
\toprule
& & $d_{LM}$ & $d_{ABZ}$ &$d_{B}$ & $d_{DK}$ \\
\midrule
$d_{GOP}$ & & 228 & 228 & 228 & 228\\
$d_{LM}$ & & 0 & 29 & 102 & 108 \\
$d_{ABZ}$ & & 0 & 0 & 94 & 98 \\
$d_{B}$ & & 1 & 3 & 0 & 15 \\
\midrule
$d_{GOP}$ & & 4 & 5 & 6 & 6\\
$d_{LM}$ & & 0 & 1 & 4 & 4 \\
$d_{ABZ}$ & & 0 & 0 & 4 & 4 \\
$d_{B}$ & & 1 & 1 & 0 & 1 \\
\bottomrule
\end{tabular} &
\begin{tabular}{lrrrrr}
\toprule
& & $d_{LM}$ & $d_{ABZ}$ &$d_{B}$ & $d_{DK}$ \\
\midrule
$d_{GOP}$ & & 6352 & 6352 & 6352 & 6352\\
$d_{LM}$ & & 0 & 2852 & 4729 & 4757 \\
$d_{ABZ}$ & & 0 & 0 & 4683 & 4711 \\
$d_{B}$ & & 1 & 1 & 0 & 1565 \\
\midrule
$d_{GOP}$ & & 8 & 21 & 33 & 33\\
$d_{LM}$ & & 0 & 15 & 28 & 28 \\
$d_{ABZ}$ & & 0 & 0 & 24 & 24 \\
$d_{B}$ & & 1 & 1 & 0 & 6 \\
\bottomrule
\end{tabular}\\
\end{tabular}
\end{center}
\caption{Comparison of bounds for $364$ Suzuki codes over $\ff_8$ ($g=14$) and for $10168$ Suzuki codes over $\ff_{32}$ ($g=124$). Number of improvements of one bound over another (top), and the maximum improvement (bottom).}
\label{Table:comparison}
\end{table}


\subsection{Floor bounds}

If a floor bound is to be used for a code with designed minimum support $C$ a choice of auxiliary divisors is needed, such as the divisors $A$ and $B$ 
in the $ABZ$ bounds. In the generic case it is not clear how to choose divisors that produce the best bound. A natural approach is to choose $C$ with
support in a small set of points and to choose $A$ and $B$ among all divisors with support in those points. Important special cases are one-point codes
with $A, B$ and $C$ supported in a point $P$, and two-point codes with $A, B$ and $C$ supported in points $P$ and $Q$. In general let $C$ belong to a 
family of divisors ${\cal C}$ and $A$ to a family of divisors ${\cal A}$. For the efficient optimization we use that ${\cal A}$ has a natural partial
ordering such that $A' \leq A$ if $A-A'$ is effective. 
For each of the bounds $d_{ABZ}$, $d_{GST}$, and $d_{LM}$, we first build a table with the dimension $l(A)$ of the Riemann-Roch space $L(A)$,
for all $A \in {\cal A}$. When ${\cal A}$ consists of divisors supported in a point $P$ or in points $\{P,Q\}$ this essentially asks for the Weierstrass nongaps, 
either for one-point divisors or more generally for two-point divisors. For Hermitian and Suzuki curves, two-point nongaps are known in closed form 
\cite{Mat01}, \cite{BeeTut06}, \cite{DuuPar08}. Parsing though all two-point divisors 
in increasing degree order we update $l(A)$ knowing $l(A-P)$ and whether there is a $P$-gap at $A$. For the bounds $d_{GST}$ and $d_{LM}$ we also store
the floor $\lfloor A \rfloor$ for each $A \in {\cal A}$. For a given divisor $C$, the bounds can then be computed as follows. \\

The bound $d_{ABZ}$ (Theorem \ref{T:ABZ}):~~
For given $C$, compute $f(A) = l(A)-l(A-C)$ for all $A \in {\cal A}$ in increasing order. For each $A$ keep track of 
the quantity $F(A)=\max_{A' \leq A} f(A')$ and update $d_{ABZ}$ with the greater of $d_{ABZ}$ and $\deg C + F(A) - f(A)$. \\

The bound $d_{GST}$ (Corollary \ref{C:gst}, Theorem \ref{T:GST1p}):~  
For given $C$, compute $f(A) = l(A)-l(A-C)$ for all $A \in {\cal A}$ in increasing order. For each $A$ 
update $d_{GST}$ with the greater of $d_{GST}$ and $\deg C + f(\lfloor A \rfloor) - f(A)$. \\


The bound $d_{LM}$ (Corollary \ref{C:LM}):~
For given $C$, compute $f(A) = l(A)-l(A-C)$ for all $A \in {\cal A}$ in increasing order. For each $A$ 
and for all $\lfloor A \rfloor \leq A' \leq A$ such that $f(A')-f(A) = \deg A - \deg A'$ update $d_{LM}$ with 
the greater of $d_{LM}$ and $\deg C + f(A') - f(A)$. \\

Pairs $\lfloor A \rfloor \leq A' \leq A$ such that $f(A')-f(A) = \deg A - \deg A'$ satisfy $L(A) = L(A')$ and $L(K+C-A)=L(K+C-A')$. When $A, A'$ are chosen
from a two-point family ${\cal A} = \{mP+nQ\}$ the search over such pairs can be optimized as follows. As part of the precompution we build a type of one dimensional ceiling divisor, that is a function $cl(A)$ returning the maximum $a$ for which $l(A)=l(A+aP)$. For each non-negative $b$ with $l(K+C-A+bQ)=l(K+C-A)$ we read off a cooresponding $a=cl(K+C-A+bQ)$ and then update $d_{LM}$ with the greater of $d_{LM}$ and $\min\{a,fl_P\}+\min\{b,fl_Q\}$ where $fl_P=(A-\lfloor A \rfloor)_P$ and $fl_Q=(A-\lfloor A \rfloor)_Q$.

\subsection{Order bounds}

Order bounds for estimating the minimum distance of a given code have two steps. For a code with designed minimum support $C$ and divisor $D$ disjoint from $S$,
the minimum distance is at least $\gamma(C;S,\emptyset)$. First the main theorem (Theorem \ref{T:MTS}) is used to obtain lower bounds for $\gamma(C+\lambda;S,S')$, 
for effective divisors $\lambda$ with support in $S'$. Then Theorem \ref{T:S'} combines the lower bounds into a lower bound for $\gamma(C;S,\emptyset).$ 
By the nature of the order bound, the estimates in the first step can be used to obtain lower bounds for subcodes of the given code. When computing order bounds
we therefore fix a partially ordered family ${\cal C}$ of divisors $C$ and simultaneously estimate the distance for all divisors $C \in {\cal C}$. 
In practice we have used families of two-point divisors of absolute degree $|\deg C| \leq 2g-1$.  \\


Order bound $d_{DK}$ (Theorem \ref{T:DK}):~ For each $C \in {\cal C}$, in decreasing order, compute $\gamma_{DK}(C;S,S')$, and let
$d_{DK}(C)$ be the smaller of $\min_{Q \in S'} d_{DK}(C+Q)$ and $\gamma_{DK}(C;S,S')$. \\

Order bounds $d_{DP}, d_B$ (Corollary \ref{C:BDP}, Proposition \ref{P:path}):~ For each $C \in {\cal C}$, in decreasing order, 
compute $\gamma_{\ast}(C;S,Q)$, for $Q \in S'$, and let $d_{\ast}(C) = \max_{Q \in S'} \{ \min ( d_{\ast}(C+Q), \gamma_{\ast}(C;S,Q) ) \}$. \\ 

To estimate $\gamma(C;S,S')$ (or $\gamma(C;S,Q)$) for a fixed $C$ using Theorem \ref{T:MTS}, 
we need to choose a sequence of divisors $A_i$. It is not clear in general how to choose 
a sequence that produces the best bound. We choose the sequence $A_i$ inside a given family ${\cal A}$ and represent the divisors in ${\cal A}$ as a directed 
grid graph where the divisors $A_i$ are the vertices and edges $(A_{i-1}, A_i)$ correspond to pairs $A_i = A_{i-1}+P_i$, with $P_i$ a rational point.
On such a graph we label the edges with $0$ or $1$ according to whether the estimate in Theorem \ref{T:MTS} increases when we follow the particular edge. 
Using a graph path maximizing algorithm we can find the best bound for $\gamma(C;S,S')$ as a path with the most ones in one run through the graph. When the
family ${\cal A}$ is the family of all two-point divisors $\{ mP+nQ \}$, the graph is a rectangular grid. In that case, the bound $d_{DK}$ optimizes over all
paths in the grid. The bound $d_{DP}$ optimizes over all paths but only considers labels in one direction (say the $P$ direction), ignoring the possible gains 
along edges in the other direction (the $Q$ direction). Finally the bound $d_B$ selects an optimal straight path in the grid. \\

To keep track of the estimates in the order bound we use a directed grid graph with vertices $C \in {\cal C}$, as in Sections \ref{S:order} and \ref{S:orderpath}. 
For each vertex $C \in {\cal C}$ we consider the graph with vertices $A \in {\cal A}$ and edges labeled with $0$ or $1$ as described above. A path maximizing 
algorithm for the graph on ${\cal A}$ yields either $\gamma(C;S,S')$ (for order bounds in semigroup form) or $\gamma(C;S,Q)$ (for order bounds in sequence form). 
For order bounds in semigroup form, we label the vertex $C \in {\cal C}$ with $\gamma(C;S,S')$ and compute $\gamma(C;S,\emptyset)$ as the minimum of all labels $\gamma(C';S,S')$ for $C' \geq C$ (Theorem \ref{T:DK}). For order bounds in sequence form, we label
the edge $(C,C+Q)$ with $\gamma(C;S,Q)$. If we interpret the label as the flow capacity along the edge then $\gamma(C;S,\emptyset)$ is the maximum flow capacity 
of any long enough path $(C,C+Q_0,C+Q_0+Q_1,\ldots)$ in the graph (Proposition \ref{P:path}). For the order bound in 
sequence form we may label the vertices $C \in {\cal C}$ with the maximum of the labels on the outgoing edges and then apply vertex minimization. By Proposition
\ref{P:agree} this results in the same bound. Also, the labeling of the edges in the graph is such that a path of maximum flow can be found efficiently
in a greedy way: At every vertex $C$ continue the path along an edge $(C,C+Q)$ of maximum flow capacity. By Corollary \ref{C:greedy} this results again
in the same bound.  



\subsection{Examples}

Table \ref{T:Suz8} gives a selection of two-point codes and their bounds for the Suzuki curve over $\ff_8$. Codes are included to illustrate differences
between bounds and to compare with known results. To select optimal codes we recommend using the tables \cite{Kir09}. The top part of the table lists all codes
with $d_{GST} > d_{LM}$ and extends Table 1 in \cite{GST09} (the entries with footnote $1$). The middle part lists the remaining codes with $d_{GST2} > d_{LM}$
and extends Table 2 in \cite{GST09} (the entries with footnote $2$). The bound $\tilde d$ refers to examples in \cite[Table 3]{GST09}. 
Columns $A$ and $B$ list divisors that optimize $d_{ABZ+}$. A footnote $+$ indicates that
the choice is optimal for $d_{ABZ+}$ but not for $d_{ABZ}$. A footnote $f$ indicates that the choice is optimal for $d_{ABZ}$ after $A$ and $B$ are replaced with
their floors $\lfloor A \rfloor$ and $\lfloor B \rfloor$, respectively. All other choices simultanously optimize $d_{ABZ}$ and $d_{ABZ+}$.


\begin{table}[th]
\begin{center}
\begin{tabular}{l@{\hspace{8mm}}r@{\hspace{3mm}}r@{\hspace{8mm}}r@{\hspace{3mm}}r@{\hspace{8mm}}r@{\hspace{3mm}}r@{\hspace{8mm}}r@{\hspace{3mm}}r@{\hspace{8mm}}r@{\hspace{3mm}}r@{\hspace{8mm}}r@{\hspace{3mm}}rrr}
\toprule
& & &\multicolumn{2}{l}{$d_{GOP}$}   &\multicolumn{2}{l}{$d_{GST}$}   &\multicolumn{2}{l}{$d_{GST2}$}   &\multicolumn{2}{l}{$d_{B}$} &\multicolumn{2}{l}{$\tilde d$} \\
\multicolumn{1}{l}{$G$} &A &B & &\multicolumn{2}{l}{$d_{LM}$}   &\multicolumn{2}{l}{$d_{ABZ}$}  &\multicolumn{2}{l}{$d_{ABZ+}$} 
&\multicolumn{2}{l}{$d_{ABZ'}$}   &\multicolumn{2}{l}{$d_{DK}$}  \\
\midrule
$(22,4)^{1,f}$ &14P &8P &0 &3 &4 &4 &3 &4 &5 &5 &- &5 \\
$(21,5)^{1,+}$ &13P &8P &0 &3 &4 &4 &3 &5 &5 &5 &- &5 \\
$(20,6)^f$   &14P &6P   &0 &4 &5 &5 &4 &5 &6 &6 &- &6 \\
(20,7)   &14P &6P       &1 &4 &5 &5 &4 &5 &6 &6 &- &6 \\
$(23,4)^f$ &15P &8P     &1 &4 &5 &5 &4 &5 &6 &6 &- &6 \\
(21,6)   &13P &8P       &1 &4 &5 &5 &4 &5 &6 &6 &- &7 \\
(22,6) &14P &8P         &2 &5 &6 &6 &5 &6 &7 &7 &- &7 \\
$(24,4)^{1,2}$ &16P &8P &2 &4 &5 &5 &5 &6 &6 &6 &- &6 \\
$(24,5)^2$ &16P &8P     &3 &5 &6 &6 &6 &7 &7 &7 &- &7 \\
$(24,6)^{1,2}$ &16P &8P &4 &6 &7 &7 &7 &7 &7 &7 &- &7 \\
 (26,4) &16P &10P       &4 &6 &7 &7 &6 &7 &8 &8 &- &8 \\
\midrule
$(24,3)^2$ &14P &10P    &1 &3 &3 &3 &4 &5 &6 &6 &- &6 \\
(27,0) &13P &13P        &1 &2 &2 &2 &3 &4 &6 &6 &- &6 \\
$(30,1)^2$ &13P &13P    &5 &7 &7 &8 &8 &8 &8 &8 &8 &8 \\
$(32,1)^2$ &13P &13P    &7 &9 &9 &10 &10 &10 &10 &10 &10 &10 \\
(40,0) &26P &13P        &14 &15 &15 &15 &16 &16 &16 &16 &- &16 \\
\midrule
$(24,2)^{+}$ &16P &8P &0 &3 &3 &3 &3 &3 &4 &4 &4 &4 \\
$(25,1)^{+}$ &13P &12P &0 &2 &2 &2 &2 &3 &6 &6 &- &6 \\
(21,7) &13P &8P &2 &5 &5 &5 &5 &5 &6 &6 &- &7 \\
(21,8) &13P &8P &3 &5 &5 &5 &5 &5 &6 &6 &- &7 \\
(27,1) &13P &13P &2 &4 &4 &4 &4 &6 &7 &8 &6 &8 \\
(28,1) &13P &13P &3 &6 &6 &6 &6 &6 &7 &8 &8 &8 \\
(29,1) &13P &13P &4 &6 &6 &6 &6 &8 &8 &8 &8 &8 \\
(28,2) &13P &13P &4 &8 &8 &8 &8 &8 &7 &8 &8 &8 \\
(30,2) &13P &13P &6 &9 &9 &10 &9 &10 &9 &10 &10 &10 \\
(30,3) &13P &13P &7 &9 &9 &10 &9 &10 &10 &10 &- &10 \\
$(31,1)^{+}$ &21P &10P &6 &8 &8 &8 &8 &8 &9 &9 &9 &9 \\
$(33,1)^{+}$ &23P &10P &8 &10 &10 &10 &10 &10 &11 &11 &11 &11 \\
$(33,3)^{+}$ &23P &10P &10 &12 &12 &12 &12 &12 &12 &12 &- &13 \\
$(34,3)$ &24P &10P &11 &12 &12 &12 &12 &12 &12 &12 &- &13 \\
\bottomrule
\end{tabular}
\end{center}
\caption{Selected two-point codes on the Suzuki curve over $\ff_8$}
\label{T:Suz8}
\end{table}




\bibliographystyle{plain}

\def\lfhook#1{\setbox0=\hbox{#1}{\ooalign{\hidewidth
  \lower1.5ex\hbox{'}\hidewidth\crcr\unhbox0}}}





\end{document}